	\newcounter{remark}
	\newenvironment{remark}[1][]{\refstepcounter{remark} \ifthenelse{\equal{#1}{}}{\noindent\textbf{Remark~\theremark. }}{\noindent \textbf{Remark~\theremark~(#1).}}}{\medskip}
    \newcommand{\Tau}{\mathrm{T}}
	\theoremstyle{definition}
	\newtheorem{definition}{Definition}
	\theoremstyle{plain}
	\newtheorem{theorem}{Theorem}
	\newtheorem{lemma}{Lemma}
	\newtheorem{proposition}{Proposition}
	\newtheorem{corollary}{Corollary}
	\title{Characterizations of Sequential Valuation Rules}
	\author{Chris Dong}
	\affiliation{
	  \institution{Technical University of Munich}
	  \city{Munich}
	  \country{Germany}}
	\email{chris.dong@tum.de}
	\author{Patrick Lederer}
	\affiliation{
	  \institution{Technical University of Munich}
	  \city{Munich}
	  \country{Germany}}
	\email{ledererp@in.tum.de}
\begin{abstract}
		Approval-based committee (ABC) voting rules elect a fixed size subset of the candidates, a so-called committee, based on the voters' approval ballots over the candidates. While these rules have recently attracted significant attention, axiomatic characterizations are largely missing so far. We address this problem by characterizing ABC voting rules within the broad and intuitive class of sequential valuation rules. These rules compute the winning committees by sequentially adding candidates that increase the score of the chosen committee the most. In more detail, we first characterize almost the full class of sequential valuation rules based on mild standard conditions and a new axiom called consistent committee monotonicity. This axiom postulates that the winning committees of size $k$ can be derived from those of size $k-1$ by only adding candidates and that these new candidates are chosen consistently. By requiring additional conditions, we derive from this result also a characterization of the prominent class of sequential Thiele rules. Finally, we refine our results to characterize three well-known ABC voting rules, namely sequential approval voting, sequential proportional approval voting, and sequential Chamberlin-Courant approval voting.
	\end{abstract}
	\newcommand{\BibTeX}{\rm B\kern-.05em{\sc i\kern-.025em b}\kern-.08em\TeX}
\begin{document}

	\pagestyle{fancy}
	\fancyhead{}

	\maketitle
	
	\section{Introduction}
 
	Whether it is choosing dishes for a shared lunch, shortlisting candidates for interviews, or electing a parliament of a country---all these problems require us to elect a fixed size subset of the available candidates based on the voters' preferences. This problem, commonly studied under the term \emph{approval-based committee (ABC) voting}, has recently attracted significant attention within the field of social choice theory because of its versatile applications \cite{EFSS17a,FSST17a,LaSk22b}. In more detail, the study objects for this problem are ABC voting rules which choose a subset of the candidates of predefined size, a so-called committee, based on the voters' approval ballots, i.e., each voter reports the set of candidates she finds acceptable. 
	
	Due to the large amount of work on ABC voting, there is a wide variety of ABC voting rules, e.g., Thiele methods, sequential Thiele methods, Phragmen's rules, the method of equal shares, and many more (we refer to \cite{LaSk22b} for an overview of these rules). For deciding which rule to use in a given situation, social choice theorists commonly reason about their properties: if a voting rule satisfies desirable properties, it seems to be a good choice for the election at hand. However, such reasoning does not rule out the existence of an even more attractive voting rule satisfying  the required properties. For narrowing down the choice to a single ABC voting rule, a characterization of this rule is required, i.e., one needs to show that the rule is the unique method that satisfies a set of properties. Unfortunately, such characterizations are largely missing in the literature on ABC voting rules and it is therefore an important open problem to derive such results (see, e.g., \cite[Q1]{LaSk22b}).
	
	The goal of this paper thus is to provide such characterizations for ABC voting rules within the new but broad and intuitive class of sequential valuation rules. For computing the winning committees, these rules rely on a valuation function which assigns a score to each pair of ballot and committee. A simple example of such a function is $v(A_i, W)=|A_i\cap W|$, where $A_i$ is an arbitrary ballot and $W$ is a committee. Based on a valuation function, a sequential valuation rule proceeds in rounds and, in each round, it extends the previously chosen committees with the candidates that increase the total score by the most. Clearly, the prominent class of sequential Thiele rules, which only rely on the size of the intersection of the given ballot and committee to compute the score, forms a subset of the class of sequential valuation rules. However, our class is much more general as it contains, for instance, step-dependent sequential scoring rules, whose valuation functions depend on the sizes of the ballot, the committee, and the intersection of these two.
		
	\paragraph{Our Contribution.} As our main contribution, we characterize the class of sequential valuation rules that satisfy mild standard conditions based on a new axiom called consistent committee monotonicity. This property combines the well-known notions of committee monotonicity \citep[e.g.,][]{BaCo08a,KiMa12a,EFSS17a} and consistency \citep[e.g.,][]{Youn75a,Fish78d,LaSk21a}. Roughly, committee monotonicity requires that the winning committees of size $k$ can be derived from those of size $k-1$ by simply adding candidates. On the other hand, the idea of consistency is that whenever two disjoint electorates separately elect the same candidates, these candidates should be the winners when we consider both electorates simultaneously. Consistent committee monotonicity combines these two axioms by requiring that the candidates that extend the committees of size $k$ are chosen consistently: if some common candidates extend a committee $W$ in two disjoint elections, these candidates should also extend $W$ in the combined election. 
	Or, to put it simpler, consistent committee monotonicity restricts committee monotonicity by requiring that the newly added candidates are chosen in a reasonable way. 
	
	Based on this axiom, we characterize the class of sequential valuation rules that satisfy anonymity, neutrality, non-imposition, and continuity (\Cref{thm:characterization}). These four conditions are mild standard axioms that are satisfied by almost
	all ABC rules considered in the literature and we henceforth summarize them by the term \emph{proper}. In more detail, we first show that every proper sequential valuation rule is a step-dependent sequential scoring rule, i.e., its valuation function only depends on the sizes of the ballot, the committee, and the intersection of these two. As second step, we then characterize step-dependent sequential scoring rules as the only proper and consistently committee monotone ABC voting rules. Or, put differently, when the winning committees should be computed sequentially and the newly added candidates are chosen in a consistent way, we naturally arrive at the class of step-dependent sequential scoring rules, thus giving a strong argument for using these rules.
	
	Based on our characterization of step-dependent sequential valuation rules, we also infer characterizations of more restricted classes of voting rules by requiring additional axioms. In particular, we present such results for step-dependent sequential Thiele rules (whose valuation functions only depend on the size of the committee and the size of the intersection of the ballot and the committee) and sequential Thiele rules (whose valuation functions only depend on the size of the intersection of the ballot and the committee). Hence, we derive a hierarchy of characterizations based on our first theorem and, in particular, provide a full characterization of the prominent class of sequential Thiele rules. Finally, we leverage these results to characterize three commonly studied ABC voting rules, namely sequential approval voting, sequential proportional approval voting, and sequential Chamberlin-Courant approval voting, by investigating how they treat clones. An overview of our results can also be found in \Cref{fig:overview}.
	
	\begin{figure}[t]
		\centering
	      \begin{tikzpicture}
		  \tikzstyle{arrow}=[->,>=angle 60, shorten >=1pt,draw]
		  \tikzstyle{mynode}=[align=center, anchor=north]

		  		\node[mynode] (SVR) at (0,0)   {Sequential valuation rules};
		  		\node[mynode] (SSSR) at (0, -1.5){Step-dependent sequential scoring rules\\$=$\emph{Proper and consistently committee monotone ABC voting rules}};
				\node[mynode] (SSTR) at (0, -3.5) {Step-dependent sequential Thiele rules};
				\node[mynode] (STR)  at (0, -5){Sequential Thiele rules};
				
				\node[mynode] (SAV)  at (-3.7, -6.5){\texttt{seqAV}};
				\node[mynode] (SPAV)  at (3.7, -6.5){\texttt{seqPAV}};
				\node[mynode] (SCCAV)  at (0, -6.5){\texttt{seqCCAV}};
		
		  		\draw[-latex] (SVR) to node[midway, right]{\emph{Properness}} (SSSR);
		  		\draw[-latex] (SSSR) to node[midway, right]{\emph{Independence of Losers}} (SSTR);
				\draw[-latex] (SSTR) to node[midway, right]{\emph{Committee Separability}} (STR);
				\draw[-latex] (STR) to node[midway, left,xshift=0.4cm,yshift=0.3cm,align=center]{\emph{Clone-acceptance}\\+\emph{Distrust}} (SAV);
				\draw[-latex] (STR) to node[midway, right, yshift=0.3cm,xshift=-0.4cm,align=center]{\emph{Clone-propor-}\\\emph{tionality}} (SPAV);
				\draw[-latex] (STR) to node[midway, left,align=center]{\emph{Clone-}\\\emph{rejection}} (SCCAV);
	      \end{tikzpicture}
		  \caption{Overview of our results. An arrow from class $X$ to class $Y$ with label $Z$ means an ABC voting rule in the class $X$ is in the class $Y$ if and only if it satisfies property $Z$.}
		  \label{fig:overview}
	\end{figure}
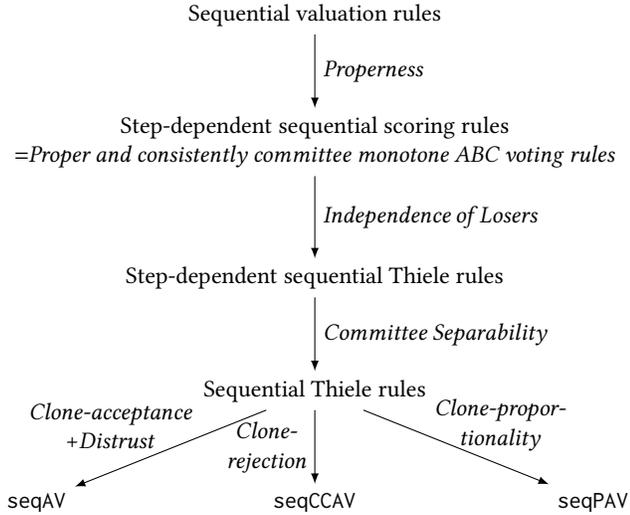

	\paragraph{Related Work.} The study of committee monotone ABC voting rules has a long tradition as already Thiele \cite{Thie95a} suggested the class of functions nowadays known as sequential Thiele rules. In particular, for a number of applications such as choosing finalists for a competition or shortlisting candidates for an interview, it is frequently reasoned that committee monotonicity is a desirable property \cite{BaCo08a,KiMa12a,EFSS17a}. More generally, \citeauthor{FSST17a} \cite{FSST17a} view committee monotonicity as the fundamental property when choosing candidates only based on their quality because in such settings, there is no reason why a candidate that is elected for a committee of size $k$ should not be elected for a committee of size $k+1$. 
 
    Another important advantage of such sequential ABC rules is that they are easy to compute, whereas rules that directly optimize the score (e.g., Thiele rules) are usually NP-hard to compute \cite{SFL16a}. Indeed, sequential ABC voting rules have even been considered as approximation algorithms for these optimizing rules \cite{LuBo11d,SFL16a}. On the other hand, committee monotonicity conflicts with other desirable properties. For instance, \citeauthor{BaCo08a} \cite{BaCo08a} show that this axiom is incompatible with a variant of Condorcet-consistency when voters report strict rankings over the candidates, and it has been repeatedly observed that committee monotone ABC voting rules are less proportional than other rules \cite{EFSS17a, LaSk22b, SLB+17a}.
	
	Even more work has focused on specific committee monotone ABC voting rules \citep[e.g.,][]{ABC+16a,BLS18a,LaSk20a,DDE+22a}. For instance, \citeauthor{DDE+22a} \cite{DDE+22a} show that all sequential Thiele rules but sequential approval voting fail strategyproofness, and \citeauthor{BLS18a} \cite{BLS18a} investigate these rules with respect to proportionality axioms. An interesting observation in this context is that Phragmen's sequential rule is committee monotone and satisfies strong proportionality conditions \cite{BFJL16a,PeSk20a}; unfortunately, this rule fails our consistency criterion. 
 
    From a conceptual standpoint our results are also related to theorems for different settings as consistency led to a number of important characterizations. In particular, based on this axiom, \citet{Youn75a} characterizes scoring rules for single winner elections, \citet{Fish78d} characterizes approval voting for single winner elections with dichotomous preferences, \citet{YoLe78a} characterize a method called Kemeny's rule in a model where the outcome is set of rankings over the candidates, and \citet{Bran13a} characterize a voting rule called maximal lotteries in a randomized setting. More recently, \citet{LaSk21a} characterized ABC scoring rules based on a consistency condition for committees instead of single candidates in a model where the output is a ranking over committees. To the best of our knowledge, this result is the only complete characterization in the realm of ABC voting.
		
	\section{The Model}
	Let $\mathbb{N}=\{1,2,3,\dots\}$ denote an infinite set of voters and let $\mathcal{C}=\{a_1, \dots, a_m\}$ denote a fixed set of $m$ candidates. We define $\mathcal{F}(\mathbb{N})$ as the set of finite and non-empty subsets of $\mathbb{N}$. Intuitively, an element $N\in\mathcal{F}(\mathbb{N})$ represents a concrete electorate, whereas $\mathbb{N}$ is the set of all possible voters. Given an electorate $N\in\mathcal{F}(\mathbb{N})$, we assume that every voter $i\in N$ has dichotomous preferences over the candidates, i.e., she partitions the candidates into approved and disapproved ones. Thus, voters report \emph{approval ballots $A_i$} which are non-empty subsets of $\mathcal{C}$. Let $\mathcal{A}$ denote the set of all possible approval ballots. 
	An \emph{approval profile $A$} for an electorate $N$ is an element of $\mathcal A^N$, i.e., a function that maps every voter $i\in N$ to her approval ballot $A_i$. We define $\mathcal{A}^*=\bigcup_{N\in\mathcal{F}(\mathbb{N})} \mathcal A^N$ as the set of all possible approval profiles. Given a profile $A\in\mathcal{A}^*$, we let $N_A$ indicate the set of voters who report a ballot in the profile $A$ and we say that two profiles $A, A'$ are disjoint if $N_A\cap N_{A'}=\emptyset$. Moreover, for two disjoint profiles $A$ and $A'$, we define $A+A'$ as the profile with $N_{A+A'}=N_A\cup N_{A'}$, $(A+A')_i=A_i$ for all $i\in N_A$, $(A+A')_i=A_i'$ for all $i\in N_{A'}$.
	
	Given an approval profile, the goal is to choose a committee. Formally, a \emph{committee} is a subset of the candidates with a specific size. We denote by $\mathcal{W}_k$ the set of all committees of size $k$ and by $\mathcal{W}=\bigcup_{k=0}^m \mathcal{W}_k$ the set of all committees. For selecting the winning committees for an approval profile $A$, we use \emph{approval-based committee (ABC) voting rules}. 
	These rules are functions which take an arbitrary approval profile $A\in \mathcal{A}^*$ and target committee size $k\in \{0,\dots, m\}$ as input and return a non-empty subset of $\mathcal{W}_k$. 
	Intuitively, the chosen set contains the winning committees and we allow for sets of committees as output to indicate that multiple committees are tied for the win. Furthermore, note that ABC voting rules are also defined for committees of size $0$: $f(A,0)=\{\emptyset\}$ for all profiles $A$ since the empty set is the only committee of size $0$. This definition is only used for notational convenience.
	
	In this paper, we will restrict our attention to \emph{proper} ABC voting rules which satisfy the following four conditions. Note that almost all commonly studied ABC voting rules are proper voting rules as the subsequent axioms are extremely mild.\footnote{Indeed, we are only aware of a single studied voting rule that fails to be proper: the minimax rule \citep{BKS07a}, which chooses the committees that minimize the maximal Hamming-distance to a ballot. This rule fails continuity as it completely ignores how many voters report a specific ballot. We view this rule as unreasonable in light of our axioms.}
	\begin{itemize}
		\item Anonymity: An ABC voting rule $f$ is \emph{anonymous} if $f(A,k)=f(\pi(A), k)$ for all $A\in\mathcal{A}^*$, $k\in \{0,\dots, m\}$, and bijections $\pi:\mathbb{N}\rightarrow\mathbb{N}$. Here, $A'=\pi(A)$ denotes the profile such that $N_{A'}=\pi(N_A)$ and $A'_{\pi(i)}=A_{i}$ for all $i\in N_{A}$.
		\item Neutrality: An ABC voting rule $f$ is \emph{neutral} if $f(\tau(A),k)=\{\tau(W)\colon W\in f(A,k)\}$ for all $A\in\mathcal{A}^*$, $k\in \{0,\dots, m\}$, and  bijections $\tau:\mathcal{C}\rightarrow\mathcal{C}$. $A'=\tau(A)$ denotes here the profile such that $N_{A'}=N_A$ and $A'_i=\tau(A_i)$ for all $i\in N_A$.
		\item Continuity: An ABC voting rule $f$ is \emph{continuous} if for all disjoint profiles $A,A'\in\mathcal{A}^*$ and committee sizes $k\in \{0,\dots, m\}$ such that $|f(A,k)|=1$, there is an integer $j\in \mathbb N$ such that $f(jA+A',k)=f(A,k)$. Here, $jA$ denotes a profile consisting of $j$ disjoint copies of $A$; the identities of the voters are irrelevant for proper rules due to anonymity. 
		\item Non-imposition: An ABC voting rule $f$ is \emph{non-imposing} if for every committee $W\in\mathcal{W}$, there is a profile $A\in\mathcal{A}^*$ such that $f(A,|W|)=\{W\}$.
	\end{itemize}
	
	Anonymity and neutrality are common fairness conditions which require that voters and candidates, respectively, are treated equally. Continuity, also known as overwhelming majority axiom \citep{Myer95b}, requires that a sufficiently large group can force the voting rule to choose their desired committee. Finally, non-imposition states that each committee has a chance to be uniquely chosen. 

	Aside of these standard conditions, we will use two new axioms in our analysis: independence of losers and committee separability. The idea of independence of losers is that a chosen committee $W\in f(A,k)$ should still be chosen if some voters change their preferences by disapproving candidates $c\not\in W$ because, intuitively, this does not affect the quality of $W$. Formally, we say an ABC voting rule $f$ is \emph{independent of losers} if $W\in f(A,|W|)$ implies that $W\in f(A',|W|)$ for all profiles $A, A'\in\mathcal{A}^*$ and committees $W\in\mathcal{W}_k$ with $N_A=N_{A'}$, $W\cap A_i=W\cap A_i'$, and $A_i'\subseteq A_i$ for all $i\in N_A$. Note that this axiom is well-known in single winner voting and choice theory \citep[e.g.,][]{BrHa11a,BrPe19a}. While this axiom has not been considered for ABC elections before, we find it intuitive and it is satisfied by all commonly considered ABC voting rules which do not depend on the ballot size (e.g., Thiele rules, sequential Thiele rules, Phragmen's rule). On the other hand, satisfaction approval voting fails independence of losers as it depends on the sizes of the voters' approval ballots (see \cite{LaSk22b} for definitions of these rules).
	
	Our second non-standard axiom is committee separability. The rough intuition of this axiom is that if there are two disjoint profiles $A$ and $B$ such that no voters $i\in N_A$, $j\in N_B$ approve a common candidate, we can decompose every chosen committee $W$ into two subcommittees which are chosen for $A$ and $B$ separately. For formally defining this axiom, let $C_A=\bigcup_{i\in N_A} A_i$ denote the set of candidates that are approved by the voters in a profile $A$. Then, an ABC voting rule $f$ is \emph{committee separable} if $W\in f(A+B,|W|)$ implies that $W\cap C_A\in f(A,|W\cap C_A|)$ and $W\cap C_B\in f(B, |W\cap C_B|)$ for all disjoint profiles $A$, $B$ with $C_B=\mathcal{C}\setminus C_A$ and committees $W\in \mathcal{W}$. Indeed, since $C_A\cap C_B=\emptyset$, it seems reasonable that the choice of candidates from $C_A$ (resp. $C_B$) only depends on $A$ (resp. $B$). All proper rules named in this paper satisfy committee separability.
	
	\subsection{Consistent Committee Monotonicity}\label{subsec:comMon}
	
	The key axiom for our results is consistent committee monotonicity, which is a strengthening of the well-known axiom of committee monotonicity. The idea of the latter property is that the winning committees of size $k$ are derived by adding candidates to those of size $k-1$. While this is straightforward to define for ABC voting rules that always choose a single winning committee, it becomes less clear how to formalize committee monotonicity when allowing for multiple tied winning committees. We use the definition of \citet{EFSS17a} in this paper which requires that every winning committee of size $k$ is derived from a winning committee of size $k-1$ and every winning committee of size $k-1$ is extended to a winning committee of size $k$.
	
	\begin{definition}
		An ABC voting rule $f$ is \emph{committee monotone} if for every profile $A\in\mathcal{A}^*$ and $k\in \{1,\dots, m\}$, it holds that:
		\begin{enumerate}[label=(\arabic*), leftmargin=*,topsep=4pt]
			\item $W\!\in f(A,k)$ implies that there is $W'\!\in f(A,k-1)$ with $W'\subseteq W$. 
			\item $W\!\in f(A,k-1)$ implies that there is $W'\!\in f(A,k)$ with $W\subseteq W'$. 
		\end{enumerate}
	\end{definition}
		
	Committee monotone ABC voting rules are closely connected to \emph{generator functions} $g$, which take a profile $A$ and a committee $W\neq \mathcal{C}$ as input and output a possibly empty subset $g(A,W)$ of $\mathcal{C}\setminus W$. In particular, generator functions induce committee monotone ABC voting rules in a natural way: a generator function $g$ \emph{generates} an ABC voting rule $f$ if $W\in f(A, k-1)$ implies $g(A, W)\neq\emptyset$ and $f(A,k)=\{W\cup \{x\}\colon W\in f(A,k-1), x\in g(A,W)\}$ for all $k\in \{1,\dots,m\}$ and $A\in \mathcal{A}^*$. Since $f(A,0)=\{\emptyset\}$, this recursion is well-defined. 
 As we show next, committee monotonicity is equivalent to the existence of a generator function. 
	
	\begin{proposition}
		An ABC voting rule $f$ is committee monotone if and only if it is generated by a generator function $g$. 
	\end{proposition}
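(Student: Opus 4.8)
The plan is to establish the two directions separately. The forward implication---that any rule generated by a generator function $g$ is committee monotone---follows directly from unwinding the definitions, whereas the reverse implication requires constructing an explicit generator function from a committee monotone rule and checking that it reproduces the rule.

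For the forward direction, I assume $f$ is generated by $g$ and verify the two clauses of committee monotonicity. For clause (1), any $W \in f(A,k)$ with $k \geq 1$ can be written as $W = W' \cup \{x\}$ with $W' \in f(A,k-1)$ and $x \in g(A,W') \subseteq \mathcal{C} \setminus W'$ by the generation equation; since $x \notin W'$, we have $W' \subseteq W$, as required. For clause (2), given $W \in f(A,k-1)$, the definition of generation guarantees $g(A,W) \neq \emptyset$, so choosing any $x \in g(A,W)$ yields $W \cup \{x\} \in f(A,k)$ with $W \subseteq W \cup \{x\}$.

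For the reverse direction, I assume $f$ is committee monotone and define the candidate generator function by
$$g(A,W) = \{x \in \mathcal{C}\setminus W : W \cup \{x\} \in f(A,|W|+1)\}$$
when $W \in f(A,|W|)$, and $g(A,W) = \emptyset$ otherwise. I then check the two requirements of generation for every $k \in \{1,\dots,m\}$, using that $f(A,0) = \{\emptyset\}$ anchors the recursion. Non-emptiness follows from clause (2): if $W \in f(A,k-1)$, there is $W' \in f(A,k)$ with $W \subseteq W'$, and writing $W' = W \cup \{x\}$ shows $x \in g(A,W)$. For the generation equation I argue both inclusions: the inclusion ``$\subseteq$'' uses clause (1) to decompose any $W' \in f(A,k)$ as $W \cup \{x\}$ with $W \in f(A,k-1)$ and hence $x \in g(A,W)$, while ``$\supseteq$'' is immediate, since $W \in f(A,k-1)$ together with $x \in g(A,W)$ gives $W \cup \{x\} \in f(A,k)$ by the definition of $g$.

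The only step needing genuine care is the construction, because a generator function must be defined on every committee $W \neq \mathcal{C}$, including non-winning ones, whereas committee monotonicity constrains $f$ only on winning committees. The resolution is that along the recursion $g$ is only ever evaluated at winning committees $W \in f(A,k-1)$ (which also guarantees $|W| \leq m-1$, so $W \neq \mathcal{C}$ and $g$ is well-defined), so the arbitrary choice $g(A,W) = \emptyset$ on the remaining committees affects neither the non-emptiness condition nor the generation equation. Beyond this, the argument is just careful bookkeeping of cardinalities to confirm that adjoining a single new candidate moves between consecutive committee sizes.
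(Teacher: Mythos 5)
Your proof is correct and takes essentially the same approach as the paper: the forward direction unwinds the generation equation exactly as the paper does, and the reverse direction constructs the identical generator function, namely $g(A,W)=\{x\in\mathcal{C}\setminus W\colon W\cup\{x\}\in f(A,|W|+1)\}$ on winning committees and $g(A,W)=\emptyset$ otherwise. The only cosmetic difference is that you verify the two defining conditions of generation directly for each $k$, whereas the paper introduces the auxiliary rule $f_g$ and proves $f_g=f$ by induction; the underlying content is the same.
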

	\begin{proof}
		Consider an arbitrary ABC voting rule $f$ and first assume that $f$ is generated by a generator function $g$, i.e., $f(A,k)=\{W\cup \{x\}\colon W\in f(A,k-1), x\in g(A,W)\}$ for all profiles $A$ and committee sizes $k$. Now, fix a profile $A\in \mathcal{A}^*$ and a committee size $k\in \{1,\dots, m\}$. If $W\in f(A,k)$, then there is $W'\in f(A,k-1)$ and $x\in g(A,W')$ such that $W=W'\cup\{x\}$ because $g$ generates $f$. Conversely, if $W'\in f(A,k-1)$, then $g(A,W')$ cannot be empty and there is a candidate $x\in \mathcal{C}\setminus W'$ such that $W\cup \{x\}\in f(A,k)$. This shows that $f$ is committee monotone. 
		
		Next, suppose that $f$ is committee monotone. We define the generator function of $g$ as follows: if $W\not\in f(A,|W|)$, then $g(A,W)=\emptyset$. On the other hand, if $W\in f(A,|W|)$ and $W\neq \mathcal{C}$, there is a committee $W'\in f(A, |W|+1)$ with $W\subseteq W'$ due to the committee monotonicity of $f$. 
		We thus define $g(A,W)=\{x\in\mathcal{C}\setminus W\colon W\cup \{x\}\in f(A, |W|+1)\}$ if $W\in f(A,|W|)$ and let $f_g$ denote the ABC voting rule defined by $f_g(A,0)= \{\emptyset\}$ and $f_g(A,k)=\{W\cup \{x\}\colon W\in f_g(A,k-1), x\in g(A,W)\}$ for all $k>0$. We prove inductively that $f_g(A,k)=f(A,k)$ for all profiles $A$ and $k\in \{0,\dots, m\}$, which implies that $f_g$ is well-defined and that $g$ generates $f$. The induction basis $k=0$ is true since $f_g(A,0)=\{\emptyset\}=f(A,0)$ for all profiles $A$. Hence, consider a fixed $k\in \{0,\dots, m-1\}$ and $A\in\mathcal{A}^*$ and suppose that $f_g(A,k)=f(A,k)$. 
		First, let $W\in f(A,k+1)$. Due to committee monotonicity, there is $W'\in \mathcal{W}_k$ and $x\in W\setminus W'$ such that $W'\in f(A,k)=f_g(A,k)$ and $W'\cup\{x\}=W$. This implies that $x\in g(A,W')$ and hence $W\in f_g(A,k+1)$. For the other direction, let $W\in f_g(A,k+1)$, which means that there are $W'\in f_g(A,k)=f(A,k)$ and $x\in g(A,W')$ such that $W= W'\cup\{x\}$. Hence, $f(A,k+1)=f_g(A,k+1)$ and we infer inductively that $g$ generates $f$.
	\end{proof}
	
	Since a generator function completely describes its generated ABC voting rule, we can expect that a well-behaved generator function yields an attractive committee monotone ABC voting rule. Consequently, we now introduce axioms for generator functions. Our main condition on these functions is \emph{consistency}, which is concerned with the behavior of the generator function when combining two disjoint profiles. In more detail, suppose that the choice of the generator $g$ intersects for two disjoint profiles $A$ and $A'$ and a committee $W$. Intuitively, the best candidates in the combined profile $A+A'$ should be exactly those in the intersection as they are winning for the individual electorates. Hence, consistency requires for such situations that, if $g(A+A',W)\neq\emptyset$, it contains precisely the elements in the intersection of $g(A,W)$ and $g(A',W)$. 
	Note that such consistency axioms have already led to several prominent results \citep[e.g.,][]{Youn75a,Fish78d,Bran13a,LaSk21a}. Subsequently, we formally define consistency and introduce the notion of consistent committee monotonicity. The latter axiom strengthens committee monotonicity by requiring that the voting rule is generated by a consistent generator function. 
	
	\begin{definition}
		A generator function $g$ is \emph{consistent} if $g(A,W)\cap g(A',W)\neq \emptyset$ and $g(A+A', W)\neq \emptyset$ imply that $g(A+A',W)=g(A,W)\cap g(A',W)$ for all disjoint profiles $A, A'\in\mathcal{A}^*$ and committees $W\in\mathcal{W}\setminus \{\mathcal{C}\}$. An ABC voting rule $f$ is \emph{consistently committee monotone} if it is generated by a consistent generator function. 
	\end{definition}
	
	Furthermore, analogous to ABC voting rules, we call a generator function $g$ \emph{proper} if it satisfies the following conditions: 
	\begin{itemize}
		\item \emph{anonymous}:  $g(A,W)=g(\pi(A),W)$ for all $A\in\mathcal{A}^*$, $W\in\mathcal{W}\setminus \{\mathcal{C}\}$, and permutations $\pi:\mathbb{N}\rightarrow\mathbb{N}$, 
		\item \emph{neutral}: $g(\tau(A), \tau(W))=\tau(g(A,W))$ for all $A\in\mathcal{A}^*$, $W\in\mathcal{W}\setminus \{\mathcal{C}\}$, and permutations $\tau:\mathcal{C}\rightarrow\mathcal{C}$,
		\item \emph{continuous}: for all $A,A'\in\mathcal{A}^*$ and $W\in \mathcal{W}\setminus \{\mathcal{C}\}$ with $|g(A,W)|=1$ and $g(A',W)\neq\emptyset$, there is $j\in\mathbb{N}$ such that $g(jA+A', W)=g(A,W)$, and
		\item \emph{non-imposing}: for every $W\in\mathcal{W}\setminus \{\mathcal{C}\}$ and $x\in\mathcal{C}\setminus W$, there is $A\in\mathcal{A}^*$ such that $g(A,W)=\{x\}$.
	\end{itemize}
	
	Just as for ABC voting rules, all these axioms are very mild. Finally, we say that a generator function $g$ is \emph{complete} if $g(A,W)\neq \emptyset$ for all profiles $A\in\mathcal{A}^*$ and committees $W\in\mathcal{W}$. 	
		
	\subsection{Sequential Valuation Rules}\label{subsec:rules}
	
		The main goal of this paper is to characterize the class of sequential valuation rules. These rules rely on \emph{valuation functions $v$}, which are mappings of the type $v:\mathcal{A}\times \mathcal{W}\rightarrow\mathbb{R}$, to compute the outcome. Less formally, a valuation function specifies for every ballot $A_i$ and committee $W$ the number of points that a voter with ballot $A_i$ assigns to the committee $W$. The score of a committee $W$ in a profile $A$ is defined as $s_v(A,W)=\sum_{i\in N_A} v(A_i,W)$. Now, a \emph{sequential valuation function} $f$ works as follows: $f(A,0)=\{\emptyset\}$ and for $k\geq 1$, $f(A,k)=\{W\cup \{x\}\colon W\in f(A,k-1) \land \forall y\in \mathcal{C}\setminus W\colon s_v(A,W\cup\{x\})\geq s_v(A, W\cup \{y\})\}$, i.e., $f$ extends in each step the currently chosen committees with the candidates that increase the score by the most.\footnote{It is also possible to choose the committees that maximize the score for a given valuation function. These rules are proper and satisfy a consistency property for chosen committees (see \cite{LaSk21a}). However, they fail consistent committee monotonicity and it is not clear why they should be more desirable than their sequential variants.}

		Note that our definition of sequential valuation functions is so general that it includes even non-proper ABC voting rules. For instance, if $v$ is constant, the corresponding sequential valuation rule always chooses all committees of the given size and thus fails non-imposition. Nevertheless, we will focus only on proper sequential valuation rules and, in particular, on the following three subclasses.
	
	\begin{itemize}
		\item \emph{Sequential Thiele rules} rely on a Thiele counting function to compute the outcome. A Thiele counting function is a mapping $h(x):\{0,\dots,m\}\rightarrow \mathbb{R}$ which is non-negative, non-decreasing, and satisfies $h(1)>h(0)$. Then, the valuation function of a sequential Thiele rule is $v(A_i, W)=h(|A_i\cap W_i|)$. In other words, every voter values a committee only based on how many of its members she approves.\footnote{There are multiple different definitions of Thiele counting functions in the literature (e.g., \cite{LaSk22b,DDE+22a}). Our definition agrees with the one of \citeauthor{ABC+16a} \cite{ABC+16a}.}
		\item \emph{Step-dependent sequential Thiele rules} use a step-dependent Thiele counting function as valuation function. A step-dependent Thiele counting function is a mapping $h(x,y):\{0,\dots, m\}\times \{1,\dots, m\}\rightarrow \mathbb{R}$ which is non-negative, non-decreasing in $x$, and satisfies for each $y\in \{1,\dots,m-1\}$ that there is $x\in \{1,\dots, y\}$ with $h(x,y)>h(x-1,y)$. The valuation function of a step-dependent sequential Thiele rule is then $v(A_i, W)=h(|A_i\cap W|, |W|)$. Intuitively, these rules can use in every step a different Thiele counting function. 
		\item \emph{Step-dependent sequential scoring rules} compute the winner based on a step-dependent counting function. A step-dependent counting function is a mapping $h(x,y,z):\{0,\dots, m\}\times \{1,\dots, m\}\times \{1,\dots, m\}\rightarrow \mathbb{R}$ such that for every $y\in \{1,\dots, m-1\}$, there is $x\in \{1,\dots, y\}$ and $z\in \{x,\dots, m-1-(y-x)\}$ with $h(x,y,z)\neq h(x-1,y,z)$. Then, the valuation function of a step-dependent sequential scoring rule is $v(A_i, W)=h(|A_i\cap W|, |W|, |A_i|)$.
	\end{itemize}	
	The class of sequential Thiele rules contains many prominent ABC voting rules, such as \emph{sequential approval voting}\footnote{Sequential approval voting is often called approval voting since the sequential and the optimizing variant coincide. For consistency in our names, we prefer to call this rule sequential approval voting.} (\texttt{seqAV}) defined by $h(x)=x$, \emph{sequential proportional approval voting} (\texttt{seqPAV}) defined by $h(0)=0$ and $h(x)=\sum_{i=1}^x \frac{1}{i}$ for $x>0$, and \emph{sequential Chamberlin-Courant approval voting} (\texttt{seqCCAV}) defined by $h(0)=0$ and $h(x)=1$ for $x>0$. 
    An example of a step-dependent sequential Thiele rule can be constructed by switching between \texttt{seqAV} and \texttt{seqCCAV} in the different steps. Finally, sequential satisfaction approval voting (\texttt{seqSAV}), defined by $h(x,y,z)=\frac{x}{z}$, is an example of a step-dependent sequential scoring rule. 
 
	It is easy to see that every sequential valuation function $f$ is consistently committee monotone as it can be verified that its generator function $g(A, W)=\{x\in\mathcal{C}\setminus W\colon \forall y\in\mathcal{C}\setminus W\colon s_v(A,W\cup \{x\})\geq s_v(A,W\cup\{y\}\}$ is consistent (here, $v$ denotes the valuation function of $f$). Furthermore, all step-dependent sequential scoring rules are proper ABC voting rules. In particular, the technical condition on $h$ is necessary to ensure that step-dependent sequential scoring rules are non-imposing. Finally, note that every sequential Thiele rule is a step-dependent sequential Thiele rule, which are in turn step-dependent sequential scoring rules. Consequently, all three classes of sequential valuation rules only contain proper ABC voting rules. We can even make the relation between these different types of rules precise as shown in the next proposition.
	
	\begin{proposition}\label{prop:relation}
		The following equivalences hold: 
		\begin{enumerate}[label=(\arabic*), leftmargin=*,topsep=4pt]
			\item A sequential valuation rule is a step-dependent sequential scoring rule if and only if it is proper.
			\item A step-dependent sequential scoring rule is a step-dependent sequential Thiele rule if and only if it is independent of losers. 
			\item A step-dependent sequential Thiele rule is a sequential Thiele rule if and only if it is committee separable. 
		\end{enumerate}
	\end{proposition}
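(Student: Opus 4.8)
The plan is to treat each of the three equivalences as a pair consisting of an easy inclusion direction (the more restrictive class satisfies the stated axiom) and a hard direction (the axiom forces the valuation function to shed its dependence on one argument). The common tool would be the observation that a sequential valuation rule depends on its valuation function $v$ only through the comparisons it induces at each step: when a committee $V$ of size $y-1$ is extended, two candidates $x, y^*\notin V$ are ranked by the sign of $\sum_{i\in N_A}[v(A_i,V\cup\{x\})-v(A_i,V\cup\{y^*\})]$, and for a counting function this collapses to a weighted sum of marginals $\Delta h(a,\dots)=h(a+1,\dots)-h(a,\dots)$ taken over the voters who approve exactly one of $x,y^*$. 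Hence two counting functions induce the same rule precisely when their marginals agree in sign on all such sums, and each hard direction reduces to showing that the relevant axiom removes one argument from these marginals.

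For the easy directions I would verify the axioms directly. The ``only if'' of (1), properness of step-dependent sequential scoring rules, is routine: anonymity and neutrality are immediate from the additive, symmetric form $v(A_i,W)=h(|A_i\cap W|,|W|,|A_i|)$; continuity holds because scores are additive, so for large $j$ the set $\argmax_{x} s_v(jA+A',V\cup\{x\})$ is contained in $\argmax_{x} s_v(A,V\cup\{x\})$ at every committee, whence every chain in $jA+A'$ is a chain in $A$ and $f(jA+A',k)\subseteq f(A,k)=\{W^*\}$; and non-imposition is exactly what the technical non-triviality condition on $h$ was designed to deliver, through a stepwise gadget that forces the desired candidate at each size. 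For the ``only if'' of (2), a step-dependent sequential Thiele rule is independent of losers because deleting candidates $c\notin W$ from a ballot leaves $|A_i\cap U|$ unchanged for every $U\subseteq W$ and can only decrease it for competitors $U=V\cup\{y^*\}$ with $y^*\notin W$; monotonicity of $h$ in its first argument then keeps a maximizing chain for $W$ maximizing. For the ``only if'' of (3), a sequential Thiele rule is committee separable since, when $C_A$ and $C_B$ partition $\mathcal{C}$, the score splits as $s_v(A+B,U)=s_v(A,U\cap C_A)+s_v(B,U\cap C_B)$, so the restriction to $C_A$ of any maximizing chain is a maximizing chain for $A$; here the absence of committee-size dependence is what makes the restricted comparisons coincide.

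For the hard directions of (2) and (3) I would argue by contraposition with an explicit construction. If a step-dependent sequential scoring rule is not a step-dependent sequential Thiele rule, its marginals $\Delta h(a,y,z)$ genuinely depend on $z$ at some comparison-relevant triple; I would then build profiles $A,A'$ with $N_A=N_{A'}$, $A_i'\subseteq A_i$, and $A_i'\cap W=A_i\cap W$, using ballots of two sizes to realize the two marginal values and scaling copies (exploiting continuity) so that $W$ is the unique winner in $A$ but loses in $A'$, contradicting independence of losers. The case (3) is analogous: a genuine dependence of the marginals on the size argument $y$ is exposed by combining a profile $A$ over $C_A$ with a profile $B$ over $C_B$ engineered to force several $C_B$-candidates to be interleaved early, thereby shifting the size argument seen by the $C_A$-candidates and making their order under $A+B$ differ from their order under $A$ alone, which breaks committee separability. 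The main technical obstacle in both cases is the bookkeeping: one must design the ballots so that the desired intersection sizes with \emph{all} competitor committees are realized while only the targeted marginal is perturbed, and choose integer multiplicities making the decisive comparison strict.

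Finally, the ``if'' direction of (1) is the conceptual crux and the deepest part. Since every sequential valuation rule is consistently committee monotone, with consistent generator $g(A,W)=\argmax_{x} s_v(A,W\cup\{x\})$, a proper sequential valuation rule is a proper and consistently committee monotone ABC voting rule, so I would invoke the main characterization (\Cref{thm:characterization}) to conclude that it is a step-dependent sequential scoring rule. This step resists a short direct argument because neutrality and continuity are a much weaker handle than independence of losers or committee separability: showing that the rule can be expressed through the three cardinalities alone needs the full Young-style consistency machinery underlying that theorem. I therefore expect this forward direction of (1) to be the principal obstacle, with the constructions for (2) and (3) as the secondary ones.
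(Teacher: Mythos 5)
Your plan for part (1)'s ``if'' direction has a genuine circularity problem: you propose to deduce it from \Cref{thm:characterization}, but in the paper \Cref{thm:characterization} is proved \emph{using} \Cref{prop:relation}. Its forward direction cites \Cref{prop:relation} for properness of step-dependent sequential scoring rules, and its backward direction ends by constructing a valuation function $w(A_i,W)=h(|A_i\cap W|,|W|,|A_i|)$ inducing $f$ and then invoking exactly \Cref{prop:relation}(1) to conclude that the resulting proper sequential valuation rule is a step-dependent sequential scoring rule. So \Cref{prop:relation} must be established independently of the main theorem, and your argument as stated does not do that. Your accompanying claim that this step ``resists a short direct argument'' is also mistaken: the paper's proof (\Cref{lem:SeqValRules}) is a direct symmetrization that avoids the hyperplane machinery entirely. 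Given a proper sequential valuation rule $f$ with valuation $v$, its canonical generator $g(A,W)=\{x\in\mathcal{C}\setminus W\colon \forall y\colon s_v(A,W\cup\{x\})\ge s_v(A,W\cup\{y\})\}$ is complete and consistent, hence proper and in particular neutral by \Cref{lem:A+N}; neutrality of $g$ forces every permuted valuation $v_\tau(A_i,W)=v(\tau(A_i),\tau(W))$ to induce the same generator, so the averaged valuation $v^*=\sum_\tau v_\tau$ still induces $f$; since $v^*$ is permutation-invariant it factors through $(|A_i\cap W|,|W|,|A_i|)$, and the non-degeneracy condition on the resulting counting function follows from non-imposition. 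If you insist on routing through \Cref{thm:characterization}, you would first have to re-prove its final step without \Cref{prop:relation} (e.g., verify the non-degeneracy condition on the constructed $h$ directly from non-imposition), which amounts to supplying the same missing argument anyway.

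The remainder of your proposal matches the paper's route. The easy directions are handled the same way (your continuity argument via additive scores and argmax containment is the paper's Claim~4 of \Cref{lem:basicAxioms}), and for the hard directions of (2) and (3) the paper likewise shows that independence of losers forces the marginals $h(x,y,z)-h(x-1,y,z)$ to be independent of $z$, and committee separability forces independence of $y$, via profile constructions of precisely the kind you describe (disjoint-support profiles shifting the committee-size argument for (3)). Be aware, though, that executing these constructions requires the sequential strengthenings of non-imposition and continuity (\Cref{lem:seqNI,lem:seqCon}, themselves resting on \Cref{lem:merge}), and for (3) an additional normalization step proving $h(1,y)>h(0,y)$ for all $y$ so that the per-step counting functions can be rescaled into a single Thiele counting function; your sketch presupposes this tooling but does not build it.
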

	\begin{proof}[Proof Sketch]
	   The "only if" part of the claims is always easy to prove as it is, e.g., straightforward to see that every step-dependent sequential scoring rule is a proper sequential valuation rule. Hence, we focus on the "if" part. The key insight for (1) is that the valuation function $v$ of a proper sequential valuation rule is neutral, i.e., $v(A_i, W)=v(\tau(A_i), \tau(W))$ for all ballots $A_i$, committees $W$, and permutations $\tau:\mathcal{C}\rightarrow\mathcal{C}$. Since $|A_i|=|\tau(A_i)|$, $|W|=|\tau(W)|$, and $|A_i\cap W|=|\tau(A_i\cap W)|$, for all ballots $A_i$, committees $W$, and permutations $\tau$, the corresponding sequential valuation rule is a step-dependent sequential scoring rule. For (2), the "if" part intuitively holds because independence of losers excludes the possibility that the step-dependent Thiele counting function $h$ depends on the size of the ballot. By formalizing this insight, we can construct a step-dependent Thiele counting function that induces $f$, which proves (2). Finally, the "if" part of (3) follows since committee separability relates the different steps of the rule. In more detail, we can construct two disjoint profiles $A$, $B$ such that $f(A+B, |C_A|)=\{C_A\}$ and then, committee separability shows that all following steps must be equal to the choice for $B$. Formalizing this argument rules out that $h$ depends on $|W|$ and we thus end up with a sequential Thiele rule.
	\end{proof}
	
	\section{Characterizations of Sequential Valuation Rules}\label{sec:results}
	We are now ready to discuss our main result, a characterization of step-dependent sequential scoring rules: an ABC voting rule is a step-dependent sequential scoring rule if and only if it is proper and consistently committee monotone. Combined with \Cref{prop:relation}, we infer as corollary also characterizations of step-dependent sequential Thiele rules and sequential Thiele rules. Moreover, this proposition also emphasizes the generality of our result since characterizing step-dependent sequential scoring rules is equivalent to characterizing all proper sequential valuation rules. Due to space constraints, we defer the proofs of all auxiliary propositions to the appendix and discuss proof sketches instead. 
	
	While it is quite easy to show that every step-dependent sequential scoring rule is proper and consistently committee monotone, the converse claim is much more involved. Our main idea for proving this direction is to investigate the generator function of consistently committee monotone and proper ABC voting rules. Hence, we first verify the conjecture that attractive committee monotone ABC voting rules are generated by well-behaved generator functions.
	
	\begin{restatable}{proposition}{niceGenerator}\label{prop:niceGenerator}
		An ABC voting rule is proper and consistently committee monotone if and only if it is generated by a proper, consistent, and complete generator function.
	\end{restatable}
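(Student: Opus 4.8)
The plan is to prove the two implications separately, with the direction from generator to rule being routine and the reverse being the substantial one. So suppose first that $f$ is generated by a proper, consistent, and complete generator $g$. Consistent committee monotonicity of $f$ is then immediate from the definition, so the work is to show that $f$ is proper. Anonymity and neutrality I would obtain by a direct induction on the committee size $k$: the recursion $f(A,k)=\{W\cup\{x\}\colon W\in f(A,k-1),\,x\in g(A,W)\}$ commutes with relabelings of the voters and of the candidates precisely because $g$ is anonymous and neutral. The one genuinely useful new ingredient is a monotonicity observation combining consistency and completeness: if $g(A,W)=\{x\}$ and $g(jA+A',W)=\{x\}$ for some $j$, then, writing $(j+1)A+A'=A+(jA+A')$ and applying consistency---whose nonemptiness hypotheses are free by completeness---one gets $g((j+1)A+A',W)=\{x\}$ as well. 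Thus, for a profile whose own generator value at $W$ is the target singleton, ``once forced, always forced'' under adding further copies.

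Continuity of $f$ then follows cleanly. If $f(A,k)=\{W\}$ with $W=\{x_1,\dots,x_k\}$ and $W_j=\{x_1,\dots,x_j\}$, then $g(A,W_j)=\{x_{j+1}\}$ for every $j$; continuity of $g$ (again using completeness to discharge its hypothesis on $A'$) yields for each step a threshold $j_\ell$ beyond which that step is forced in $jA+A'$, and the monotonicity observation lets me take the single multiplier $\max_\ell j_\ell$ for all steps simultaneously. Non-imposition of $f$ is the delicate point of this direction: I must force the entire chain $\emptyset\subset W_1\subset\dots\subset W$ with one profile. Non-imposition of $g$ supplies, for each step $j$, a profile forcing $g(\cdot,W_j)=\{x_{j+1}\}$, and the mechanism for combining these without interference is once more consistency---if $g(B,W)=\{x\}$ and $x\in g(C,W)$, then $g(B+C,W)=\{x\}$, since the intersection is $\{x\}$ and the sum is nonempty by completeness. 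The care that is needed lies in arranging, via neutrality and auxiliary symmetric profiles, that every step-forcing profile also approves the intended candidate of every other step at the corresponding committee, so that the singletons survive all along the chain.

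For the reverse direction, suppose $f$ is proper and consistently committee monotone, so that it is generated by some consistent generator $g_0$. Since generation only reads the generator on winning committees (those $W$ with $W\in f(A,|W|)$), I would first fix the values there: the natural choice is the canonical generator $g_1(A,W)=\{x\colon W\cup\{x\}\in f(A,|W|+1)\}$ of the preceding proposition, which is manifestly anonymous and neutral because $f$ is, and whose non-imposition and continuity as a generator follow from the corresponding properties of $f$ together with committee monotonicity. Consistency on winning committees is inherited from $g_0$ (after reconciling $g_0$ with $g_1$ on the set of tied extensions). The two properties not touched by the generation relation are completeness and continuity on arbitrary pairs, and these concern exactly the non-winning committees.

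The heart of the argument, and the step I expect to be the main obstacle, is therefore to extend the generator to the non-winning committees so as to make it complete and continuous while keeping it globally consistent and leaving the winning-committee values untouched (so that it still generates $f$). My approach would be to fix, for each committee $W\neq\mathcal{C}$, a highly symmetric background profile $B$ with $f(B,|W|)=\{W\}$ and all extensions tied, so that $g_1(B,W)=\mathcal{C}\setminus W$---such a $B$ exists by non-imposition and neutrality, since a permutation-invariant nonempty generator value at $W$ must be all of $\mathcal{C}\setminus W$---and then to define $g(A,W)$ as the set of extensions that survive when copies of $A$ are added to $B$ and break the symmetry. The difficulty is that consistency couples profiles across which a committee switches between winning and non-winning: when $W$ is winning for $A+A'$ but for neither $A$ nor $A'$, the already-fixed value $g_1(A+A',W)$ must still equal $g(A,W)\cap g(A',W)$ whenever that intersection is nonempty. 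Verifying that the symmetric-background definition satisfies consistency globally, and that it is independent of the chosen background and of the number of copies added, is the technical crux; once a proper, consistent, and complete generator is in hand, the ``only if'' direction is finished.
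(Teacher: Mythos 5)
Your high-level architecture is the same as the paper's, but your continuity argument in the ``easy'' direction rests on a false claim. From $f(A,k)=\{W\}$ you infer a chain $W_1\subset\dots\subset W_k=W$ with $g(A,W_j)=\{x_{j+1}\}$ for every $j$, and then apply continuity of $g$ step by step. Committee monotonicity gives no such chain: the winning committees below a unique winner may branch and re-merge, in which case the generator takes non-singleton values. Already for \texttt{seqAV}, if $a$ and $b$ are the two tied approval winners, then $f(A,1)=\{\{a\},\{b\}\}$ while $f(A,2)=\{\{a,b\}\}$, so $g(A,\emptyset)=\{a,b\}$. At such committees, continuity of $g$ (whose hypothesis is $|g(A,W)|=1$) and your ``once forced, always forced'' observation (which also concerns only singletons) say nothing about $g(jA+A',W)$; yet you must control the generator at \emph{every} committee in $\bigcup_{\ell<k}f(A,\ell)$, because a single spurious extension at a branch point propagates upward and can destroy $f(jA+A',k)=\{W\}$. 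The paper closes exactly this hole with a strictly stronger auxiliary claim --- for all profiles $A,A'$ and all committees $W$ there is $j$ with $g(jA+A',W)\subseteq g(A,W)$ --- whose proof is a separate, non-trivial argument (extract a candidate $c$ chosen for infinitely many multipliers $j$, build permuted copies via neutrality, and derive a contradiction with continuity). Your proposal contains no substitute for this claim, and without it the continuity of $f$ does not go through.

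In the hard direction your construction --- a symmetric background profile $B$ with $f(B,|W|)=\{W\}$ and all extensions tied, with $g(A,W)$ read off from the combined profile --- is indeed the paper's, but two of your shortcuts fail and the crux is deferred. First, switching to the canonical generator $g_1(A,W)=\{x\colon W\cup\{x\}\in f(A,|W|+1)\}$ and asserting that consistency is ``inherited from $g_0$ after reconciling'' is unjustified: $g_1$ can strictly contain $g_0$ on winning committees (if $\{a\},\{b\}\in f(A,1)$ with $g_0(A,\{a\})=\{c\}$ and $g_0(A,\{b\})=\{a\}$, then $b\in g_1(A,\{a\})\setminus g_0(A,\{a\})$), and no consistency property of $g_1$ follows from that of $g_0$; the paper sidesteps this entirely by extending $g_0$ itself and proving the extension agrees with $g_0$ wherever $W\in f(A,|W|)$. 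Second, the existence of the background $B$ does not follow from non-imposition and neutrality alone: non-imposition controls only the size-$|W|$ outcome, and combining the permuted copies while keeping the whole chain below $W$ uniquely chosen requires consistency --- this is the inductive content of \Cref{lem:merge,lem:seqNI}. Finally, what you explicitly postpone --- well-definedness of the extension in the number of background copies, its global consistency (the paper's case distinction comparing the minimal multipliers for $A$, $A'$, and $A+A'$), and the properness of the resulting generator (\Cref{lem:A+N}) --- is the bulk of the paper's proof, so the proposal stops short of the substance of this direction.
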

	\begin{proof}[Proof Sketch]
	If $f$ is generated by a proper, consistent, and complete generator function, it is fairly straightforward that it is consistently committee monotone and proper. We thus focus on the inverse direction and suppose that $f$ is a proper and consistently committee monotone ABC voting rule. The key insight for this direction is that non-imposition and continuity can be generalized to sequences of committees $W_1,\dots, W_\ell$ with $|W_k|=k$ and $W_{k-1}\subseteq W_k$ for all $k\in \{1,\dots, \ell\}$ (we assume subsequently that $W_0=\emptyset$):
	\begin{enumerate}[label=(\arabic*), leftmargin=*,topsep=4pt]
	    \item If $\ell<m$, there is a profile $A$ such that $f(A,k)=\{W_k\}$ for all $k\in \{1,\dots, \ell\}$ and $f(A,\ell+1)=\{W_{\ell}\cup \{x\}\colon x\in \mathcal{C}\setminus W_\ell\}$.
	    \item For any two profiles $A, A'$ such that $f(A,k)=\{W_k\}$ for all $k\in \{1,\dots, \ell\}$, there is an integer $j$ such that $f(jA+A', k)=\{W_k\}$ for all $k\in \{1,\dots, \ell\}$. 
	\end{enumerate}

    For instance, we prove (1) by an induction on the length of the sequence: by non-imposition, there is a profile $A^1$ for every committee $W_{\ell+1}\in\mathcal{W}_{\ell+1}$ such that $f(A^1,\ell+1)=\{W_{\ell+1}\}$. Committee monotonicity implies then that there is a sequence of committees $W_1,\dots, W_\ell$ such that $W_k\in f(A^1, k)$ and $W_{k+1}\setminus W_k\subseteq g(A, W_k)$ for all $k\in \{1,\dots,\ell\}$, where $g$ is a consistent generator function of $f$. By the induction hypothesis, there is a profile $A^2$ such that $f(A^2, k)=\{W_k\}$ for all $k\in \{1,\dots, \ell\}$ and $f(A^2,\ell+1)=\{W_\ell\cup\{x\}\colon x\in\mathcal{C}\setminus W_\ell\}$. We can now use the consistency of $g$ to infer that $f(A^1+A^2, k)=\{W_k\}$ for all $k\in \{1,\dots,\ell+1\}$. Finally, we can further modify the profile to ensure that $W_{\ell+1}$ is extended by all remaining candidates by using anonymity and neutrality.
    
    Now, we will extend the consistent generator function $g$ of $f$ to make it complete. Consider for this a sequence of committees $W_1,\dots, W_\ell$ with $|W_k|=k$ and $W_{k-1}\subseteq W_k$ for all $k\in \{1,\dots, \ell\}$. Due to (1), there is a profile $A^{W_\ell}$ with $f(A^{W_\ell}, k)=\{W_k\}$ for all $k\in \{1,\dots, \ell\}$ and $f(A^{W_\ell}, \ell+1)=\{W_\ell\cup \{x\}\colon x\in\mathcal{C}\setminus W_\ell\}$. We define the function $\hat g(A,W_\ell)=g(A+jA^{W_\ell}, W_\ell)$, where $j$ is the smallest integer such that $f(A+jA^{W_\ell},k)=\{W_k\}$ for all $k\in \{1,\dots,\ell\}$; such an integer exists because of (2). First, note that $\hat g$ generates $f$ since $\hat g(A,W)=g(A,W)$ for all $A\in\mathcal{A}^*$ and $W\in f(A,|W|)$. This follows from consistent committee monotonicity as $g(jA^W, W)=\mathcal{C}\setminus W$, $g(A,W)\neq\emptyset$, and $g(A+jA^W, W)\neq\emptyset$. Finally, $\hat g$ satisfies anonymity, neutrality, non-imposition, and continuity as it generates $f$ and $f$ would fail these properties otherwise. 
	\end{proof}
	
	As second step, we characterize the class of proper, consistent, and complete generator functions. In particular, we show that for every committee $W\neq \mathcal{C}$, $g(A,W)$ can be described by a weighted variant of single winner approval voting. For making this formal, let $v(x,y):\{0,\dots, m\}\times \{1,\dots, m\}\rightarrow \mathbb{R}$ be a weight function. Then, $v$-weighted approval voting is defined as the generator function $\mathit{AV}_{v}(A,W)=\{c\in\mathcal{C}\setminus W\colon \forall d\in\mathcal{C}\setminus W\colon \sum_{i\in N_A\colon c\in A_i} v(|W\cap A_i|, |A_i|)\geq \sum_{i\in N_A\colon d\in A_i} v(|W\cap A_i|, |A_i|)\}$. 
	
	\begin{restatable}{proposition}{hyperplane}\label{prop:hyperplane}
		Let $g$ denote a proper, consistent, and complete generator function. For every committee $W\neq\mathcal{C}$, there is a weight function $v^W$ such that $g(A,W)=\mathit{AV}_{v^W}(A,W)$ for all profiles $A\in\mathcal{A}^*$.
	\end{restatable}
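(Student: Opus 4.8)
The plan is to fix the committee $W$ and to recognise the restricted map $A\mapsto g(A,W)$ as a single-winner voting correspondence over the eligible candidates $D=\mathcal{C}\setminus W$, to which a Young-style consistency argument applies. By anonymity, a profile $A$ is, as far as $g(\cdot,W)$ is concerned, fully described by the multiplicities of the finitely many ballot types $t$ (the non-empty subsets of $\mathcal{C}$); I write $\gamma(n)$ for the resulting choice from $D$, where $n\in\mathbb{N}^{T}\setminus\{0\}$ records these multiplicities. Completeness makes $\gamma(n)$ a non-empty subset of $D$, consistency of $g$ becomes the reinforcement property $\gamma(n)\cap\gamma(n')\neq\emptyset\Rightarrow\gamma(n+n')=\gamma(n)\cap\gamma(n')$, and continuity of $g$ becomes the overwhelming-majority property. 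Finally, every permutation $\tau$ of $\mathcal{C}$ fixing $W$ setwise acts as an independent relabelling of $W$ and of $D$, and since $\tau(W)=W$, neutrality yields $\gamma(\tau(n))=\tau(\gamma(n))$ for all such $\tau$.

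The technical core is to show that $\gamma$ is a \emph{scoring correspondence}: there exist vectors $u^{t}\in\mathbb{R}^{D}$, one per ballot type, with $\gamma(n)=\argmax_{c\in D}\sum_{t}n_{t}\,u^{t}_{c}$. I would follow Young's technique. Consistency together with $\gamma(n)\cap\gamma(n)=\gamma(n)$ gives homogeneity $\gamma(kn)=\gamma(n)$, which lets me pass from the integer lattice to the rational cone and, via continuity, to the full positive cone in $\mathbb{R}^{T}$. For distinct $c,d\in D$, the set of (normalised) profiles on which both $c$ and $d$ win is closed under addition (consistency) and positive scaling (homogeneity); continuity forces the regions where one of them strictly wins to be open, so this common tie-set is the trace of a hyperplane through the origin and the two strict win-regions are the open half-spaces. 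Reading off the normal vectors of these separating hyperplanes and checking that they fit together across all pairs (a cocycle/transitivity argument) produces the desired score vectors $u^{t}$, determined up to a common additive vector and a positive rescaling.

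It then remains to use neutrality to force $u^{t}$ into approval form. Fixing a ballot type $t$ with $S=t\cap D$, any permutation of $D$ fixing $S$ setwise fixes $t$, so neutrality forces $u^{t}$ to be constant on $S$ and constant on $D\setminus S$; after subtracting a multiple of the all-ones vector $\chi_{D}$ (which leaves the $\argmax$ unchanged), this gives $u^{t}=w_{t}\,\chi_{S}$ for a scalar $w_{t}$. Next, if two ballot types $t,t'$ satisfy $|W\cap t|=|W\cap t'|$ and $|t|=|t'|$ (equivalently $|t\cap D|=|t'\cap D|$), some permutation fixing $W$ setwise maps $t$ to $t'$, so neutrality gives $w_{t}=w_{t'}$. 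Hence $w_{t}$ depends only on $(|W\cap t|,|t|)$, and setting $v^{W}(x,y):=w_{t}$ for any ballot $t$ with $|W\cap t|=x$ and $|t|=y$ (and arbitrarily on unrealised pairs) yields $g(A,W)=\argmax_{c\in D}\sum_{i\colon c\in A_{i}}v^{W}(|W\cap A_{i}|,|A_{i}|)=\mathit{AV}_{v^{W}}(A,W)$. Ballots contained in $W$ contribute $\chi_{\emptyset}=0$ and so are consistent with any value of $v^{W}$.

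I expect the second paragraph to be the main obstacle: establishing the scoring form from consistency, continuity, and completeness alone. The delicate points are showing that the pairwise tie-sets are genuine hyperplanes rather than lower-dimensional sets (where continuity is needed to rule out degeneracies), that boundary ties among several winners are captured correctly, and that the pairwise normal vectors assemble into a single globally consistent family $\{u^{t}\}$. By contrast, the neutrality reduction and the treatment of degenerate ballot types (those inside $W$, or size/intersection pairs that no ballot realises) are comparatively routine.
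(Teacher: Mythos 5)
Your overall architecture---fix $W$, use anonymity to pass to multiplicity vectors over ballot types, run a Young-style separating-hyperplane argument, and only at the end invoke neutrality to collapse the per-ballot score vectors $u^t$ into a weight function $v^W$---is the same family of argument as the paper's, but the order in which you deploy the axioms creates a genuine gap. Your central claim is that the scoring form follows ``from consistency, continuity, and completeness alone,'' with neutrality reserved for the final reduction. That claim is strictly stronger than Young's theorem, and you cannot get it by ``following Young's technique'': Young's proof (likewise Myerson's variant and the paper's own proof) uses neutrality \emph{inside} the separation argument, not after it. Concretely, in the paper neutrality is what delivers (i) that all win regions $\bar R_c$ have the same, full dimension; (ii) that each pairwise intersection $\bar R_c\cap\bar R_d$ has codimension one, so that the separating hyperplane between $\bar R_c$ and $\bar R_d$ is \emph{unique} up to positive scaling; and (iii) via the transposition swapping $c$ and $d$, that this unique normal $u^{cd}$ has the special antisymmetric, two-candidate-supported form that makes all the pairwise normals assemble into a single score family. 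Without neutrality, two regions may a priori meet only in high codimension, the separating hyperplane between them is then far from unique, and your ``cocycle/transitivity argument'' is exactly the unproved step: the win regions form a complete polyhedral fan, and complete fans need not be coherent (need not arise as the argmax regions of a family of linear functionals), so some argument---and in every known proof that argument is neutrality---must rule this out.

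The fix is cheap because you already established the needed equivariance in your first paragraph: $\gamma(\tau(n))=\tau(\gamma(n))$ for every permutation $\tau$ of $\mathcal{C}$ fixing $W$ setwise. Use it \emph{during} the separation argument rather than after. Equivariance gives $\bar R_{\tau(c)}=\tau(\bar R_c)$, hence equal full dimensions and, via the facet argument, uniqueness of each separating hyperplane; applying the transposition $\tau_{cd}$ to the unique normal $u^{cd}$ then forces $u^{cd}_t=0$ for every ballot type $t$ containing both or neither of $c,d$, and $u^{cd}_{\tau_{cd}(t)}=-u^{cd}_t$ otherwise, while permutations fixing $c$ and $d$ make the remaining values depend only on $\bigl(|t\cap W|,|t|\bigr)$---at which point the weight function $v^W$ appears directly and your third paragraph is absorbed. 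This is also where your route genuinely differs from the paper's: the paper applies neutrality, anonymity and consistency \emph{first}, proving an information-basis lemma (only the counts $n(c,A,W,k,\ell)$ matter, which itself requires a trichotomy lemma characterizing $\mathit{AV}$, $\overline{\mathit{AV}}$, and $\mathit{TRIV}$ on restricted domains) and an extension to rational matrices, and only then runs the hyperplane argument in the reduced space, where the extracted normal simply \emph{is} the weight function. You run the hyperplane argument in the full ballot-type space and reduce afterwards. Either order can be made to work, but only if neutrality participates in the separation step; as written, your plan withholds it from precisely the place where it is indispensable.
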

	\begin{proof}[Proof Sketch]
	Let $g$ denote a proper, consistent, and complete generator function and fix a committee $W\neq\mathcal{C}$. We show the proposition by applying a separating hyperplane argument analogous to how \citet{Youn75a} derives his characterization of scoring rules. 
	
	For doing so, we first transform the domain of $g(\cdot,W)$ from preference profiles to a numerical space and we show thus that $g(\cdot,W)$ can be computed only based on the values $n(c,A,W,k,\ell)=|\{i\in N\colon c\in A_i\land |A_i\cap W|=k\land |A_i|=\ell\}|$ for $c\in\mathcal{C}\setminus W$, $k\in \{0,\dots,|W|\}$, and $\ell\in \{k+1, \dots, m-1-|W|+k\}$. For proving this, we first show that if $A_i\cap W=A_j\cap W$ and $|A_i|=|A_j|$ for all $i,j\in N_A$ and all candidates $x\in\mathcal{C}\setminus W$ are approved by the same number of voters, then $g(A,W)=\mathcal{C}\setminus W$. 
	Once this restricted claim is proven, we can use our axioms to generalize it; e.g., consistency, neutrality, and anonymity then entail that, for all $k, \ell$, $g(A^{k,\ell},W)=\mathcal{C}\setminus W$ if $|A_i^{k,\ell}\cap W|=k$ and $|A_i|=\ell$ for all $i\in N_A$ and all candidates $x\in\mathcal{C}\setminus W$ have the same approval score. Finally, this means that if there are constants $c_{k,\ell}$ such that $n(x,A,W,k,\ell)=c_{k,\ell}$ for all candidates $c\in\mathcal{C}\setminus W$ and indices $k$ and $\ell$, then $g(A,W)=\mathcal{C}\setminus W$ as we can decompose $A$ with respect to $k$ and $\ell$ into these profiles $A^{k,\ell}$. Together with consistency, we infer from this observation that $g(\cdot, W)$ can indeed be computed based on on the matrix $N(A,W)$ that contains all the values $n(c,A,W,k,\ell)$. 
	
	As next step, we use standard constructions to extend the domain of $g$ further from integer matrices $N(A,W)$ to rational matrices. To this end, let $Q_2$ be the matrix that corresponds to the profile in which each ballot is reported once and note that $g(Q_2, W)=\mathcal{C}\setminus W$ due to anonymity and neutrality. Based on this profile, we extend $g$ to negative numbers by defining $g(Q_1, W)=g(Q_1+jQ_2,W)$ (where $j\in\mathbb{N}$ is a scalar such that $Q_1+jQ_2$ contains only positive integers) and as second step to $g$ to rational numbers by defining $g(Q_1, W)=g(jQ_1, W)$ (where $j$ is the smallest integer such that $jQ_1$ only contains integers). For both steps, consistency ensures that $g$ remains well-defined. Moreover, the extension of $g(\cdot, W)$ to rational numbers preserves all desirable properties of $g$. 
 
    Finally, we partition the feasible input matrices $Q$ into sets $R_c=\{Q\colon c\in g(Q,W)\}$ for $c\in\mathcal{C}\setminus W$. These sets are convex (with respect to $\mathbb{Q}$) and symmetric since $g$ is consistent, anonymous, and neutral. Moreover, the interior of $R_c$ and $R_d$ is disjoint for $c,d\in\mathcal{C}\setminus W$ with $c\neq d$ and we can thus derive a separating hyperplane between these sets (see \citet{McLe18a}). As last step, we infer from these hyperplanes the weight function $v^W$.
	\end{proof}
		
	Based on \Cref{prop:hyperplane}, we finally prove our main result.
	
	\begin{theorem}\label{thm:characterization}
		An ABC voting rule is a step-dependent sequential scoring if and only if it is proper and consistently committee monotone. 
	\end{theorem}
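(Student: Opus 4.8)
The plan is to handle the two directions separately, leaning on the two preceding propositions for the heavy lifting. The forward direction has essentially already been recorded in \Cref{subsec:rules}: the generator function of a step-dependent sequential scoring rule is consistent, so the rule is consistently committee monotone, and the technical condition imposed on a step-dependent counting function is exactly what guarantees non-imposition, while anonymity, neutrality, and continuity fall straight out of the definition of the score. Hence I would spend all the effort on the converse.

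For the converse, I would take a proper and consistently committee monotone rule $f$ and first invoke \Cref{prop:niceGenerator} to obtain a proper, consistent, and complete generator $g$ with $f(A,k)=\{W\cup\{x\}\colon W\in f(A,k-1),\, x\in g(A,W)\}$, and then \Cref{prop:hyperplane} to get, for each committee $W\neq\mathcal{C}$, a weight function $v^W$ with $g(A,W)=\mathit{AV}_{v^W}(A,W)$. The target is a single step-dependent counting function $h(x,y,z)$ whose induced rule equals $f$. The first step is to remove the dependence of the weight function on the specific committee and keep only its size. Because $\mathit{AV}_{v}(\cdot,W)$ is itself neutral---its weights read off only $|A_i\cap W|$ and $|A_i|$, both invariant under permutations---neutrality of $g$ yields $g(A,W')=\mathit{AV}_{v^W}(A,W')$ whenever $|W|=|W'|$, so I may fix one representative $v_y$ per size $y$ and write $g(A,W)=\mathit{AV}_{v_y}(A,W)$ for every $W$ of size $y$.

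The crux is that a weight function is exactly the discrete marginal of a counting function. For any $h$ and any $W$ of size $y$, expanding the score $\sum_{i\in N_A} h(|A_i\cap(W\cup\{x\})|, y+1, |A_i|)$ and peeling off the $x$-independent term $\sum_{i\in N_A} h(|A_i\cap W|, y+1, |A_i|)$ shows that $x$ maximizes the extended score precisely when it maximizes $\sum_{i\in N_A\colon x\in A_i}\bigl(h(|A_i\cap W|+1, y+1, |A_i|)-h(|A_i\cap W|, y+1, |A_i|)\bigr)$; that is, the sequential generator of $h$ equals $\mathit{AV}_{v_y}$ with $v_y(k,\ell)=h(k+1,y+1,\ell)-h(k,y+1,\ell)$. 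I would therefore invert this relation and define $h$ by telescoping the weights, setting $h(0,y,\ell)=0$ and $h(x,y,\ell)=\sum_{k=0}^{x-1}v_{y-1}(k,\ell)$. By construction the marginals of this $h$ return the $v_y$, so the generator of the induced sequential scoring rule agrees with $g$ on every $W\neq\mathcal{C}$; as both rules unfold from the same recursion with $f(A,0)=\{\emptyset\}$, they coincide, proving $f$ is a step-dependent sequential scoring rule.

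What remains is to confirm $h$ is admissible, i.e., that for each $y\in\{1,\dots,m-1\}$ some admissible $(x,z)$ gives $h(x,y,z)\neq h(x-1,y,z)$, equivalently $v_{y-1}(x-1,z)\neq 0$; this follows from non-imposition and completeness, since a weight function that vanished throughout its realizable range would force $g(\cdot,W)=\mathcal{C}\setminus W$ for all $W$ of size $y-1$, contradicting the existence of profiles with singleton generator output. Values of $h$ at index combinations that no ballot can realize are irrelevant and set arbitrarily. Given the two propositions, I expect the only real difficulty here to be bookkeeping: matching the index ranges of the weight functions from \Cref{prop:hyperplane} to the domain $\{x,\dots,m-1-(y-x)\}$ demanded of a counting function, and checking that the non-degeneracy condition sits inside the range where non-imposition guarantees a nonzero weight. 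The genuinely hard steps---the separating-hyperplane construction and the extension to a complete generator---are already absorbed into the preceding propositions.
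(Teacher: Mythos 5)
Your proposal is correct and follows essentially the same route as the paper's proof: both directions rest on \Cref{prop:niceGenerator} and \Cref{prop:hyperplane}, the weight functions are collapsed to depend only on committee size via neutrality (your argument through the neutrality of $\mathit{AV}_{v^W}$ is the paper's explicit permutation computation in compressed form), and the counting function is obtained by telescoping the weights---the paper's recursion $h(x,y,z)=h(x-1,y,z)+v(x-1,y-1,z)$ is exactly your partial-sum definition $h(x,y,\ell)=\sum_{k=0}^{x-1}v_{y-1}(k,\ell)$. The only cosmetic deviation is the final non-degeneracy check, which the paper delegates to \Cref{prop:relation}(1) while you verify it directly from the non-imposition of the generator; the underlying argument is the same.
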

	\begin{proof}
		We show in \Cref{prop:relation} that every step-dependent sequential scoring rule $f$ is proper. For proving that $f$ is consistently committee monotone, let $h$ denote its step-dependent counting function. Moreover, let $W^x=W\cup\{x\}$ for every committee $W$ and candidate $x\in\mathcal{C}\setminus W$. By definition, $f(A,0)=\emptyset$ and $f(A,k)=\{W^c\colon W\in f(A,k-1), c\in\mathcal{C}\setminus W\colon\forall d\in\mathcal{C}\setminus W\colon s_h(A,W^c)\geq s_h(A,W^d)\}$. Thus, $g(A,W)=\{c\in\mathcal{C}\setminus W\colon \forall d\in\mathcal{C}\setminus W\colon s_h(A,W^c)\geq s_h(A,W^d)\}$ is complete and generates $f$. Moreover, $g$ is consistent since the scores are additive, i.e., $s_h(A+A', W)=s_h(A,W)+s_h(A',W)$ for all profiles $A,A'$ and committees $W$. Hence, if $s_h(A,W^c)\geq s_h(A,W^d)$ and $s_h(A',W^c)\geq s_h(A',W^d)$, then $s_h(A+A', W^c)\geq s_h(A+A',W^d)$. Moreover, if one of the inequalities is strict for $A$ or $A$', so it is for $A+A'$. Thus, $g(A+A',W)=g(A,W)\cap g(A',W)$ if $g(A,W)\cap g(A',W)\neq\emptyset$, which proves that $g$ is consistent.
		
		For the other direction, consider a proper and consistently committee monotone ABC voting rule $f$. By \Cref{prop:niceGenerator}, $f$ is generated by a proper, consistent, and complete generator function $g$. Furthermore, by \Cref{prop:hyperplane}, there is for every committee $W\neq \mathcal{C}$ a weight function $v^W$ such that $g(A,W)=\mathit{AV}_{v^W}(A,W)$ for all $A\in\mathcal{A}^*$. Now, consider two committees $W$ and $W'$ with $|W|=|W'|<m$ and let $v^W$ and $v^{W'}$ denote the corresponding weight functions. We first show that $\mathit{AV}_{v^W}(A', W')=\mathit{AV}_{v^{W'}}(A', W')$ for every profile $A'$. For this, let $c'\in \mathit{AV}_{v^{W'}}(A', W')$ which is the case if and only if $\sum_{i\in N_{A'}\colon c'\in A_i'} v^{W'}(|W'\cap A_i'|, |A_i'|)\geq \sum_{i\in N_{A'}\colon d'\in A_i'} v^{W'}(|W'\cap A_i'|, |A_i'|)$ for all $d'\in\mathcal{C}\setminus W'$. Next, let $\tau:\mathcal{C}\rightarrow\mathcal{C}$ denote a permutation such that $\tau(W)=W'$, and let $A\in \mathcal{A}^*$ and $c\in\mathcal{C}$ such that $\tau(A)=A'$ and $\tau(c)=c'$. Because of $g(A,W)=\mathit{AV}_{v^W}(A,W)$, $g(A',W')=\mathit{AV}_{v^{W'}}(A',W')$, and the neutrality of $g$, it holds that $c'\in \mathit{AV}_{v^{W'}}(A',W')$ if and only if $c\in \mathit{AV}_{v^{W}}(A,W)$. By the definition of $\mathit{AV}_{v^W}$, the last claim is true if and only if $\sum_{i\in N_{A}\colon c\in A_i} v^{W}(|W\cap A_i|, |A_i|)\geq \sum_{i\in N_{A}\colon d\in A_i} v^{W}(|W\cap A_i|, |A_i|)$ for all $d\in\mathcal{C}\setminus W$. Finally, observe that $x\in A_i$ if and only if $\tau(x)\in A_i'$, $|A_i|=|A_i'|$, and $|W\cap A_i|=|W'\cap A_i'|$ for all candidates $x\in \mathcal{C}\setminus W$ and voters $i\in N_A$. Hence, we conclude that $\sum_{i\in N_{A}\colon c\in A_i} v^{W}(|W\cap A_i|, |A_i|)\geq \sum_{i\in N_{A}\colon d\in A_i} v^{W}(|W\cap A_i|, |A_i|)$ if and only if $\sum_{i\in N_{A}\colon c'\in A_i'} v^{W}(|W'\cap A_i'|, |A_i'|)\geq \sum_{i\in N_{A}\colon \tau(d)\in A_i'} v^{W}(|W'\cap A_i'|, |A_i'|)$ for all $d\in\mathcal{C}\setminus W$. So, $c'$ obtains the maximal score in $A'$ with respect to $v^{W'}$ if and only if the same holds with respect to $v^W$. This proves that $\mathit{AV}_{v^W}(A', W')=\mathit{AV}_{v^{W'}}(A', W')$ for all profiles $A'$ and committees $W,W'$ with $|W|=|W'|<m$.
		
		Next, let $W_0, \dots, W_{m-1}$ denote committees such that $|W_i|=i$ and let $v^i=v^{W_i}$. We define the function $v(x,y,z):\{0,\dots, m\}\times\{0,\dots, m-1\}\times \{1,\dots, m\}\rightarrow\mathbb{R}$ by $v(x,y,z)=v^y(x,z)$. By our previous reasoning, it holds that $g(A,W)=\mathit{AV}_{v^{|W|}}(A,W)=\{c\in\mathcal{C}\setminus W\colon \forall d\in\mathcal{C}\setminus W\colon \sum_{i\in N_A\colon c\in A_i} v(|A_i\cap W|, |W|, |A_i|)\geq \sum_{i\in N_A\colon d\in A_i} v(|A_i\cap W|, |W|, |A_i|)\}$. Our next goal is to derive a valuation function from $v$. For doing so, define the function $h(x,y,z):\{0,\dots, m\}\times\{1,\dots, m\}\times \{1,\dots, m\}\rightarrow\mathbb{R}$ as follows: $h(0,y,z)=0$ for all $y,z\in\{1,\dots, m\}$ and $h(x,y,z)=h(x-1,y,z)+v(x-1,y-1,z)$ for all $x,y,z\in \{1,\dots,m\}$. We claim that $f$ is the sequential valuation rule induced by the valuation function $w(A_i, W)=h(|A_i\cap W|, |W|, |A_i|)$. For this, let $g_w(A,W)=\{c\in\mathcal{C}\setminus W\colon \forall d\in\mathcal{C}\setminus W\colon \sum_{i\in N_A} w(A_i, W\cup \{c\})\geq \sum_{i\in N_A} w(A_i, W\cup \{d\})\}$. We will show that $g_w(A,W)=g(A,W)$ for all profiles $A\in\mathcal{A}^*$ and committees $W\neq \mathcal{C}$. Note for this that for all profiles $A$, committees $W$, and candidates $c\in\mathcal{C}\setminus W$, the following equation holds:
		\begin{align*}
			&\sum_{i\in N_A} h(|W^c\cap A_i|, |W^c|, |A_i|) - h(|W\cap A_i|, |W^c|, |A_i|)\\
			&=\!\!\sum_{i\in N_A\colon c\in A_i} \!\!h(|W\cap A_i|+1, |W^c|, |A_i|) - h(|W\cap A_i|, |W^c|, |A_i|) \\
			&\qquad+\!\! \sum_{i\in N_A\colon c\not \in A_i} \!\! h(|W\cap A_i|, |W^c|, |A_i|) - h(|W\cap A_i|, |W^c|, |A_i|)\\
			&=\sum_{i\in N_A\colon c\in A_i} v(|W\cap A_i|, |W|, |A_i|).
		\end{align*}

		Now, define $C(A,W)=\sum_{i\in N_A} h(|W\cap A_i|, |W|+1, |A_i|)$. Then, the above equation shows that $s_w(A, W^c)\geq s_w(A, W^d)$ if and only if $s_w(A, W^c)-C(A,W)\geq s_w(A, W^d)-C(A,W)$ if and only if $\sum_{i\in N_{A}\colon c\in A_i} v(|W\cap A_i|, |W|, |A_i|)\geq \sum_{i\in N_{A}\colon d\in A_i} v(|W\cap A_i|, |W|, |A_i|)$. Hence, $g_w(A,W)=g(A,W)$ for all profiles $A$ and committees $W$ and $f$ is the sequential valuation rule generated by $g$. Finally, since $f$ is proper, \Cref{prop:relation} shows that it is a step-dependent sequential valuation rule.
	\end{proof}
	
	Due to \Cref{prop:relation}, \Cref{thm:characterization} entails also characterizations of step-dependent sequential Thiele rules and sequential Thiele rules.
	
	\begin{corollary}\label{corollary} The following statements hold:
		\begin{enumerate}[label=(\arabic*), leftmargin=*,topsep=4pt]
			\item An ABC voting rule is a step-dependent sequential Thiele rule if and only if it is consistently committee monotone, independent of losers, and proper. 
			\item An ABC voting rule is a sequential Thiele rule if and only if it is consistently committee monotone, independent of losers, committee separable, and proper.
		\end{enumerate}
	\end{corollary}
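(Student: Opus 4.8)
The plan is to obtain both statements as direct consequences of chaining \Cref{thm:characterization} with \Cref{prop:relation}; no new machinery is required, since each additional axiom appearing in the corollary is exactly the one that \Cref{prop:relation} uses to step down one level in the hierarchy. The overall strategy is thus to peel off the classes one at a time: \Cref{thm:characterization} identifies the proper and consistently committee monotone rules with the step-dependent sequential scoring rules, and parts (2) and (3) of \Cref{prop:relation} then carve out the subclasses using \emph{independence of losers} and \emph{committee separability}, respectively.

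For part (1), I would argue as follows. First suppose $f$ is a step-dependent sequential Thiele rule. Since every such rule is in particular a step-dependent sequential scoring rule, \Cref{thm:characterization} gives that $f$ is proper and consistently committee monotone; moreover, the "only if" direction of \Cref{prop:relation}(2) shows that $f$ is independent of losers. Conversely, suppose $f$ is proper, consistently committee monotone, and independent of losers. Then \Cref{thm:characterization} yields that $f$ is a step-dependent sequential scoring rule, and applying the "if" direction of \Cref{prop:relation}(2) to this rule---using independence of losers---shows that $f$ is a step-dependent sequential Thiele rule.

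Part (2) is handled analogously, now invoking part (1) together with \Cref{prop:relation}(3). If $f$ is a sequential Thiele rule, then it is in particular a step-dependent sequential Thiele rule, so by part (1) it is proper, consistently committee monotone, and independent of losers; the "only if" direction of \Cref{prop:relation}(3) additionally gives committee separability. For the converse, a rule that is proper, consistently committee monotone, independent of losers, and committee separable is a step-dependent sequential Thiele rule by part (1), and the "if" direction of \Cref{prop:relation}(3) then upgrades it to a sequential Thiele rule.

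There is essentially no substantive obstacle here, as the corollary is a pure composition of the two preceding results. The only point requiring care is bookkeeping---matching each biconditional in \Cref{prop:relation} to the correct implication and checking that the relevant class inclusions (every sequential Thiele rule is a step-dependent sequential Thiele rule, and every step-dependent sequential Thiele rule is a step-dependent sequential scoring rule) are applied in the right direction, so that \Cref{thm:characterization} and \Cref{prop:relation} can be legitimately chained.
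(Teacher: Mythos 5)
Your proposal is correct and matches the paper's intent exactly: the paper states this corollary without a separate proof, as an immediate consequence of chaining \Cref{thm:characterization} with parts (2) and (3) of \Cref{prop:relation}, which is precisely the composition you carry out (with the directions of each biconditional and the class inclusions applied correctly).
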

		
	\begin{remark}\label{rem:independence}
			All axioms are required for \Cref{thm:characterization} as there are ABC voting rules other than step-dependent sequential scoring rules that satisfy all but one condition. If we omit anonymity, we can use \texttt{seqAV} but count the vote of voter $1$ twice. When omitting neutrality, we can use \texttt{seqAV} but count the votes for candidate $a$ twice. When omitting non-imposition, the rule that always returns all committees of the given size satisfies all remaining conditions. The rule that refines the generator of $\texttt{seqAV}$ by breaking ties based on the Chamberlin-Courant score only fails continuity. Finally, when omitting consistent committee monotonicity, Thiele rules satisfy all remaining conditions. We can also not weaken consistent committee monotonicity to committee monotonicity as reverse sequential Thiele rules then satisfy all given conditions. 
	\end{remark}
	
	\begin{remark}
		Our hierarchy of sequential valuation rules misses the class of sequential scoring rules, which are defined by a valuation function of the form $v(A_i, W)=h(|A_i\cap W|, |A_i|)$. These rules form a subclass of step-dependent sequential scoring rules, but committee separability does not characterize them within the class of step-dependent sequential scoring rules. 
	\end{remark}
	
	\begin{remark}
		A natural follow-up question to \Cref{thm:characterization} is whether sequential valuation rules can be characterized by consistent committee monotonicity, anonymity, and continuity since they satisfy these three axioms. Unfortunately, this is not the case as we can still treat candidates differently (see \Cref{rem:independence}). On the other hand, it might be possible to characterize the rules that satisfy anonymity, neutrality, continuity, and consistent committee monotonicity. 
	\end{remark}
	
	\section{Characterizations of Specific ABC Voting Rules}

	Finally, we leverage our results to derive characterizations of specific voting rules. Note here that our characterizations can be combined with known results that single out rules within the class of, e.g., sequential Thiele rules, to derive full characterizations (see, e.g., \cite{LaSk18a,LaSk21a}). Nevertheless, we prefer to present our own characterizations for \texttt{seqCCAV}, \texttt{seqAV}, and \texttt{seqPAV} to highlight new aspects of these rules. We state our results restricted to the class of sequential Thiele rules; \Cref{corollary} turns them into full characterizations by adding the necessary axioms. Moreover, we focus on the case $m\geq 3$ since every sequential Thiele rule coincides with \texttt{seqAV} if $m=2$.

	The main idea for our characterizations is to study how ABC voting rules treat clones. To this end, we say that two candidates $c,d$ are \emph{clones} in a profile $A$ if $c\in A_i$ if and only if $d\in A_i$ for all voters $i\in N_A$. Depending on the goal of the election, clones should be treated differently. For instance, if our goal is to choose a committee that is as diverse as possible, there is no point in choosing both clones. We formalize this new condition as follows: an ABC voting rule $f$ is \emph{clone-rejecting} if $f(A,|W|)=\{W\}$ implies that $\{c,d\}\not\subseteq W$ for all profiles $A$ with clones $c,d$ and committees $W\neq\mathcal{C}$. The requirement that a single committee is chosen is necessary since, for instance, in the profile where all voters approve all candidates, we need to choose clones but we will also choose multiple committees. As our next result shows, this axiom characterizes \texttt{seqCCAV}.
	
	\begin{restatable}{theorem}{seqCCAV}\label{thm:seqCCAV}
		\emph{\texttt{seqCCAV}} is the only sequential Thiele rule that satisfies clone-rejection if $m\geq 3$. 
	\end{restatable}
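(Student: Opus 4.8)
The plan is to prove both implications of the characterization. The easy direction, that \texttt{seqCCAV} satisfies clone-rejection, I would argue through its generator. For \texttt{seqCCAV} the score of a committee equals the number of voters who approve at least one of its members, so the marginal gain of adding a candidate $x$ to a committee $C$ is the number of voters approving $x$ but no member of $C$. Suppose for contradiction that $f(A,|W|)=\{W\}$ for clones $c,d$ with $\{c,d\}\subseteq W\neq\mathcal{C}$. Iterating committee monotonicity, every committee in $f(A,j)$ for $j\le|W|$ is contained in $W$, and there is a chain $\emptyset=W_0\subsetneq\dots\subsetneq W_{|W|}=W$ with $W_j\in f(A,j)$. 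Along this chain the clones enter at distinct steps; let $d$ be the later one, added to $C=W_k$ with $c\in C$, so $d\in g(A,C)$. But the marginal gain of $d$ is $0$ (every voter approving $d$ approves $c\in C$ and is already covered), while every candidate $e\in\mathcal{C}\setminus W$ (which exists because $W\neq\mathcal{C}$) has non-negative gain. Since $d$ is a maximizer, $e$ must be one too, whence $C\cup\{e\}\in f(A,k+1)$; as $e\notin W$, this contradicts that all committees in $f(A,k+1)$ are subsets of $W$.

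For the converse, let $f$ be a clone-rejecting sequential Thiele rule with counting function $h$, and write $\Delta_j=h(j{+}1)-h(j)\ge 0$, noting $\Delta_0=h(1)-h(0)>0$. Because a size-$m$ committee is always forced to be $\mathcal{C}$, the genuine choices occur in steps from size $k$ to $k+1$ with $k\le m-2$, where the relevant intersection sizes are at most $k\le m-2$; hence only $\Delta_0,\dots,\Delta_{m-2}$ influence the induced rule, and $f$ coincides with \texttt{seqCCAV} exactly when $\Delta_j=0$ for all $1\le j\le m-2$ (scaling all gains by the positive constant $\Delta_0$ leaves the maximizer set unchanged). I would therefore prove the contrapositive: if $\Delta_t>0$ for some $1\le t\le m-2$, take the minimal such $t$ (so $\Delta_1=\dots=\Delta_{t-1}=0$) and build a profile in which $f$ uniquely elects a committee containing two clones.

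The construction uses candidates $c,d$ (the clones), fillers $e_1,\dots,e_{t-1}$, and at least one unused candidate, available since $m\ge t+2$. The voters are: a group $G$ approving $\{c,d,e_1,\dots,e_{t-1}\}$; a dedicated group $H_j$ approving only $e_j$, with $|H_1|>\dots>|H_{t-1}|>r$; and a group $G'$ of size $r\ge1$ approving only $\{c,d\}$. One checks $c$ and $d$ are clones. Since the intermediate gains $\Delta_1,\dots,\Delta_{t-1}$ vanish, $G$ contributes nothing to adding further members it already approves, so the fillers are driven solely by their dedicated groups and are added first, in the order $e_1,\dots,e_{t-1}$, yielding the unique committee $\{e_1,\dots,e_{t-1}\}$ of size $t-1$; at this point every voter of $G$ approves exactly these $t-1$ members. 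Next only $c$ and $d$ have positive marginal gain (each $\Delta_0 r>0$, from $G'$), so the rule branches into $\{e_1,\dots,e_{t-1},c\}$ and $\{e_1,\dots,e_{t-1},d\}$. Finally, from $\{e_1,\dots,e_{t-1},c\}$ group $G$ approves exactly $t$ committee members, so adding the clone $d$ gives marginal gain at least $\Delta_t|G|>0$, strictly beating the unused candidate; thus $d$ is the unique extension, and symmetrically $c$ is the unique extension of $\{e_1,\dots,e_{t-1},d\}$. The two branches re-converge, so $f(A,t+1)=\{\{e_1,\dots,e_{t-1},c,d\}\}$ is a unique winner of size $t+1\le m-1<m$ containing both clones, contradicting clone-rejection. (The degenerate case $t=1$ has no fillers, and $G$ and $G'$ both approve $\{c,d\}$; the same trace applies.)

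The main obstacle is this converse construction, with two intertwined difficulties. First, clones are symmetric, so by neutrality (a transposition of $c,d$ fixes the profile) the rule can never add a single clone uniquely: a unique committee must contain both or neither clone. Hence the clones can only enter through a symmetric branch, which forces the split-then-merge design and requires showing the two branches re-converge to one committee. Second, because the minimal $t$ with $\Delta_t>0$ may exceed $1$, the intermediate gains vanish and I cannot raise the count of the relevant voters toward $t$ using their own approvals; it must instead be built with $t-1$ independently motivated fillers while keeping every step's maximizer set controlled. The care lies in verifying that each step has the intended unique (or clone-symmetric) maximizer and that the distinguished committee has size $t+1<m$, so that clone-rejection genuinely applies; the book-keeping that only $\Delta_0,\dots,\Delta_{m-2}$ matter is exactly what confines the analysis to $t\le m-2$ and guarantees the spare candidate the construction needs.
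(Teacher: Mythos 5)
Your proposal is correct and follows essentially the same route as the paper's proof: both identify the minimal index at which the Thiele counting function's increments deviate from Chamberlin--Courant (your minimal $t$ with $\Delta_t>0$ is the paper's minimal $x$ with $h(x)>1$, via $x=t+1$), and both build a profile consisting of the two clones, fillers with strictly decreasing dedicated supporter groups, a clone-only group, and a spare candidate, so that the greedy rule branches over the two clones and then re-merges into a unique committee containing both, contradicting clone-rejection. The remaining differences are cosmetic: the paper's trajectory adds a clone before the fillers and gives the spare candidate one supporter (which is why it needs the condition $\ell\Delta>1$), whereas you add the fillers first and leave the spare candidate unsupported; you also spell out the easy direction that the paper merely asserts.
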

	\begin{proof}
	Since \texttt{seqCCAV} clearly satisfies clone-rejection, we focus on the inverse direction. Hence, consider a sequential Thiele rule $f$ other than \texttt{seqCCAV} and let $h$ denote its Thiele counting function. Since sequential Thiele functions are invariant under scaling and shifting $h$, we can suppose that $h(0)=0$ and $h(1)=1$. Moreover, because $f$ is not \texttt{seqCCAV}, there is an integer $x\in \{2,\dots, m-1\}$ such that $h(x)>1$ and $h(x')=1$ for all $x'\in \{1,\dots,x-1\}$. Now, let $\Delta=h(x)-1$ and $\ell\in\mathbb{N}$ such that $\ell\Delta>1$. We consider the following profile $A$ to show that $f$ fails clone-rejection: there are $\ell$ voters who approve the candidates $c_1,\dots, c_x$, $x$ voters who approve $c_1$ and $c_2$, and for each $i\in \{3,\dots,x+1\}$ there are $x+2-i$ voters who approve only $c_i$. Now, due to the minimality of $x$, $f$ agrees in the first $x-1$ rounds with \texttt{seqCCAV} and we thus have that $f(A,x-1)=\{\{c_1,c_3,\dots, c_x\}, \{c_2,c_3,\dots, c_x\}\}$. On the other hand, it holds that $s_h(A, \{c_1,\dots, c_x\})\geq s_h(A, \{c_1,c_3,\dots, c_x\})+\ell\Delta>s_h(A, \{c_1,c_3,\dots, c_x\})+1$ and $s_h(A, \{c_1,c_3,\dots, c_x,c_{x+1}\})=s_h(A, \{c_2,c_3,\dots, c_x,c_{x+1}\})=s_h(A, \{c_1,c_3,\dots, c_x\})+1$. Thus, $f(A,x)=\{\{c_1,\dots, c_x\}\}$. However, this committee contains the clones $c_1$ and $c_2$ which proves that $f$ fails clone-rejection.
	\end{proof}
	
    The polar opposite to diverse committees are quality-based ones, where the goal is to find the $k$ best candidates regardless of how well they represent the voters. In such a setting, clones should be treated as equal as possible and we thus propose the following notion: an ABC voting rule $f$ is \emph{clone-accepting} if for all profiles $A$ with clones $c,d$ and committees $W\subseteq \mathcal{C}\setminus \{c,d\}$, it holds that $W\cup \{c\}\in f(A,|W\cup \{c\}|)$ implies that $W\cup \{c,d\}\in f(A,|W\cup\{c,d\}|)$. Or, in words, the only reason that a winning committee does not contain both clones is if this conflicts with the committee size. Perhaps surprisingly, clone-acceptance does not characterize \texttt{seqAV} as, e.g., the sequential Thiele rule defined by $h(0)=0$, $h(1)=1$, and $h(x)=2x+1$ for $x\geq 2$ satisfies this axiom, too. However, this rule prefers to choose candidates that are approved by voters who already approve a chosen candidate. This behavior can be interpreted as trust in a voter's recommendation and can be reasonable for quality-based elections. Nevertheless, to single out \texttt{seqAV}, we use a mild condition prohibiting this behavior: an ABC voting rule $f$ is \emph{distrusting} if for all profiles $A$, committees $W\neq\mathcal{C}$ with $f(A,|W|)=\{W\}$, and candidates $b,c$, it holds that $b\in W$ implies $c\in W$ if more voters in $A$ report the ballot $\{c\}$ than there are voters who approve $b$. Based on these two axioms, we derive the following theorem. 
	
	\begin{restatable}{theorem}{seqAV}\label{thm:seqAV}
		\emph{\texttt{seqAV}} is the only sequential Thiele rule that is clone-accepting and distrusting if $m\geq 3$.
	\end{restatable}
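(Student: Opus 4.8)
The plan is to dispatch the easy direction---that \texttt{seqAV} is clone-accepting and distrusting---with a single observation: in \texttt{seqAV} the marginal score of a candidate $y$ equals its approval count and is independent of the current committee, so clones always tie (hence are accepted together) and a candidate with strictly more approvals than some $b$ in a unique winner can never be excluded (hence distrust holds). I then focus on uniqueness. Since a sequential Thiele rule is unchanged when $h$ is replaced by $\alpha h+\beta$ with $\alpha>0$, I normalize $h(0)=0$ and $h(1)=1$ and write $\delta(j)=h(j+1)-h(j)$, so $\delta(0)=1$ and the marginal gain of adding $y$ to $W$ is $\sum_{i\colon y\in A_i}\delta(|A_i\cap W|)$. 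A key preliminary observation is that the rule depends only on $\delta(0),\dots,\delta(m-2)$: extending a committee of size $m-1$ leaves a single candidate, so $\delta(m-1)$ is never consulted. Thus if $f$ is not \texttt{seqAV} there is a least index $j^*\in\{1,\dots,m-2\}$ with $\delta(j^*)\neq 1$. Because rounds $1,\dots,j^*$ only extend committees of size at most $j^*-1$ and therefore invoke $\delta$ only at indices below $j^*$ (all equal to $1$), $f$ coincides with \texttt{seqAV}---ranking candidates purely by approval count---through round $j^*$. I then split on the sign of $\delta(j^*)-1$.

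\textbf{Case $\delta(j^*)>1$ (excluded by distrust).} I would take a block of $n$ voters approving $\{a_1,\dots,a_{j^*},b\}$, add private singleton voters giving the $a_i$ pairwise distinct approval counts all strictly above $n+1$, and add $n+1$ voters reporting $\{c\}$. Choosing $n$ with $n(\delta(j^*)-1)>1$, the \texttt{seqAV}-phase forces $f(A,k)=\{\{a_1,\dots,a_k\}\}$ for $k\le j^*$ (so $c$ is never picked), and at round $j^*+1$ the boosted gain $n\,\delta(j^*)$ of $b$ strictly exceeds the gain $n+1$ of $c$; hence $f(A,j^*+1)=\{\{a_1,\dots,a_{j^*},b\}\}$ is the unique winner and omits $c$. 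Since the number of $\{c\}$-ballots ($n+1$) exceeds the number of voters approving $b$ ($n$), distrust forces $c$ into this committee, a contradiction.

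\textbf{Case $\delta(j^*)<1$ (excluded by clone-acceptance).} Here I would use clones $c,d$ with a block of $n$ voters approving $\{a_1,\dots,a_{j^*-1},c,d\}$, private singletons making the $a_i$ distinct and above $n$, and $n-1$ voters reporting $\{e\}$. The \texttt{seqAV}-phase fixes $W=\{a_1,\dots,a_{j^*-1}\}$ as the unique committee of size $j^*-1$, and at round $j^*$ the clones tie for the top gain $n$ (each block voter still contributes $\delta(j^*-1)=1$), so $W\cup\{c\}$ is a genuine winner of size $j^*$. However, when extending $W\cup\{c\}$ each block voter now contributes $\delta(j^*)$, so $d$ gains only $n\,\delta(j^*)$ while $e$ gains $n-1$; choosing $n$ with $n(1-\delta(j^*))>1$ gives $n-1>n\,\delta(j^*)$, so $e$ beats $d$ (and symmetrically beats $c$ when extending $W\cup\{d\}$). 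Thus $W\cup\{c,d\}\notin f(A,j^*+1)$, contradicting clone-acceptance applied to the winner $W\cup\{c\}$.

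\textbf{Main obstacle.} I expect the delicate part to be the profile engineering rather than a conceptual step: I must ensure the early rounds select exactly the intended candidates and, for the distrust case, do so \emph{uniquely} (distrust only applies when $f(A,\cdot)$ is a singleton); that the two clones genuinely tie so that $W\cup\{c\}$ really is a winner; and that the early rounds faithfully replicate \texttt{seqAV}, which rests on the minimality of $j^*$ keeping every relevant intersection size below $j^*$. I would also verify the boundary cases $j^*=1$ (no $a_i$ are present) and $j^*=m-2$ (the candidate budget $j^*+2\le m$ is tight, using $m\ge 3$), where no spare candidates exist. As both cases are impossible, $\delta(j)=1$ for all $j\in\{0,\dots,m-2\}$, i.e.\ $f$ is \texttt{seqAV}.
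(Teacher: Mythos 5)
Your proof is correct and follows essentially the same route as the paper's: normalize $h$, locate the least index at which the increment $h(k)-h(k-1)$ deviates from $1$, and build a block-plus-singleton-ballot profile whose early rounds replicate \texttt{seqAV}, yielding a distrust violation when the increment exceeds $1$ and a clone-acceptance violation when it falls below $1$. The only differences are cosmetic profile engineering: you enforce unique early-round winners via private singleton ballots, while the paper adds two ballots approving $W$ in the distrust case and tolerates ties among the block candidates in the clone case.
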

	\begin{proof}[Proof Sketch]
	    We focus on the direction from right to left and thus consider a sequential Thiele rule $f$ other than \texttt{seqAV}. Moreover, let $h$ denote the corresponding Thiele counting function and suppose again that $h(0)=0$ and $h(1)=1$. Since $f$ is not \texttt{seqAV}, there is a integer $x\in \{2,\dots, m-1\}$ such that $h(x)\neq x$ but $h(x')=x'$ for $x'\in \{1,\dots, x-1\}$. Now, let $\Delta=|h(x)-x|$ and $\ell\in\mathbb{N}$ such that $\ell\Delta>1$. If $h(x)>x$, $f$ fails distrust in the following profile $A$, where $W$ is a committee of size $x-1\leq m-2$ and $c,d\in\mathcal{C}\setminus W$: $\ell$ voters approve $W\cup \{c\}$, $\ell+1$ voters approve $d$, and two voters approve $W$. Indeed, it can be checked that $f(A,x)=\{W\cup \{c\}\}$ but distrust requires that $d$ is not chosen after $c$. On the other hand, if $h(x)<x$, $f$ fails clone-acceptance in the following profile $A$, where $W$ is a committee $W$ with $|W|=x-2\leq m-3$ and $b,c,d\in\mathcal{C}\setminus W$: $\ell$ voters report $W\cup \{c,d\}$ and $\ell-1$ voters report $b$. Indeed, $f(A,x-1)=\{W\cup \{c\}, W\cup \{d\}\}$ but $f(A,x)=\{W\cup \{b,c\}, W\cup \{b,d\}\}$. Thus, \texttt{seqAV} is the only distrusting and clone-accepting sequential Thiele rule.
	\end{proof}
	
	Finally, a large stream of research on ABC voting rules tries to find proportional committees, i.e., the chosen committee should proportionally reflect the voters' preferences. For defining this concept, we rely on heavily restricted profiles $A$ in which $n_1$ voters report the same ballot $A_1$ and $n_2$ voters approve a single candidate $c\not\in A_1$. In such a profile, each clone $d\in A_1$ that is in the elected committee $W$ represents on average $\frac{n_1}{|A_1\cap W|}$ voters, whereas the candidate $c$ represents $n_2$ voters. Following the idea of proportionality, we should choose a subset of $A_1$ for a committee size $k$ if $\frac{n_1}{k}>n_2$ as every candidate $d\in A_1$ represents on average more voters than $c$.
	Conversely, if $\frac{n_1}{k}<n_2$, the chosen committee should contain $c$. Thus, we say an ABC voting rule is \emph{clone-proportional} if for all such profiles $A$, committee sizes $k\leq |A_1|$, and committees $W\in f(A,k)$, it holds that $c\not\in W$ if $\frac{n_1}{k}>n_2$ and $c\in W$ if $\frac{n_1}{k}<n_2$. Note that clone-proportionality is closely related to D'Hondt proportionality \citep{BLS18a,LaSk21a}. Next, we show that this axiom characterizes \texttt{seqPAV}.
	
	\begin{restatable}{theorem}{seqPAV}\label{thm:seqPAV}
		\emph{\texttt{seqPAV}} is the only sequential Thiele rule that satisfies clone-proportionality if $m\geq 3$.
	\end{restatable}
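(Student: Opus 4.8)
The plan is to read the Thiele counting function off the behaviour of the rule on exactly the profiles appearing in the definition of clone-proportionality. Throughout, fix such a profile $A$: $n_1$ voters report the ballot $A_1$ (so every member of $A_1$ is approved by precisely these $n_1$ voters) and $n_2$ voters report $\{c\}$ with $c\notin A_1$. Write $\delta_j=h(j)-h(j-1)$ for the increments of the counting function $h$, and, using that sequential Thiele rules are invariant under scaling and shifting $h$, normalize $h(0)=0$ and $h(1)=1$, so $\delta_1=1$. The single computation driving everything is the score of a greedy extension: if the current committee $W$ contains $j$ members of $A_1$ and not $c$, then extending by a candidate of $A_1\setminus W$ gives score $n_1h(j+1)$, extending by $c$ gives $n_1h(j)+n_2$, and extending by an unapproved candidate gives $n_1h(j)$. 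Thus an unapproved candidate is always strictly dominated by $c$ (as $n_2>0$), and an $A_1$-candidate is preferred to $c$ exactly when $n_1\delta_{j+1}>n_2$.

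For the direction that \texttt{seqPAV} satisfies clone-proportionality, note $\delta_{j+1}=\frac{1}{j+1}$, so at state $j$ an $A_1$-candidate beats $c$ iff $\frac{n_1}{j+1}>n_2$. If $\frac{n_1}{k}>n_2$, then $\frac{n_1}{j+1}\ge\frac{n_1}{k}>n_2$ for every $j\le k-1$, so $c$ is strictly dominated at each of the first $k$ steps and can appear in no size-$k$ committee; if $\frac{n_1}{k}<n_2$, then at state $k-1$ the candidate $c$ strictly beats any $A_1$-candidate, so no size-$k$ chain avoiding $c$ can take its last step, forcing $c$ into every $W\in f(A,k)$. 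Because the relevant comparison is strict in both regimes, ties play no role here.

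For the converse, I take an arbitrary sequential Thiele rule $f$ and extract $\delta_j$ from clone-proportionality. For generic $n_1,n_2$ (those with $\frac{n_2}{n_1}$ avoiding the finitely many values $\delta_1,\dots,\delta_m$) there are no ties between $c$ and the $A_1$-candidates, so the process adds members of $A_1$ until the first state $j$ with $n_1\delta_{j+1}<n_2$ and then adds $c$, which thereafter stays (the $\{c\}$-voters are saturated). Hence $c$ lies in the size-$k$ committee if and only if $n_1\delta_j<n_2$ for some $j\in\{1,\dots,k\}$, i.e.\ if and only if $\mu_k:=\min_{1\le j\le k}\delta_j<\frac{n_2}{n_1}$. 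Feeding this into the axiom with $r=\frac{n_2}{n_1}$: if $r<\frac1k$ then $c\notin W$, giving $\mu_k>r$, and if $r>\frac1k$ then $c\in W$, giving $\mu_k<r$. Since every positive rational is realizable as $\frac{n_2}{n_1}$ and such $r$ are dense, letting $r\to\frac1k$ from both sides squeezes $\mu_k=\frac1k$ for each $k\in\{1,\dots,m-1\}$ (take $|A_1|=k\le m-1$, so the profile uses at most $m$ candidates; this is where $m\ge3$ enters, as for $m=2$ only $\delta_1$ is constrained and all sequential Thiele rules already coincide). An induction then recovers the increments: $\delta_1=\mu_1=1$, and for $k\ge2$ the identity $\frac1k=\mu_k=\min(\mu_{k-1},\delta_k)=\min(\frac{1}{k-1},\delta_k)$ together with $\frac{1}{k-1}>\frac1k$ forces $\delta_k=\frac1k$. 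Thus $h(x)=\sum_{j=1}^{x}\frac1j$ for $x\le m-1$; the final step from size $m-1$ to $m$ is forced, so $h(m)$ is irrelevant and $f$ equals \texttt{seqPAV}.

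The main obstacle is establishing the crisp membership criterion $c\in W\iff\mu_k<\frac{n_2}{n_1}$ from the greedy dynamics: one must check that unapproved candidates never enter, argue that adding $c$ is stable, and dispose of tie cases by restricting to generic $n_1,n_2$ (which the density of $\frac{n_2}{n_1}$ makes harmless). Once this criterion is in hand, the squeeze over rational thresholds and the induction on $k$ are routine.
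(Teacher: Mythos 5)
Your proof is correct, and while the engine is the same greedy marginal computation (at a state with $j$ clones and no $c$, a clone yields $n_1\delta_{j+1}$ versus $n_2$ for $c$ and $0$ for unapproved candidates), your converse runs along a genuinely different architecture than the paper's. The paper argues by contraposition with a minimal-deviation index: it takes $f\neq\texttt{seqPAV}$, picks the \emph{smallest} $k$ with $h(k)\neq\sum_{i=1}^k\frac1i$, and builds one explicit profile per sign of the deviation ($n_1=k\ell$, $n_2=\ell+1$, resp.\ $n_1=k\ell+1$, $n_2=\ell$, with $\ell k\Delta>1$) in which clone-proportionality visibly fails; minimality is what guarantees the rule tracks \texttt{seqPAV} through the first $k-1$ rounds, so the intermediate committees are under control. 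You instead prove a clean membership criterion valid for an \emph{arbitrary} Thiele counting function --- for generic ratios, $c\in W$ for $W\in f(A,k)$ iff $\mu_k:=\min_{1\le j\le k}\delta_j<\frac{n_2}{n_1}$ --- and then squeeze $\mu_k=\frac1k$ by density of admissible rationals on both sides of $\frac1k$, recovering $\delta_k=\frac1k$ by induction. What your route buys: it needs no minimality assumption, it directly \emph{extracts} the counting function from the axiom rather than refuting alternatives, and the same template would characterize any sequential Thiele rule whose increments are determined by such threshold behavior. What it costs: you must handle ties, which you do correctly by excluding the finitely many ratios $\frac{n_2}{n_1}\in\{\delta_1,\dots,\delta_m\}$, whereas the paper's hand-picked profiles make all relevant inequalities strict by construction. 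Your treatment of the boundary issues matches the paper's implicit assumptions: committees of size at most $m-1$ never query $h(m)$, so the final forced step makes $h(m)$ irrelevant, and $m\ge3$ is needed exactly so that some $\delta_k$ with $k\ge2$ is constrained at all.
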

	\begin{proof}[Proof Sketch]
	We only show that no other sequential Thiele rule $f$ but \texttt{seqPAV} satisfies clone-proportionality. For this, let $h$ denote the Thiele counting function of $f$ and normalize $h$ such that $h(0)=0$ and $h(1)=1$. Since $f$ is not \texttt{seqPAV}, there is a minimal integer $x\in \{2,\dots, m-1\}$ such that $h(x)\neq \sum_{i=1}^x\frac{1}{i}$. As in the proofs of \Cref{thm:seqCCAV,thm:seqAV}, we can now construct a profile in which $f$ fails clone-proportionality. For instance, if $h(x)>\sum_{i=1}^x \frac{1}{i}$, let $\Delta=h(x)-\sum_{i=1}^x \frac{1}{i}$ and $\ell\in\mathbb{N}$ such that $\ell x\cdot \Delta>1$ and consider the following profile $A$: $\ell x$ voters report $\{c_1,\dots,c_x\}$ and $\ell+1$ voters approve a single candidate $c\not\in \{c_1,\dots,c_x\}$. It can be checked that $f(A,x)=\{\{c_1,\dots, c_x\}\}$ but clone-proportionality requires that $c\in W$ for $W\in f(A,x)$ as $\ell+1>\frac{\ell x}{x}$. A similar counter example can be constructed if $h(x)<\sum_{i=1}^x\frac{1}{i}$ and thus, \texttt{seqPAV} is the only sequential Thiele rule that satisfies this axiom.
	\end{proof}

	\begin{remark}
		Notably, clone-acceptance characterizes \texttt{seqAV} within the class of sequential Thiele rules with non-increasing partial sums $h(j)-h(j-1)$. In the literature, the definition of sequential Thiele rules often includes this condition. 
	\end{remark}
	
	\section{Conclusion}
	
	In this paper, we provide axiomatic characterizations for the new class of sequential valuation rules. These rules are based on valuation functions, which assign each pair of ballot and committee a score and compute the winning committees greedily by extending the current winning committees with the candidates that increase the score by the most. Clearly, sequential valuation rules generalize the prominent class of sequential Thiele rules whose valuation function only depends on the size of the intersection between the given ballot and committee. Our main result characterizes the class of proper (=anonymous, neutral, continuous, and non-imposing) sequential valuation rules based on a new axiom called consistent committee monotonicity. This axiom combines the well-known notions of committee monotonicity and consistency by requiring that the winning committees of size $k$ are derived from those of size $k-1$ by only adding new candidates, and that these newly added candidates are chosen in a consistent way across the profiles. By adding additional conditions, we also derive characterizations of important subclasses such as sequential Thiele rules and of prominent ABC voting rules such as sequential proportional approval voting. For a full overview of our results, we refer to \Cref{fig:overview}.
	
	Our theorems address one of the major open problems in the field of ABC voting: while there is an enormous number of different voting rules, there are almost no characterizations. Such characterizations are crucial for reasoning about which rule to use because without a characterization, there is always the possibility that a more attractive rule exists. Moreover, many ideas of our results seem rather universal and it might be possible to re-use them to characterize other rules such as Phragmen's rule or Thiele rules.

 \section*{Acknowledgements}
 
 We thank Felix Brandt for helpful feedback. This work was supported by the Deutsche Forschungsgemeinschaft under grants \mbox{BR 2312/11-2} and \mbox{BR 2312/12-1}.

	\clearpage
	\appendix
	
	\section{Appendix: Proofs}
	
	In this appendix, we discuss the proofs omitted in the main body. Note that we use some additional notation which has already been used in some of the proofs in the main body. In particular, we define a sequence of committees $W_1,\dots, W_\ell$ as a set of committees such that $|W_k|=k$ and $W_{k-1}\subseteq W_k$ (where $W_0=\emptyset$) for all $k\in \{1,\dots, m\}$. Moreover, given a committee $W$ and a candidate $x\in\mathcal{C}\setminus W$, we let $W^c=W\cup \{c\}$. Finally, we want to mention that we place the proofs of more involved statements in own subsections, and that we do not order these subsections according to the appearance of the corresponding result in the main body. 
	
	\subsection{Proof of \Cref{prop:niceGenerator}}
	
	As first result, we prove \Cref{prop:niceGenerator}: a proper ABC voting rule is consistently committee monotone if and only if there is a proper, consistent, and complete generator function that generates $f$. For proving this proposition, we show a number of auxiliary claims, which will also be helpful for proving other results. We start by proving that the consistency of the generator function implies a mild variant of consistency for the generated ABC voting rule itself.
	
	\begin{lemma}\label{lem:merge}
		Let $f$ denote a consistently committee monotone ABC voting rule and let $g$ denote a consistent generator function of $f$. For all profiles $A, A'\in\mathcal{A}^*$ and sequences of committees $W_1,\dots,W_\ell$ such that $f(A, k)=\{W_k\}\subseteq f(A',k)$, and $W_k\setminus W_{k-1}\subseteq g(A',W_{k-1})$ for all $k\in \{1,\dots, \ell\}$, it holds that $f(A+A',k)=\{W_k\}$ for all $k\in \{1,\dots,\ell\}$.
	\end{lemma}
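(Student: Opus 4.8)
The plan is to prove the statement by induction on $k$, translating the claim about the ABC voting rule $f$ into a claim about its consistent generator $g$ and then invoking the consistency axiom at each step. Throughout, $A$ and $A'$ are disjoint (so that $A+A'$ is defined), and I write $x_k$ for the unique candidate in $W_k\setminus W_{k-1}$. The induction base is immediate: $f(A+A',0)=\{\emptyset\}=\{W_0\}$ by the definition of ABC voting rules and the convention $W_0=\emptyset$.

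For the inductive step, suppose $f(A+A',k-1)=\{W_{k-1}\}$ for some $k\in\{1,\dots,\ell\}$. Since $g$ generates $f$ and $f(A+A',k-1)$ is the singleton $\{W_{k-1}\}$, we have $f(A+A',k)=\{W_{k-1}\cup\{x\}\colon x\in g(A+A',W_{k-1})\}$, so it suffices to show $g(A+A',W_{k-1})=\{x_k\}$. I would obtain this from the consistency of $g$ applied to $A$, $A'$, and $W_{k-1}$; this is legitimate because $|W_{k-1}|=k-1<m$ forces $W_{k-1}\neq\mathcal{C}$.

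The two preconditions of the consistency axiom are established as follows. First I would pin down $g(A,W_{k-1})$: from $f(A,k-1)=\{W_{k-1}\}$ and the generator identity, $f(A,k)=\{W_{k-1}\cup\{x\}\colon x\in g(A,W_{k-1})\}$, and since distinct candidates $x\notin W_{k-1}$ yield distinct committees, $|f(A,k)|=|g(A,W_{k-1})|$. As $f(A,k)=\{W_k\}$ is a singleton, this forces $g(A,W_{k-1})=\{x_k\}$. The hypothesis $W_k\setminus W_{k-1}\subseteq g(A',W_{k-1})$ then gives $x_k\in g(A',W_{k-1})$, so $g(A,W_{k-1})\cap g(A',W_{k-1})=\{x_k\}\neq\emptyset$, which is the first precondition. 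The second precondition, $g(A+A',W_{k-1})\neq\emptyset$, follows from the inductive hypothesis $W_{k-1}\in f(A+A',k-1)$ together with the defining property of generator functions. Consistency then yields $g(A+A',W_{k-1})=g(A,W_{k-1})\cap g(A',W_{k-1})=\{x_k\}$, whence $f(A+A',k)=\{W_{k-1}\cup\{x_k\}\}=\{W_k\}$, closing the induction.

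There is no deep obstacle here; the crux is the translation from $f$ to $g$ and the bookkeeping needed to verify both hypotheses of the consistency axiom at every step. The one point deserving attention is the collapse of $g(A,W_{k-1})$ to the singleton $\{x_k\}$: this is precisely where the assumption that $f$ selects a \emph{unique} winning committee on $A$ (rather than a tie) is used, and it is what makes the intersection appearing in the consistency axiom both nonempty and equal to $\{x_k\}$.
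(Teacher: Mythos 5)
Your proof is correct and follows essentially the same route as the paper's: induction on $k$, pinning down $g(A,W_{k-1})$ as the singleton $\{x_k\}$ from the uniqueness of $f$'s choices on $A$, using the hypothesis $W_k\setminus W_{k-1}\subseteq g(A',W_{k-1})$ for nonemptiness of the intersection, and the generator property applied to the inductive hypothesis for $g(A+A',W_{k-1})\neq\emptyset$ before invoking consistency. The only difference is that you spell out in more detail the collapse of $g(A,W_{k-1})$ to a singleton, which the paper asserts directly.
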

	\begin{proof}
		Let $f$, $g$, $A$, $A'$, and $W_1,\dots, W_{\ell}$ be defined as in the lemma. The lemma follows by repeatedly using the consistency of $g$, which results formally in an induction. The induction basis $k=0$ is by definition true since $f(A+A',0)=f(A,0)=f(A',0)=\{\emptyset\}$. For the induction step, fix some $k\in \{0,\dots, \ell-1\}$ and assume that $f(A+A',k)=\{W_k\}$. Hence, for each $X\in\{A, A', A+A'\}$, we have that $W_k\in f(X,k)$ which entails that $g(X, W_k)\neq \emptyset$. Moreover, we assume that $W_{k+1}\setminus W_k\subseteq g(A',W_k)$ and it holds that $g(A,W_k)=W_{k+1}\setminus W_k$ since $f(A,k)=\{W_k\}$ and $f(A,k+1)=\{W_{k+1}\}$. Hence, consistency shows that $g(A+A',W_k)=W_{k+1}\setminus W_k$, which implies that $f(A+A',k+1)=\{W_{k+1}\}$. This proves the induction step and thus also the lemma. 
	\end{proof}
	
	Based on \Cref{lem:merge}, we show next that every proper and consistently committee monotone ABC voting rule $f$ satisfies a stronger variant of non-imposition since there is even for every sequence of committees $W_1,\dots, W_\ell$ a profile $A$ such that $f(A, |W_k|)=\{W_k\}$ for all $k\in \{1,\dots, \ell\}$. Even more, we may additionally assume that $f(A, \ell+1)=\{W_\ell\cup \{x\}\colon x\in\mathcal{C}\setminus W_\ell\}$.

	\begin{lemma}\label{lem:seqNI}
		Let $f$ denote a proper ABC voting rule that satisfies consistent committee monotonicity. For every $\ell\in \{1,\dots, m-1\}$ and every sequence of committees $W_1,\dots, W_\ell$, there is a profile $A$ such that $f(A,k)=\{W_k\}$ for all $k\leq \ell$ and $f(A,\ell+1)=\{W_\ell\cup \{x\}\colon x\in\mathcal{C}\setminus W_\ell\}$.
	\end{lemma}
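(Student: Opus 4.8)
The plan is to prove the statement by induction on the length $\ell$, establishing for each $\ell\in\{0,1,\dots,m-1\}$ the stronger invariant $P(\ell)$ that the conclusion holds for \emph{every} sequence of committees of length $\ell$ (reading $\ell=0$ as the empty sequence, for which the conclusion is $f(A,1)=\{\{x\}\colon x\in\mathcal{C}\}$). Throughout I fix a consistent generator $g$ of $f$, which exists by consistent committee monotonicity, and I use that committee monotonicity lets me extract, from any committee $U\in f(A,j)$, a chain $\emptyset=U_0\subseteq U_1\subseteq\dots\subseteq U_j=U$ with $U_i\in f(A,i)$ and $U_i\setminus U_{i-1}\subseteq g(A,U_{i-1})$. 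The base case $P(0)$ follows by taking, via non-imposition, a profile selecting a single candidate and symmetrizing it over all permutations of $\mathcal{C}$: the resulting profile $A^*$ is neutrality-invariant, so $g(A^*,\emptyset)$ is a permutation-invariant nonempty subset of $\mathcal{C}$ and hence equals $\mathcal{C}$, giving $f(A^*,1)=\{\{x\}\colon x\in\mathcal{C}\}$.

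For the inductive step $P(\ell)\Rightarrow P(\ell+1)$ (where $\ell+1\le m-1$) I first construct \emph{one} sequence of length $\ell+1$ with the desired property. I pick an arbitrary $U\in\mathcal{W}_{\ell+1}$ and use non-imposition to obtain $A^1$ with $f(A^1,\ell+1)=\{U\}$; extracting a chain $\tilde W_1,\dots,\tilde W_\ell\subseteq U$ from $A^1$ as above, the uniqueness of $U$ forces $g(A^1,\tilde W_\ell)=\{z\}$ where $U=\tilde W_\ell\cup\{z\}$. Applying the induction hypothesis to $\tilde W_1,\dots,\tilde W_\ell$ yields $A^2$ with $f(A^2,k)=\{\tilde W_k\}$ for $k\le\ell$ and $f(A^2,\ell+1)=\{\tilde W_\ell\cup\{x\}\colon x\in\mathcal{C}\setminus\tilde W_\ell\}$, so in particular $g(A^2,\tilde W_\ell)=\mathcal{C}\setminus\tilde W_\ell$. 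Since $A^1$ is compatible with the chain uniquely selected by $A^2$, \Cref{lem:merge} gives $f(A^1+A^2,k)=\{\tilde W_k\}$ for $k\le\ell$, and consistency of $g$ at $\tilde W_\ell$ then yields $g(A^1+A^2,\tilde W_\ell)=g(A^1,\tilde W_\ell)\cap g(A^2,\tilde W_\ell)=\{z\}$, i.e. $f(A^1+A^2,\ell+1)=\{U\}$. Writing $V_k=\tilde W_k$ for $k\le\ell$ and $V_{\ell+1}=U$, the profile $B:=A^1+A^2$ thus uniquely selects the chain $V_1,\dots,V_{\ell+1}$.

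It remains to force all extensions of $V_{\ell+1}$ at level $\ell+2$ and to pass from this single sequence to an arbitrary one. For the first point I symmetrize $B$ over the group of candidate permutations fixing $V_{\ell+1}$ pointwise; each copy still uniquely selects $V_1,\dots,V_{\ell+1}$ by neutrality, so repeated use of \Cref{lem:merge} shows the symmetrized profile $B^*$ does too, while its invariance makes $g(B^*,V_{\ell+1})$ a permutation-invariant nonempty subset of $\mathcal{C}\setminus V_{\ell+1}$, hence all of it. For the second point, given any target sequence $W_1,\dots,W_{\ell+1}$, I choose a permutation $\tau$ of $\mathcal{C}$ with $\tau(V_k)=W_k$ for all $k$ — such a $\tau$ exists because both are nested chains of the same sizes — and apply neutrality to $B^*$.

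The main obstacle, and the reason for the detour through ``one sequence plus neutrality,'' is precisely that non-imposition only pins down the \emph{top} committee $U$ uniquely, while the chain it realizes below $U$ is not under my control and need not coincide with any prescribed prefix $W_1,\dots,W_\ell$; merging a profile for a fixed target prefix with a witness for $U$ would therefore violate the compatibility hypothesis of \Cref{lem:merge}. Establishing only the existence of \emph{some} length-$(\ell+1)$ sequence, built on whatever chain $A^1$ happens to produce, and then transporting the result to arbitrary sequences via neutrality circumvents this. Keeping the ``all extensions at the next level'' clause inside the invariant is what guarantees $g(A^2,\tilde W_\ell)=\mathcal{C}\setminus\tilde W_\ell$, which is exactly what the consistency step needs to intersect down to $\{z\}$.
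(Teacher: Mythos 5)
Your proof is correct and follows essentially the same route as the paper's: induction on the sequence length, a non-imposition witness for the top committee combined with chain extraction, the induction hypothesis, \Cref{lem:merge}, and consistency of $g$ to pin the next choice down to a single candidate, followed by symmetrization over permutations fixing the top committee pointwise to obtain the all-extensions clause. The one point where you go beyond the paper is the explicit final transport step --- choosing $\tau$ with $\tau(V_k)=W_k$ and invoking neutrality of $f$ --- which the paper leaves implicit: as written, its inductive step only constructs a profile for whichever chain the non-imposition witness happens to realize (reusing the notation $W_1,\dots,W_\ell$ for it), so your explicit reduction from ``every sequence'' to ``some sequence'' makes the argument, if anything, more complete.
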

	\begin{proof}
		Let $f$ denote a proper ABC voting rule that satisfies consistent committee monotonicity and let $g$ denote a corresponding consistent generator function. We will inductively show that for every $\ell\in \{1,\dots, m-1\}$ and every sequence of committees $W_1,\dots, W_\ell$ there is a profile $A$ such that $f(A, k)=\{W_k\}$ for $k\in \{1,\dots,\ell\}$ and $f(A, W_{\ell})=\{W_ell\cup \{x\}\colon x\in\mathcal{C}\setminus W_\ell\}$. 
		For the induction basis $\ell=1$, observe that non-imposition shows that for every committee $W=\{x\}\in\mathcal{W}_1$ a profile $A$ such that $f(A, 1)=\{W\}$. Now, let $\tau:\mathcal{C}\rightarrow\mathcal{C}$ denote a permutation such that $\tau(x)=x$. By the neutrality of $f$, it follows that $f(\tau(A), 1)=\{W\}$ for every such permutation. This means that $g(A,\emptyset)=g(\tau(A),\emptyset)=\{x\}$ and, since $g(A+\tau(A),\emptyset)$ cannot be empty, consistency of $g$ entails that $g(A+\tau(A),\emptyset)=\{x\}$. Now, let $A^*$ denote the profile consisting of $\tau(A)$ for every permutation $\tau$ with $\tau(x)=x$. Following the above reasoning, it holds that $g(A^*,\emptyset)=\{x\}$ and thus, $f(A^*, 1)=\{W\}$. On the other hand, all candidates $y\in\mathcal{C}\setminus W$ are completely symmetric and thus, $f(A^*,2)=\{\{x,y\}\colon y\in\mathcal{C}\setminus W\}$, which proves the induction basis.  
		
		Next, we assume that the lemma holds for a fixed $\ell\in \{1,\dots, m-2\}$ and prove it for $\ell+1$. 
		For this, consider an arbitrary committee $W$ of size $\ell+1$. By non-imposition, there is a profile $A$ such that $f(A, \ell+1)=\{W\}$. Moreover, by consistent committee monotonicity, there is an order over the candidates $c_1, \dots, c_{\ell+1}$ in $W$ such that $W_k=\{c_1,\dots, c_k\}\in f(A,k)$ and $c_{k+1}\in g(A,W_k)$ for all $k \in \{1,\dots, \ell\}$. Observe that the committees $W_1, \dots, W_\ell$ form a sequence of profiles of length $\ell$. Thus, the induction hypothesis proves that there is a profile $A'$ such that $f(A', k)=\{W_k\}$ for all $k\in \{1,\dots, \ell\}$ and $f(A', \ell+1)=\{W_\ell\cup \{x\}\colon x\in\mathcal{C}\setminus W_{\ell}\}$. In particular, this means that $g(A', W_k)=\{c_{k+1}\}$ for all $k<\ell$ and $g(A', W_\ell)=\mathcal{C}\setminus W_\ell$. Using \Cref{lem:merge}, we can therefore infer that $f(A+A', k)=\{W_k\}$ for all $k\in \{1,\dots, \ell\}$. Furthermore, $|f(A,\ell+1)|=1$ implies that $g(A, W_\ell)=\{c_{\ell+1}\}$. Thus, the consistency of $g$ proves that $g(A+A', W_{\ell})=g(A, W_\ell)\cap g(A', W_{\ell})=\{c_{\ell+1}\}$. We derive therefore that $f(A'+A, \ell+1)=\{W_{\ell+1}\}$ and hence, $f(A+A', k)=\{W_k\}$ for all $k\in \{1,\dots, \ell+1\}$. 
		
		It remains to construct a profile $B$ such that $f(B, k)=\{W_k\}$ for all $k\in \{1,\dots, \ell+1\}$ and $f(B, \ell+2)=\{W_{\ell+1}\cup \{x\}\colon x\in\mathcal{C}\setminus W_{\ell+1}\}$. For doing so, define $B^\tau=\tau(A+A')$ as the profile derived from $A+A'$ by permuting the candidates according to $\tau$. If $\tau(x)=x$ for all $x\in W_{\ell+1}$, neutrality shows that $f(B^\tau, k)=\{W_k\}$ for all $k\in \{1,\dots, \ell+1\}$. Now define $B$ as the profile that precisely consists of the profiles $B^\tau$ for all permutations $\tau:\mathcal{C}\rightarrow\mathcal{C}$ such that $\tau(x)=x$ for all $x\in W_{\ell+1}$. A repeated application of \Cref{lem:merge} proves that $f(B, k)=\{W_k\}$ for all $k\in \{1,\dots, \ell+1\}$ since all $B^\tau$ agree on these committees. On the other hand, all candidates $c\in\mathcal{C}\setminus W_{\ell+1}$ are completely symmetric in $B$. Hence, neutrality and anonymity show that if $W_{\ell+1}\cup \{x\}\in f(B, \ell+1)$ for some $x\in\mathcal{C}\setminus W_{\ell+1}$, then the same holds for all $x\in\mathcal{C}\setminus W_{\ell+1}$. 
		Finally, consistent committee monotonicity shows that there only such committees can be chosen since $f(B, \ell+1)=\{W_{\ell+1}\}$. Thus, 
		$f(B, \ell+2)=\{W_{\ell+1}\cup \{x\}\colon x\in\mathcal{C}\setminus W_{\ell+1}\}$, which proves the induction step. 
	\end{proof}
	
	Note that \Cref{lem:seqNI} also allows us to construct for every sequence of committees $W_1,\dots, W_m$ (i.e., a sequence with length $m$) a profile $A$ such that $f(A, k)=\{W_k\}$ for all $k\in \{1,\dots,m\}$. The reason for this is that every sequence of length $m-1$ automatically extends to such a sequence since $\mathcal{C}$ is the only committee of size $m$.
	
	Analogous to \Cref{lem:seqNI}, we strengthen next continuity by showing that for all integers $\ell\in \{1,\dots, m\}$ and profiles $A, A'\in\mathcal{A}^*$ such that $|f(A,k)|=1$ for all $k\in \{1,\dots, \ell\}$, there is an integer $j$ such that $f(jA+A',k)=f(jA+A',k)$ for all $k\in \{1,\dots, \ell\}$. Or, more informally, a sufficient majority can enforce the outcome for multiple committee sizes at once.

	\begin{lemma}\label{lem:seqCon}
		Let $f$ denote a proper ABC voting rule that satisfies consistent committee monotonicity. Given two profiles $A$, $A'$ and an integer $\ell\in\{1,\dots, m\}$ such that $|f(A,k)|=1$ for all $k\in \{1,\dots, \ell\}$, there is an integer $j$ such that $f(jA+A',k)=f(A,k)$ for all $k\in \{1,\dots, \ell\}$. 
	\end{lemma}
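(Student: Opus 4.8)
The plan is to lift the generator-level continuity axiom, which controls a single committee size at a time, to all sizes $k\le\ell$ simultaneously, using the consistency of the generator as the glue. By \Cref{prop:niceGenerator}, since $f$ is proper and consistently committee monotone, I may fix a proper, consistent, and \emph{complete} generator function $g$ that generates $f$. The assumption $|f(A,k)|=1$ for all $k\le\ell$ means there is a unique sequence of committees $W_0\subseteq W_1\subseteq\dots\subseteq W_\ell$ with $f(A,k)=\{W_k\}$, and by the way $g$ generates $f$ each step is forced: writing $x_k$ for the single candidate in $W_k\setminus W_{k-1}$, we have $g(A,W_{k-1})=\{x_k\}$ for every $k\in\{1,\dots,\ell\}$ (note $W_{k-1}\neq\mathcal{C}$ since $k-1\le \ell-1\le m-1$).

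First I would apply the continuity of $g$ at each level separately. Fix $k\le\ell$; since $|g(A,W_{k-1})|=1$ and completeness guarantees $g(A',W_{k-1})\neq\emptyset$, continuity yields an integer $j_k$ with $g(j_kA+A',W_{k-1})=g(A,W_{k-1})=\{x_k\}$. The crucial additional step is a monotonicity claim: $g(jA+A',W_{k-1})=\{x_k\}$ for every $j\ge j_k$. This follows by induction on $j$ from consistency: writing $(j+1)A+A'=A+(jA+A')$, the two summands both have generator output $\{x_k\}$ at $W_{k-1}$ (the latter by the induction hypothesis), their intersection is nonempty, and completeness gives $g((j+1)A+A',W_{k-1})\neq\emptyset$, so consistency forces $g((j+1)A+A',W_{k-1})=\{x_k\}$.

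With the monotonicity claim in hand, I would set $j=\max_{k\le\ell}j_k$. Then $g(jA+A',W_{k-1})=\{x_k\}$ holds simultaneously for every $k\in\{1,\dots,\ell\}$. Finally I would conclude by induction on the committee size: $f(jA+A',0)=\{\emptyset\}=\{W_0\}$, and if $f(jA+A',k-1)=\{W_{k-1}\}$ then, since $g$ generates $f$, $f(jA+A',k)=\{W_{k-1}\cup\{x\}\colon x\in g(jA+A',W_{k-1})\}=\{W_{k-1}\cup\{x_k\}\}=\{W_k\}$. Hence $f(jA+A',k)=\{W_k\}=f(A,k)$ for all $k\le\ell$, as desired.

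The main obstacle is the step combining the per-level thresholds into one: the continuity axiom only asserts existence of \emph{some} $j_k$ that works at committee $W_{k-1}$, and a priori a larger number of copies of $A$ could upset the choice at that level, which would prevent taking a maximum over $k$. The monotonicity claim---that enlarging the majority never undoes an already-enforced generator choice---is exactly what rules this out, and it is where consistency (together with completeness) does the real work.
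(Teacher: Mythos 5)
Your argument is internally coherent, but it rests on \Cref{prop:niceGenerator}, and that creates a genuine circularity: in the paper, the present lemma is itself an ingredient of the proof of \Cref{prop:niceGenerator}. Concretely, the paper completes the consistent generator by setting $\hat g(A,W)=g(jA_W+A,W)$ for a suitable profile $A_W$, where $j$ is ``the smallest integer such that $f(jA_W+A,k)=\{W_k\}$ for all $k\in\{1,\dots,|W|\}$''---and the existence of such a $j$ is exactly this lemma; the lemma is invoked again in \Cref{lem:A+N}, which establishes that the completed generator is proper. So you may not assume a proper, consistent, \emph{complete} generator when proving this statement. This is not a cosmetic point: the two facts your proof actually uses---that $g(A',W_{k-1})\neq\emptyset$ for an \emph{arbitrary} profile $A'$ (completeness), and that $g$ satisfies the generator-level continuity axiom---are not available from the definition of consistent committee monotonicity, which only provides a consistent generator with $g(A,W)\neq\emptyset$ for committees $W$ that are actually chosen in $A$. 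Your very first step, applying continuity of $g$ at $W_{k-1}$ against the profile $A'$, already fails without these.

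The paper avoids this by never touching properties of $g$ beyond consistency: it inducts on $\ell$, using the continuity of $f$ itself (part of properness) to force the committee of size $\ell+1$ in $jA+A''$, where $A''=j_\ell A+A'$ is supplied by the induction hypothesis, and then \Cref{lem:merge}---which needs only consistency of $g$ at committees that are chosen, where nonemptiness is free from the generation property---to verify that the sizes $k\le\ell$ are not disturbed; the thresholds then combine additively as $j_{\ell+1}=j+j_\ell$. Your ``monotonicity'' device (enlarging the majority never undoes an already-enforced choice, via consistency plus completeness) plays the same stabilizing role as \Cref{lem:merge} and is a nice idea; the proof could be repaired by replacing every appeal to completeness or continuity of $g$ with an appeal to continuity of $f$ and \Cref{lem:merge}. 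As written, however, the dependence on \Cref{prop:niceGenerator} makes the argument circular.
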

	\begin{proof}
	Let $f$ denote a proper ABC voting rule that satisfies consistent committee monotonicity, and consider two profiles $A, A'\in\mathcal{A}^*$ and an integer $\ell\in \{1,\dots, m\}$ such that $|f(A,k)|=1$ for all $k\leq \ell$. 
	We will prove the lemma by induction on $\ell$. First, note that the induction basis $\ell=1$ follows immediately from the continuity of $f$, which states that there is an integer $j_1$ such that $f(j_1A+A', 1)=f(A,1)$. Now, assume that the lemma holds up to a fixed $\ell\in \{1,\dots, m-1\}$, i.e., there is an index $j_\ell$ such that $f(A'', k)=f(A,k)$ for all $k\leq \ell$, where $A''=j_\ell A + A'$. Because of continuity, there is an integer $j$ such that $f(j A+A'', \ell+1)=f(A, \ell+1)$. On the other hand, we can use \Cref{lem:merge} to show that $f(jA+A'', k)=f(A'',k)=f(A,k)$ for all $k\leq \ell$ because $f(A, k)=f(A'',k)$ for all $k\leq \ell$. In summary, this means that $f(jA+A'', k)=f(A,k)$ for all $k\leq \ell+1$. Finally, note that $j A+A''=(j+j_\ell)A +A'$, so the integer $j_{\ell+1}=j+j_\ell$ proves the induction step and thus also the lemma. 
	\end{proof}
	
	Note that \Cref{lem:seqNI} and \Cref{lem:seqCon} are important tools for the proofs of most of our results. Next, we will use these insights to show that every consistent and complete generator function of a proper and consistently committee monotone ABC voting rule must be proper itself.

	\begin{lemma}\label{lem:A+N}
		Every complete and consistent generator function of a proper and consistently committee monotone ABC voting rule is proper. 
	\end{lemma}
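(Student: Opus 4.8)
The plan is to derive each of the four properties defining a proper generator function for $g$ from the corresponding property of the proper rule $f$ that $g$ generates, with everything funneled through one \emph{helper identity} that recovers $g(A,W)$ from $f$. Fix $W\neq\mathcal C$ and a sequence $W_1,\dots,W_{|W|}$ ending in $W_{|W|}=W$. By \Cref{lem:seqNI} there is a profile $B$, whose voters I may take disjoint from any prescribed profile by anonymity of $f$, with $f(B,k)=\{W_k\}$ for $k\le |W|$ and $f(B,|W|+1)=\{W\cup\{y\}\colon y\in\mathcal C\setminus W\}$; in particular $g(B,W)=\mathcal C\setminus W$. Since $W\neq\mathcal C$, a short induction using \Cref{lem:merge} (to keep $W$ the unique winner) followed by consistency of $g$ shows $g(jB,W)=\mathcal C\setminus W$ for all $j$. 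Given any profile $A$, \Cref{lem:seqCon} yields $j$ with $f(jB+A,k)=\{W_k\}$ for all $k\le |W|$, so $W$ is the unique size-$|W|$ winner of $jB+A$ and $g(jB+A,W)\neq\emptyset$; as completeness gives $g(A,W)\neq\emptyset$ and $g(jB,W)\cap g(A,W)=g(A,W)$, consistency of $g$ yields
\[
  g(A,W)=g(jB+A,W)=\{y\in\mathcal C\setminus W\colon W\cup\{y\}\in f(jB+A,|W|+1)\},
\]
the last equality holding because $f(jB+A,|W|)=\{W\}$ forces every size-$(|W|+1)$ winner to extend $W$. This identity expresses $g(A,W)$ purely in terms of $f$.

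\emph{Non-imposition} is then immediate: \Cref{lem:seqNI} applied to the sequence with $W_{|W|}=W$ and $W_{|W|+1}=W\cup\{x\}$ gives a profile with $f(A,|W|)=\{W\}$ and $f(A,|W|+1)=\{W\cup\{x\}\}$, hence $g(A,W)=\{x\}$. For \emph{neutrality}, I would observe that $\tau(B)$ is a helper for $\tau(W)$ (its $f$-values are the $\tau$-images of those of $B$ by neutrality of $f$), that the same $j$ works since $f(j\tau(B)+\tau(A),k)=\tau(f(jB+A,k))$, and that $f(j\tau(B)+\tau(A),|W|+1)=\tau\!\left(f(jB+A,|W|+1)\right)$; feeding this into the helper identity gives $g(\tau(A),\tau(W))=\tau(g(A,W))$. \emph{Anonymity} is cleaner still: for a voter bijection $\pi$, the profiles $A$ and $\pi(A)$ carry the same multiset of ballots, so, choosing the voters of $B$ disjoint from both, $jB+A$ and $jB+\pi(A)$ do as well; anonymity of $f$ makes their $f$-values coincide, and the helper identity transfers this to $g(A,W)=g(\pi(A),W)$.

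The hard part will be \emph{continuity}, because consistency of $g$ alone does not suffice: $g(A',W)$ need not meet the singleton $g(A,W)$, so $A$ and $A'$ cannot be merged directly. Instead I would route through $f$ once more. Assuming $g(A,W)=\{x\}$, the helper identity produces $D=jB+A$ with $f(D,k)=\{W_k\}$ for $k\le |W|$ and $f(D,|W|+1)=\{W\cup\{x\}\}$, all singletons. Applying \Cref{lem:seqCon} to $D$ (with $\ell=|W|+1$) and the profile $A'$ makes $W\cup\{x\}$ the unique size-$(|W|+1)$ winner of a profile of the form $\tilde B+j'A+A'$, where $\tilde B$ is a bundle of copies of $B$ with $g(\tilde B,W)=\mathcal C\setminus W$. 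Reading $g$ off $f$ on this profile gives $g(\tilde B+j'A+A',W)=\{x\}$, and since $g(\tilde B,W)\cap g(j'A+A',W)=g(j'A+A',W)\neq\emptyset$ by completeness, a final application of consistency strips $\tilde B$ and yields $g(j'A+A',W)=\{x\}=g(A,W)$, which is exactly continuity. Throughout, the two points to watch are which auxiliary profile serves as the enforcing ``majority'' in \Cref{lem:seqCon} and that every intersection invoked in a consistency step is nonempty; both are guaranteed by the helper identity together with completeness of $g$.
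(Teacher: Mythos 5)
Your proposal is correct and follows essentially the same route as the paper: both rest on \Cref{lem:seqNI} and \Cref{lem:seqCon} to build helper profiles whose $f$-values pin down $g$, and on consistency plus completeness to strip those helpers (your ``helper identity'' $g(A,W)=g(jB+A,W)$ is exactly the computation the paper performs inside each of its claims, and your two-stage continuity argument mirrors the paper's Claim~4 verbatim). The only differences are organizational: you isolate the identity once and argue anonymity and neutrality directly, where the paper re-derives it per claim and argues by contradiction.
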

	\begin{proof}
		Let $f$ denote a proper and consistently committee monotone ABC voting rule and assume that $g$ is a generator function of $f$ that is both complete and consistent. We will show that $g$ satisfies anonymity, neutrality, non-imposition, and continuity and is thus proper.\medskip
		
		\textbf{Claim 1: $g$ is non-imposing.} 
		
		First, we show that $g$ is non-imposing and consider thus a committee $W$ with $|W|<m$ and a candidate $x\not\in W$. Consider an arbitrary order of the candidates $c_1, \dots, c_\ell\in W$ and define $W_k=\{c_1, \dots, c_k\}$ for all $k\leq \ell$ and $W_{\ell+1}=W_\ell\cup \{x\}$. In particular, $W=W_\ell$ and $W_{\ell+1}=W\cup \{x\}$. By \Cref{lem:seqNI}, there is a profile $A$ such that $f(A,k)=W_k$ for all $k\leq \ell+1$. This entails that $g(A, W)=\{x\}$ and thus proves that $g$ is non-imposing.\medskip
		
		\textbf{Claim 2: $g$ is anonymous.}
		
		Assume for contradiction that $g$ fails anonymity, which means that there is a profile $A$, a committee $W\neq \mathcal{C}$, and a permutation $\pi:\mathbb{N}\rightarrow\mathbb{N}$ such that $g(A,W)\neq g(\pi(A),W)$. As first step, we consider an arbitrary order of the candidates $c_1, \dots, c_\ell\in W$ and define $W_k=\{c_1, \dots, c_k\}$ for all $\in \{1,\dots, \ell\}$. In particular, $W=W_\ell$. By \Cref{lem:seqNI}, there is a profile $A'$ such that $f(A',k)=\{W_k\}$ for all $k\in \{1,\dots, \ell\}$ and $f(A', \ell+1)=\{W_\ell\cup\{x\}\colon x\in\mathcal{C}\setminus W_{\ell}\}$. The last point means for $g$ that $g(A', W_\ell)=\mathcal{C}\setminus W_\ell$. 
		
		Furthermore, by \Cref{lem:seqCon}, there is an integer $j$ such that $f(jA'+A,k)=\{W_k\}$ for all $k\in \{1,\dots, \ell\}$. Thus, the completeness and consistency of $g$ imply that $g(jA'+A, W_\ell)=g(jA',W_\ell)\cap g(A, W_\ell)=g(A, W_\ell)$. Now, since $g$ generates $f$, we infer therefore that $f(jA'+A, \ell+1)=\{W_\ell\cup x\colon x\in g(A, W_\ell)\}$. 
		
		Finally, consider the profile $\pi(jA'+A)$. By the anonymity of $f$, it follows that $f(\pi(jA'+A),k)=f(\pi(jA'), k)=\{W_k\}$ for all $k\in \{1,\dots, \ell\}$ and $f(\pi(jA'), \ell+1)=\{W_\ell\cup\{x\}\colon x\in\mathcal{C}\setminus W\}$. This entails again that $g(\pi(jA'), W_\ell)=\mathcal{C}\setminus W_\ell$. Hence, an analogous reasoning as for $jA'+A$ shows that $g(\pi(jA'+A), W_\ell)=g(\pi(jA'), W_\ell)\cap g(\pi(A), W_\ell)=g(\pi(A), W_\ell)$. This implies that $f(\pi(jA'+A), \ell+1)=\{W_\ell\cup x\colon x\in g(\pi(A), W_\ell)\}$. However, we have by assumption that $g(A, W_\ell)\neq g(\pi(A), W_\ell)$ and thus $f(jA'+A, \ell+1)\neq f(\pi(jA'+A), \ell+1)$. This conflicts with the anonymity of $f$ and shows therefore that the assumption that $g$ fails anonymity is false.\medskip
		
		\textbf{Claim 3: $g$ is neutral}
		
		We assume for contradiction that $g$ is not neutral, which means that there is a profile $A$, a committee $W\neq\mathcal{C}$, and a permutation $\tau:\mathcal{C}\rightarrow\mathcal{C}$ such that $g(\tau(A), \tau(W))\neq \tau(g(A,W))$. Analogous to the last claim, let $c_1, \dots, c_\ell$ denote the candidates in $W$ and define $W_k=\{c_1,\dots, c_k\}$. Thus, $W=W_\ell$. By \Cref{lem:seqNI}, there is a profile $A'$ such that $f(A',k)=\{W_k\}$ for all $k\in \{1,\dots, \ell\}$ and $f(A', \ell+1)=\{W_\ell\cup\{x\}\colon x\in\mathcal{C}\setminus W_{\ell}\}$. This means again that $g(A', W_{\ell})=\mathcal{C}\setminus W_k$. A completely analogous reasoning as in Claim 2 shows now that there is an integer $j$ such that $f(jA'+A, k)=\{W_k\}$ for all $k\in \{1,\dots, \ell\}$ and $f(jA'+A,\ell+1)=\{W_{\ell}\cup x\colon x\in g(A,W)\}$.
		
		Next, consider the profile $\tau(jA'+A)$. 
		By the neutrality of $f$, we have that $f(\tau(jA'+A), k)=f(\tau(jA',k)=\{\tau(W_k)\}$ for all $k\in \{1,\dots, \ell\}$. 
		Moreover, this axiom also shows that $f(\tau(jA'), \ell+1)=\{\tau(W_\ell\cup \{x\})\colon x\in\mathcal{C}\setminus W_\ell\}$. 
		This implies that $g(\tau(jA'), \tau(W_\ell))=\mathcal{C}\setminus \tau(W_\ell)=\tau(\mathcal{C}\setminus W_\ell)$. 
		In turn, we infer from consistency and completeness that $g(\tau(jA'+A), \tau(W_\ell))=g(\tau(jA'), \tau(W_\ell))\cap g(\tau(A), \tau(W_\ell))=g(\tau(A), \tau(W_\ell))$. 
		Since $g$ generates $f$, this means that $f(\tau(jA'+A),k)=\{\tau(W_\ell)\cup \{x\}\colon x\in g(\tau(A), \tau(W_\ell))\}$. 
		However, since $g(\tau(A), \tau(W_\ell))\neq \tau(g(A, W_\ell))$, this means that $f(\tau(jA'+A),k)=\{\tau(W_\ell)\cup \{x\}\in g(\tau(A), \tau(W_\ell))\}\neq \{\tau(W_\ell\cup\{x\})\colon x\in g(A, W_\ell)\}=\tau(f(jA'+A, \ell+1))$.
		This contradicts that $f$ satisfies neutrality and thus, the assumption that $g$ is not neutral must be wrong.\medskip
		
		\textbf{Claim 4: $g$ is continuous.}
		
		Finally, we show that $g$ is continuous and consider therefore two profiles $A$ and $A'$ and a committee $W$ such that $g(A,W)=\{c\}$ for some candidate $c\in\mathcal{C}\setminus W$. We need to show that there is an integer $j$ such that $g(jA+A', W)=g(A,W)$. For doing so, let $c_1, \dots, c_\ell$ denote the candidates in $W$ and define $W_k=\{c_1,\dots, c_k\}$ for all $k\in \{1,\dots, \ell\}$. Using again \Cref{lem:seqNI}, there is a profile $A''$ such that $f(A'', k)=\{W_k\}$ for all $k\in \{1,\dots, \ell\}$ and $f(A'', \ell+1)=\{W_{\ell}\cup x\colon x\in \mathcal{C}\setminus W_{\ell}\}$. By \Cref{lem:seqCon}, we can find an integer $j$ such that $f(jA''+A,k)=f(A'',k)$ for all $k\in \{1,\dots, \ell\}$. On the other hand, we infer that $g(jA''+A, W_\ell)=g(jA'', W_{\ell})\cap g(A, W_{\ell})=g(A, W_\ell)=\{c\}$ due to the consistency and completeness of $g$. Since $g$ generates $f$, this means that $f(jA''+A, \ell+1)=\{W_{\ell}\cup\{c\}\}$.
		
		Now, using again \Cref{lem:seqCon}, we can find another integer $j'$ such that $B=j'(jA''+A)+A'$ and $f(B, k)=f(jA''+A, k)$ for all $k\in \{1,\dots, \ell+1\}$. In particular, this implies that $g(B, W_\ell)=g(A, W_\ell)=\{c\}$. We prove that $g$ is continuous by showing that $g(B, W_\ell)=g(j'A+A', W_{\ell})$. For doing so, let $j''=j\cdot j'$. It clearly holds that $g(j'' A'', W_{\ell})=\mathcal{C}\setminus W_{\ell}$ and therefore, $g(j''A''+(j'A+A'), W_{\ell})=g(j''A'', W_{\ell})\cap g(j'A+A', W_{\ell})=g(j'A+A', W_{\ell})$. Hence, $g(j'A+A',W_{\ell})=g(A,W_{\ell})$, which proves that $g$ is continuous.
	\end{proof}
	
	Finally, we have all ingredients to prove \Cref{prop:niceGenerator}.
	
	\niceGenerator*
	\begin{proof}
		We show both directions independently from each other.\medskip
		
		\textbf{Claim 1: If $f$ is generated by a proper, consistent, and complete generator function $g$, it is a proper and consistently committee monotone ABC voting rule.}
		
		Let $g$ denote a proper, consistent, and complete generator function $g$, and let $f$ denote the ABC voting rule generated by $g$. Note that the completeness of $g$ ensures that it indeed induces an ABC voting rule. Now, $f$ is by definition consistently committee monotone since $g$ is consistent. Hence, it remains to show that $f$ is anonymous, neutral, non-imposing, and continuous.\smallskip
		
		\emph{Anonymity:} First, we show that $f$ is anonymous. For doing so, consider an arbitrary profile $A$ and a permutation $\pi:\mathbb{N}\rightarrow\mathbb{N}$. We will show by an induction on the committee size $k$ that $f(A,k)=f(\pi(A),k)$ for all $k\in \{1,\dots, m\}$. Hence, note that the induction basis $k=0$ is trivial since $f(A,0)=\{\emptyset\}=f(\pi(A),0)$ by definition. Next, fix some $k\in \{0,\dots, m-1\}$ and suppose that $f(A,k)=f(\pi(A),k)$. Since $g$ is anonymous, we have that $g(A,W)=g(\pi(A), W)$ for all $W\in f(A,k)=f(\pi(A),k)$, which implies that $f(A, k+1)=f(\pi(A), k+1)$. This proves the induction step and thus shows that $f$ is anonymous.\smallskip
		
		\emph{Neutrality:} Our next goal is to show that $f$ is neutral. Hence, consider again an arbitrary profile $A$ and a permutation $\tau:\mathcal{C}\rightarrow\mathcal{C}$. we use again an induction on the committee size $k$ to show that $f(\tau(A),k)=\tau(f(A,k))$. The induction basis $k=0$ is trivial because $f(A,0)=\{\emptyset\}=f(\tau(A), 0)$ by definition. Hence, assume that there is fixed $k\in \{0,\dots, m-1\}$ such that $f(\tau(A), k)=\tau(f(A,k))$. Since $g$ is neutral, it follows that $g(\tau(A), \tau(W))=\tau(g(A,W))$ for all $W\in f(A,k)$. This means that $f(\tau(A), k+1)=\{W\cup \{x\}\colon W\in f(\tau(A),k), x\in g(\tau(A), W)\}=\{\tau(W)\cup \{x\}\colon W\in f(A,k), x\in g(\tau(A),\tau(W))\}=\{\tau(W\cup \{x\})\colon W\in f(A,k), x\in g(A,W)\}=\tau(f(A,k+1))$, which proves that $f$ is neutral.\smallskip
		
		\emph{Continuity:} As third point, we show that $f$ is continuous. For doing so, assume that for every two profiles $A$ and $A'$ and every committee $W$, there is an integer $j$ such that $g(jA+A',W)\subseteq g(A,W)$; we will prove that this claim is correct later on. Next, consider two profiles $A$ and $A'$ and a committee size $k$ such that $|f(A, k)|=1$. Our goal is to show that there is an integer $j$ such that $f(jA+A',k)=f(A,k)$. 
		
		For doing so, let $F=\bigcup_{\ell=0}^{k-1} f(A,\ell)$ denote the set of all committees of size at most $k-1$ that $f$ chooses for $A$. By our auxiliary claim, there is for every $W\in F$ a integer $j_W$ such that $g(j_WA+A',W)\subseteq g(A,W)$. Now, let $j^*$ denote the maximum among these integers and note that consistency and completeness imply that $g(j^*A+A',W)=g(j_W A+A',W)\cap g((j^*-j_W) A, W)\subseteq g(A,W)$ for all committees $W\in F$ with $j_W<j^*$. This means that $g(j^*A+A',W)\subseteq g(A,W)$ for all $W\in F$, which clearly implies that $f(j^*A+A',\ell)\subseteq f(A,\ell)$ for all $\ell\in \{1,\dots, k\}$. Since $f$ is an ABC voting rule, $f(j^*A+A',k)\neq \emptyset$. Thus, $|f(A,k)|=1$ entails $f(j^*A+A',k)=f(A,k)$, which proves that $f$ is continuous. 
		
		It remains to show our auxiliary claim that for all profiles $A$ and $A'$ and all committees $W$, there is an integer $j$ such that $g(jA+A', W)\subseteq g(A,W)$. Assume for contradiction that this is not the case, which means that there are profiles $A$, $A'$ and a committee $W$ such that $g(jA+A',W)\not\subseteq g(A,W)$ for all $j\in\mathbb{N}$. First, note that this requires that $|g(A,W)|\geq 2$ as otherwise, this assumption directly contradicts the continuity of $g$. Next, let $I_c=\{j\in\mathbb{N}\colon c\in g(jA+A',W)\}$ denote the set of integers $j$ such that $g$ chooses $c$ for $jA+A'$. It must hold that $I_c=\emptyset$ for all $c\in g(A,W)$: if there is $j\in I_c$, consistency implies for $g((j+1)A+A', W)$ that $g((j+1)A+A,W)=g(A,W)\cap g(jA+A',W)\subseteq g(A,W)$ since $g(A,W)\cap g(jA+A',W)\neq \emptyset$. This, however, contradicts our assumptions. 
		
		Next, let $c\in \mathcal{C}\setminus g(A,W)$ denote a candidate such that $I_c$ contains infinitely many elements; such a candidate must exist since there is only a finite number of candidates but $\mathbb{N}$ is infinite. Moreover, let $a$ denote a candidate in $g(A,W)$. For each other candidate $x\in g(A,W)\setminus \{a\}$, we define $\tau^x$ as the permutation that maps $x$ to $c$, $c$ to $x$, and every other candidate to itself. By neutrality, we have that $c\in g(\tau^x(A),\tau^x(W))$, $x\not\in g(\tau^x(A),\tau^x(W))$. Now, note that $\tau^x(W)=W$ since $x\in g(A,W)$ and $c\in g(jA+A',W)$ for $j\in I_c$ imply that $c,x\not\in W$. Moreover, consistency shows that $c\in g(\tau^x(jA), W)$, $x\not\in g(\tau^x(jA),W)$ for all integers $j\in\mathbb{N}$. 
		
		Finally, consider the profiles $B^j$ which consists of $j$ copies of $A$, $j$ copies of $\tau^x(A)$ for every $x\in g(A,W)\setminus \{a\}$, and one copy of $A'$. Now, since $c\in g(jA+A',W)$ for all $j\in I_c$ and $c\in g(\tau^x(jA), W)$ for all $j\in \mathbb{N}$ and $x\in g(A,W)\setminus \{a\}$, consistency shows that $g(B^j, W)=g(jA+A',W)\cap \bigcap_{x\in g(A,W)\setminus \{a\}} g(\tau^x(jA), W)=\{c\}$ for $j\in I_c$. In particular, the last equality holds since $g(jA,W)\cap g(jA+A',W)=\emptyset$ and $\tau^x(A)$ only swaps $c$ and $x$. 
		
		On the other side, it holds for the profile $B$, which consists only of a single copy of $A$ and one copy of $\tau^x(A)$ for every $x\in g(A,W)\setminus \{a\}$, that $g(B,W)=\{a\}$ because of consistency. Hence, continuity implies that there is an index $j^*$ such that $g(j^*B+A', W)=\{a\}$. However, $I_c$ is infinite and there is thus an index $j\in I_c$ with $j>j^*$ such that $c\in g(B^j,W)=g(j B+A',W)$. This contradicts consistency: $g(j B+A', W)=g((j-j^*) B, W)\cap g(j^* B+A', W)=\{a\}$. Hence the assumption that there is no integer $j$ such that $g(jA+A', W)\subseteq g(A,W)$ is wrong and the claim therefore proven.\smallskip
		
		\emph{Non-imposition:} Finally, we show that $f$ is non-imposing. Hence, consider an arbitrary committee $W=\{c_1,\dots,c_\ell\}$. We need to show that there is a profile $A$ such that $f(A, |W|)=\{W\}$. For this, let $W_k=\{c_1,\dots, c_k\}$ for all $k \in \{1,\dots,\ell\}$. We will inductively construct a profile $A$ such that $f(A,k)=\{W_k\}$ for all $k \in \{1,\dots,\ell\}$. The base case $k=1$ follows immediately from the non-imposition of $g$ because this axiom implies that there is a profile $A$ such that $g(A,\emptyset)=\{c_1\}$. 
		
		Next, consider a fixed $k'\in \{1,\dots, \ell-1\}$ and assume that we have a profile $A$ such that $f(A,k)=\{A_{k}\}$ for all integers $k\in \{1,\dots,k'\}$. In particular, this means that $g(A, W_{k-1})=\{c_{k}\}$ for all $k\in \{1,\dots, k'\}$ (where $W_0=\emptyset$). Furthermore, let $\tau$ denote a permutation on the candidates such that $\tau(x)=x$ for all $x\in W_{k'}$. Neutrality implies that $g(\tau(A), \tau(W_{k-1}))=g(\tau(A), W_{k-1})=\{c_k\}$ for all $k \in \{1,\dots,k'\}$. In turn, consistency and completeness prove then that $g(A+\tau(A), W_{k-1})=\{c_k\}$ for all such permutations $\tau$ and $k\in \{1,\dots,k'\}$. Hence, consider now the profile $A'$ that consists of a copy of $\tau(A)$ for every permutation $\tau:\mathcal{C}\rightarrow\mathcal{C}$ such that $\tau(x)=x$ for $x\in W_{k'}$. By repeating the previous arguments, we infer that $g(A', W_{k-1})=\{c_k\}$ for all $k \in \{1,\dots,k'\}$. Moreover, the profile $A'$ is completely symmetric with respect to the candidates $c\in \mathcal{C}\setminus W_{k'}$. Hence, neutrality and anonymity require that $g(A', W_{k'})=\mathcal{C}\setminus W_{k'}$. 
		
		Finally, since $g$ is non-imposing, there is a profile $A''$ such that $g(A'',W_{k'})=\{c_{k'+1}\}$. Moreover, by using the same auxiliary claim as for our analysis on continuity, we can find an integer $j$ such that $g(jA'+A'', W_{k-1})\subseteq g(A', W_{k-1})$ for all $k\in \{1,\dots, k'\}$. Since $g$ is complete and $|g(A',W_{k-1})|=1$ for all these $k$, this implies $g(jA'+A'', W_{k-1})=\{c_k\}$ for all $k\in \{1,\dots, k'\}$. On the other hand, completeness and consistency require that $g(jA'+A'', W_{k'})=g(jA',W_{k'})\cap g(A'',W_{k'})=\{c_{k'+1}\}$. This entails for $f$ that $f(jA'+A'', k)=\{W_k\}$ for all $k\in \{1,\dots, k'+1\}$ and thus proves the induction step. Hence, $f$ is indeed non-imposing.\medskip
		
		\textbf{Claim 2: If $f$ is a proper and consistently committee monotone ABC voting rule, it is generated by a proper, consistent, and complete generator function.}
		
		Let $f$ denote a proper ABC voting rule that satisfies consistent committee monotonicity. Thus, $f$ can be generated by a consistent generator function $g$. We only need to extend $g$ to a complete generator function to prove this claim since \Cref{lem:A+N} then shows that it is also proper. For doing so, we will define a second generator function $\hat g(A,W)$ for $f$. To this end, consider a single committee $W\neq\mathcal{C}$ and let $W_1,\dots, W_\ell$ denote a sequence of committees such that $W_\ell=W$. Moreover, let $A_W$ denote a profile such that $f(A_W,k)=\{W_k\}$ for all $k\in \{1,\dots, \ell\}$ and $f(A_W,\ell+1)=\{W\cup \{x\}\colon x\in\mathcal{C}\setminus W\}$; such a profile exists because of \Cref{lem:seqNI}. Then, we define $\hat g(A,W)=g(jA_W+A, W)$ for all profiles $A$ and committees $W$, where $j$ denotes the smallest integers such that $f(jA_W+A,k)=\{W_k\}$ for all $k\in \{1,\dots,|W|\}$; such an integer exists because of \Cref{lem:seqCon}. In particular, observe that this condition ensures that $g(jA_W+A,W)\neq \emptyset$ and thus, $\hat g$ is indeed complete. In the remainder of this proof, we will show that $\hat g$ generates $f$ and is consistent because \Cref{lem:A+N} then shows that it satisfies all our requirements.
				
		First, we show that $\hat g$ generates $f$. For this, we will show that for all profiles $A$ and committees $W$ with $W\in f(A,|W|)$, it holds that $g(A,W)=
		\hat g(A,W)=g(jA_W+A, W)$. By our definition, we have that $W\in f(A_W,|W|)$ and $W\in f(jA_W+A, |W|)$, and we assume that $W\in f(A, |W|)$. Moreover, a repeated application of consistent committee monotonicity shows that $f(A_W,|W|)=f(jA_W,|W|)$ and we hence derive that $g(X, W)\neq \emptyset$ for all $X\in\{jA_W, A, jA_W+A\}$. Finally, our definition of $A_W$ also implies that $g(jA_W, W)=\mathcal{C}\setminus W$ and thus, consistency entails that $\hat g(A,W)=g(jA_W+A, W)=g(jA_W,W)\cap g(A,W)=g(A,W)$. Hence, $\hat g$ and $g$ coincide for all profiles $A$ and committees $W$ with $W\in f(A, |W|)$, which implies that $\hat g$ generates $f$.
			
	Finally, we show that $\hat g$ is consistent. Thus, consider a fixed committee $W\neq\mathcal{C}$ and let $W_1,\dots,W_\ell$ denote the sequence of profiles used for defining $\hat g(A,W)$. In particular, we have that $W_\ell=W$. Moreover, let $A_W$ denote the profile such that $\hat g(A,W)=g(jA_W+A, W)$ for some $j\in\mathbb{N}$. Recall that $f(A_W,k')=\{W_{k}\}$ for all $k\in \{1,\dots, \ell\}$ and $f(A_W,\ell+1)=\{W_\ell\cup \{x\}\colon x\in\mathcal{C}\setminus W_\ell\}$. Finally, for proving that $\hat g$ is consistent, consider two profiles $A$, $A'$ such that $\hat g(A,W)\cap \hat g(A',W)\neq \emptyset$, and define $j$, $j'$, and $j''$ as the smallest integers such that $f(j A_W +A, k)=f(j' A_W +A', k)=f(j'' A_W +(A+A'), k)=\{W_k\}$ for all $k\in \{1,\dots, |W|\}$. By definition of $\hat g$, $\hat g(A,W)= g(jA_W+A,W)$, $\hat g(A',W)=g(j'A_W+A,W)$, and $\hat g(A+A',W)=g(j''A_W+(A+A'), W)$. 
	We proceed with a case distinction with respect to the relation between $j+j'$ and $j''$. 
	\begin{itemize}
		\item First, assume that $j+j'=j''$. Then, the consistency of $g$ implies that $\hat g(A+A',W)=g(j'' A_W+A+A',W)=g(jA_W+A,W)\cap g(j'A_W+A',W)=\hat g(A,W)\cap \hat g(A',W)$. Hence, consistency holds. 
		\item As second case, assume that $j+j'>j''$ and define $\ell=j+j'-j''$. First, we note that $f(\ell A_W,k)=\{W_k\}$ for all $k\in \{1,\dots, |W|\}$ and $g(\ell A_W, W)=\mathcal{C}\setminus W$ because of the consistency of $g$. Hence, we can use \Cref{lem:merge} to deduce that $f((j+j')A_W+A+A',k)=f(\ell A_W + j''A_W + A + A', k)=\{W_k\}$ for all $k\in \{1,\dots, |W|\}$. Consistency therefore implies that $g(j''A_W+A+A',W)=g(j''A_W+A+A',W)\cap g(\ell A_W,W)=g((j+j')A_W+A+A',W)$. Moreover, analogous to the above case, we have that $g((j+j')A_W+A,W)=g(jA_W+A)\cap g(j'A_W+A',W)$. In summary, we thus have that $\hat g(A+A',W)=\hat g(A,W)\cap \hat g(A',W)$.
		\item As last case, note that $j+j'<j''$ is not possible since we choose the integer $j''$ minimal. In particular, it follows from \Cref{lem:merge} that $f(j A_W +A + j' A_W +A', k)=\{W_k\}$ for all $k\in \{1,\dots, |W|\}$. This contradicts the minimality of $j''$ if $j+j'<j''$ and thus, this case cannot happen.
	\end{itemize}
	\end{proof}
	
	\subsection{Proof of \Cref{prop:hyperplane}}

In this section, we show \Cref{prop:hyperplane} which states that for a fixed committee $W\neq\mathcal{C}$, every proper, consistent, and complete generator function $g$ is equivalent to $v$-weighted approval voting for some weight function $v$. Note that this claim is trivial if $|W|=m-1$ because $g(A,W)=\mathcal{C}\setminus W$ for every complete generator function as only a single candidate remains. Thus, it holds for every committee $W$ of size $m-1$ that $g(A,W)=AV_{v^W}(A,W)$, where $v^W(x,y)=1$ for all $x,y$. As a consequence, we will focus in the majority of this section on the case that $|W|\leq m-2$. 

For proving \Cref{prop:hyperplane}, we use the fact that, for a fixed committee $W$, $g(\cdot,W)$ is closely connected to single winner voting rules, so-called social choice functions (SCFs). In particular, $g(\cdot, W)$ can be seen as an SCF on the candidates $\mathcal{B}=\mathcal{C}\setminus W$. To simplify our analysis, we will only consider SCFs with restricted domain: we require that all voters approve the same number of candidates in $\mathcal{B}$. To make this formal, let $\mathcal{B}_k$ denote the size $k$ subsets of the set of available candidates $\mathcal{B}$. Moreover, let $\mathcal{B}_k^*$ denote the set of approval profiles in which all voters report a preference in $\mathcal{B}_k$. Finally, a social choice function (SCF) on the domain $\mathcal{B}_k^*$ maps every profile $A\in \mathcal{B}_k^*$ to a non-empty subset of $\mathcal{B}$, which is interpreted as the set of winners of the elections. 

Three SCFs will turn out to be particular important for our analysis: approval voting ($\mathit{AV}$) chooses the candidates that are approved by the most voters, anti-approval voting ($\overline{\mathit{AV}}$) chooses the candidates that are approved by the least voters, and the trivial rule ($\mathit{TRIV}$) always chooses all candidates. Furthermore, analogous to ABC voting rules, we introduced axioms for SCFs: anonymity requires that permuting the voters does not affect the outcome ($f(A)=f(\pi(A))$), neutrality that permuting the candidates in the profile permutes them correspondingly in the choice set ($f(\tau(A))=\tau(f(A))$), and consistency that if the intersection of the choice sets for two disjoint profiles is non-empty, then precisely this intersection is chosen when considering both profiles combined ($f(A)\cap f(A')\neq\emptyset$ implies that $f(A+A')=f(A)\cap f(A')$). As we show next, these three axioms characterize $\mathit{AV}$, $\overline{\mathit{AV}}$, and $\mathit{TRIV}$. 

\begin{lemma}\label{lem:AV}
    Let $\mathcal{B}=\mathcal{C}\setminus W$ denote the non-empty set of feasible candidates and let $k\in \{1,\dots, |\mathcal{B}|\}$. $\mathit{AV}$, $\overline{\mathit{AV}}$, and $\mathit{TRIV}$ are the only SCFs on $\mathcal{B}_k^*$ that satisfy anonymity, neutrality, and consistency. 
\end{lemma}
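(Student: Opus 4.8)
The plan is to adapt the separating-hyperplane argument of \citet{Youn75a}, exactly as sketched for \Cref{prop:hyperplane}, to the restricted domain $\mathcal{B}_k^*$. First I would dispose of the degenerate case $k=|\mathcal{B}|$: here every voter must approve all of $\mathcal{B}$, so every profile is invariant under all permutations of $\mathcal{B}$, and neutrality forces $f(A)=\mathcal{B}$; since $\mathit{AV}$, $\overline{\mathit{AV}}$, and $\mathit{TRIV}$ all coincide with this rule, the claim holds trivially. So assume $k<|\mathcal{B}|$. By anonymity, an SCF on $\mathcal{B}_k^*$ depends only on the multiplicity vector $n=(n_B)_{B\in\mathcal{B}_k}\in\mathbb{Z}_{\geq 0}^{\mathcal{B}_k}\setminus\{0\}$ recording how often each $k$-subset is reported. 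For a candidate $c$, write $p_c(n)=\sum_{B\colon c\in B} n_B$ for the number of voters approving $c$ and note $\sum_{c\in\mathcal{B}} p_c(n)=k\sum_B n_B$. In this language $\mathit{AV}(n)=\argmax_{c} p_c(n)$, $\overline{\mathit{AV}}(n)=\argmax_{c} (-p_c(n))$, and $\mathit{TRIV}(n)=\mathcal{B}$, which is the form I will recover.

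Next I would run the standard consistency machinery. Define $R_c=\{n\colon c\in f(n)\}$. Since $c\in f(n)$ and $c\in f(n')$ already guarantee $f(n)\cap f(n')\neq\emptyset$, consistency gives $f(n+n')=f(n)\cap f(n')\ni c$, so each $R_c$ is closed under addition and positive scaling. As in the sketch of \Cref{prop:hyperplane}, I would then extend the domain of $f$ from $\mathbb{Z}_{\geq 0}^{\mathcal{B}_k}$ to $\mathbb{Q}^{\mathcal{B}_k}$: first to negative entries by subtracting a suitable multiple of the fully symmetric profile $Q$ (the one reporting every $k$-subset once, for which $f(Q)=\mathcal{B}$ by neutrality), and then to rationals by clearing denominators, using consistency at each step to see that $f$ remains well defined, anonymous, neutral, and consistent. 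On the extended domain the sets $R_c$ become convex cones that are symmetric under the neutrality action, and for $c\neq d$ the relative interiors of $R_c$ and $R_d$ are disjoint. A separating hyperplane between them (see \citet{McLe18a}) yields, after combining the hyperplanes for all pairs, a scoring weight $w\colon\mathcal{B}_k\times\mathcal{B}\rightarrow\mathbb{R}$ such that $f(n)=\argmax_{c\in\mathcal{B}}\sum_{B} n_B\, w(B,c)$.

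Finally I would use neutrality to pin down $w$. Neutrality forces $w(\tau(B),\tau(c))=w(B,c)$ for every permutation $\tau$ of $\mathcal{B}$, so $w(B,c)$ can depend only on whether $c\in B$; hence there are constants $\alpha,\beta$ with $w(B,c)=\alpha$ if $c\in B$ and $w(B,c)=\beta$ otherwise. The score of $c$ is then $\sum_B n_B w(B,c)=\alpha\, p_c(n)+\beta\,(|n|-p_c(n))=(\alpha-\beta)\,p_c(n)+\beta\,|n|$, where $|n|=\sum_B n_B$. As $\beta|n|$ is the same for every candidate, the maximizers are: the candidates maximizing $p_c$ if $\alpha>\beta$ (giving $\mathit{AV}$), the candidates minimizing $p_c$ if $\alpha<\beta$ (giving $\overline{\mathit{AV}}$), and all of $\mathcal{B}$ if $\alpha=\beta$ (giving $\mathit{TRIV}$). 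This exhausts the possibilities and proves the lemma.

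I expect the main obstacle to be the hyperplane step on the extended rational domain, and in particular matching ties exactly: the separation theorem only yields weak inequalities, so one must argue, using the symmetry of the $R_c$ and the fact that fully symmetric boundary profiles are sent to $\mathcal{B}$ by neutrality, that the weak inequality $\sum_B n_B w(B,c)\geq \sum_B n_B w(B,d)$ characterizes membership $c\in f(n)$ and not merely a strict refinement of it. Care is also needed in the extension to negative and rational vectors to ensure that consistency makes $f$ well defined independently of the chosen multiple of $Q$ and the chosen denominator, which is where the non-uniqueness of the scoring representation must be handled.
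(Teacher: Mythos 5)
Your route is genuinely different from the paper's (the paper proves \Cref{lem:AV} by an elementary symmetrization argument, precisely because this lemma is a building block used \emph{before} any hyperplane machinery is available), but as sketched it has two gaps that are not mere technicalities. First, your claim that for $c\neq d$ the relative interiors of $R_c$ and $R_d$ are disjoint is false as stated: for $\mathit{TRIV}$ every $R_c$ is the whole space. For rules other than $\mathit{TRIV}$ the disjointness is true but needs proof, and the proof used in the paper (Step 2 of the proof of \Cref{prop:hyperplane}) mixes a point of $\mathrm{int}\,\bar R_c\cap \mathrm{int}\,\bar R_d$ with a profile whose outcome is exactly $\{c\}$ --- i.e., it consumes non-imposition, which is \emph{not} a hypothesis of \Cref{lem:AV} (and cannot be derived unconditionally, since $\mathit{TRIV}$ satisfies all three hypotheses yet fails it). To repair this you must first split off the case $f(n)=\mathcal{B}$ for all $n$, and otherwise derive non-imposition by hand: if $c\in f(n_0)$ and $d\notin f(n_0)$, sum the profiles $\tau(n_0)$ over all permutations $\tau$ fixing $c$; consistency (every choice set contains $c$) and neutrality give $f\bigl(\sum_\tau \tau(n_0)\bigr)=\{c\}$. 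None of this is in your sketch.

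Second, the tie-matching issue you flag as ``the main obstacle'' is a real hole, because the tool the paper uses for it in \Cref{prop:hyperplane} (Step 5) is \emph{continuity} of the generator, which again is not among the hypotheses of \Cref{lem:AV}. The fix that works with only anonymity, neutrality and consistency is to prove directly that two candidates with equal approval score are either both chosen or both unchosen: symmetrize over all permutations fixing $a$ and $b$, and decompose the resulting profile into subprofiles in which $a$ and $b$ are interchangeable. But this equal-score claim (Step 1 of the paper's proof), combined with the neutrality classification of single-ballot profiles into $f(A_i)\in\{A_i,\ \mathcal{B}\setminus A_i,\ \mathcal{B}\}$ and one consistency argument, already yields the entire lemma --- which is exactly the paper's elementary proof. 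So your approach can likely be completed, but only after adding the $\mathit{TRIV}$ case distinction, the non-imposition derivation, and the equal-score symmetrization lemma; and once the last of these is in place, the cones and separating hyperplanes are doing no work. A smaller gap in the same spirit: the scoring representation produced by the separation theorem is not unique, so ``neutrality forces $w(\tau(B),\tau(c))=w(B,c)$'' does not follow immediately; you need either the uniqueness-of-hyperplanes argument (Steps 3--4 of \Cref{prop:hyperplane}) or a post hoc averaging of $w$ over all permutations, checking that the averaged weights still represent $f$.
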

\begin{proof}
    Fix a set $W\subsetneq\mathcal{C}$ and let $\mathcal{B}=\mathcal{C}\setminus W$. Moreover, we define $m=|\mathcal{B}|$ for this proof and let $k\in\{1,\dots,|\mathcal{C}\setminus W|\}$.
	It is straightforward to check that $\mathit{AV}$, $\overline{\mathit{AV}}$, and $\mathit{TRIV}$ are anonymous, neutral, and consistent for $\mathcal{B}_k^*$ and we thus focus on the converse direction. Hence, let $f$ denote an anonymous, neutral, and consistent SCF on $\mathcal{B}_k^*$. First, note that if $m=1$, then every rule has to always return the single available candidate. Similar, if $k=m$, every anonymous and neutral SCF on $\binom{\mathcal{C}\setminus W}{k}^*$ must always return all candidates because all voters have to approve all candidates. Hence, $f$ coincides with $\mathit{AV}$ (as well as $\overline{\mathit{AV}}$ and $\mathit{TRIV}$) for all profiles in this case and we therefore assume that $1\leq k<m$. 
	
	For proving that $f$ is one of our three SCFs, we will consider profiles consisting of a single ballot $A_i$. First, if $f(A_i)=\mathcal{C}$ for some $A_i\in \mathcal{B}_k$, then neutrality implies that $f(A_i')=\mathcal{C}$ for all $A_i'$ since $|A_i|=|A_i'|$ for all ballots in our domain. Consistency then shows that $f$ is $\mathit{TRIV}$. Because of neutrality, there are two possible cases left: $f(A_i)=A_i$ or $f(A_i)=\mathcal{B}\setminus A_i$ for all $A_i\in\mathcal{B}_k$. We will show in Step 2 that the former implies that $f$ is $\mathit{AV}$ and in Step 3 that the latter implies that $f$ is $\overline{\mathit{AV}}$. Before proving these claims, we will discuss an auxiliary statement showing that if two candidates $a,b$ have the same approval score $s(A,x)=|\{i\in N_A\colon x\in A_i\}|$, then both are either chosen or unchosen in $f(A)$.\medskip
	
	\textbf{Step 1: If $s(A,a)=s(A,b)$, then $a\in f(A)$ if and only if $ b\in f(A)$.}
	
	Consider a profile $A$ and two candidates $a,b\in\mathcal{B}$ with $s(A,a)=s(A,b)$. First, if $s(A,a)=s(A,b)=0$, no voter approves $a$ or $b$. Neutrality then shows that $a\in f(A)$ if and only if $b\in f(A)$ as permuting $a$ and $b$ leads to the same profile $A$. Hence, we focus on the case that $s(A,a)=s(A,b)>0$ and suppose for contradiction that $a\in f(A)$, $b\not\in f(A)$. Now, note that for every permutation $\tau:\mathcal{B}\rightarrow\mathcal{B}$ with $\tau(a)=a$ and $\tau(b)=b$, neutrality implies $a\in f(\tau(A))$, $b\not\in f(\tau(A))$. Next, define $A^*$ as the profile that consists of the profiles $\tau(A)$ for every permutation $\tau:\mathcal{B}\rightarrow\mathcal{B}$ such that $\tau(a)=a$ and $\tau(b)=b$. Neutrality requires for every subprofile $\tau(A)$ that $a\in f(\tau(A))$, $b\not\in f(\tau(A))$ and, in turn, consistency shows that $a\in f(A^*)$, $b\not\in f(A^*)$. 
	
	Subsequently, we consider a different decomposition of $A^*$. Note for this that for every two ballots $A_i$, $A_i'$ with $a\in A_i$, $a\in A_i'$, $b\not\in A_i$, $b\not\in A_i'$, there are precisely $(k-1)! (m-k-1)!$ permutations $\tau$ with $\tau(a)=a$, $\tau(b)=b$ that map $A_i$ to $A_i'$. An analogous statement also holds when exchanging the roles of $a$ and $b$. Hence, we can partition $A^*$ in the following profiles: in $A^1$ all voters approve both $a$ and $b$, in $A^2$ no voter approves either $a$ or $b$, and for every set of candidates $X\subseteq \mathcal{B}\setminus \{a,b\}$ with $|X|=k-1$, there is a profile $A^X$ in which all voters either report $\{a\}\cup X$ or $\{b\}\cup X$. In particular, since $s(A,a)=s(A,b)$, both $a$ and $b$ are approved by the same number of voters in $A^X$. Finally, note that $a$ and $b$ are completely symmetric in all subprofiles of $A^*$ and thus also in $A^*$ itself. Hence, anonymity and neutrality require that $a\in f(A^*)$ if and only if $b\in A^*$. However, this contradicts our previous observation $a\in f(A^*)$, $b\not\in f(A^*)$, and thus our initial assumption that $a\in f(A)$, $b\not\in f(A)$ must be wrong. Consequently, we infer that $s(A,a)=s(A,b)$ implies that $a\in f(A)$ if and only if $ b\in f(A)$.\medskip
	
	\textbf{Step 2: If $f(A_i)=A_i$ for all $A_i\in\mathcal{B}_k^*$, then $f(A)=\mathit{AV}(A)$ for all $A\in\mathcal{B}^*_k$.}
	
	The claim follows by showing that $f$ cannot choose candidates that have less than maximal approval score. Due to the non-emptiness and Step 1, $f$ then must choose the approval winners. Hence, assume for contradiction that there is a profile $A\in \mathcal{B}^*_k$ and two candidates $a,b\in\mathcal{B}$ such that $b\in f(A)$ but $s(A,a)>s(A,b)$. Next, let $A'$ denote a profile consisting of $s(A,a)-s(A,b)$ voters who approve $b$ but not $a$. Since $b\in f(A_i')=A_i'$ and $a\not\in f(A_i')$ for all voters $i\in N_{A'}$, consistency entails that $b\in f(A')$, $a\not\in f(A')$. Another application of consistency then shows that $b\in f(A+A')$ and $a\not \in f(A+A')$. However, $s(A+A',a)=s(A+A',b)$ and Step 1 thus requires that $a\in f(A+A')$ if and only if $b\in f(A+A')$. Hence, our initial assumption is wrong and $b\in f(A)$ is only possible if there is no candidate $a$ with $s(A,a)>s(A,b)$. Equivalently, this means that $f(A)=\mathit{AV}(A)$.\medskip
	
	\textbf{Step 3: If $f(A_i)=\mathcal{B}\setminus A_i$ for all $A_i\in\mathcal{B}_k^*$, then $f(A)=\overline{\mathit{AV}(A)}$ for all $A\in\mathcal{B}^*_k$.}
	
	Analogous to the last case, we will show that $f$ cannot choose candidates that have above minimal approval score. The non-emptiness of $f$ and Step 1 then show again that $f$ chooses the anti-approval winners. Thus, assume for contradiction that there is a profile $A\in \mathcal{B}^*_k$ and two candidates $a,b$ such that $b\in f(A)$ and $s(A,a)<s(A,b)$. Similar to the last case, let $A'$ denote a profile consisting of $s(A,b)-s(A,a)$ such that $a\in A_i'$ and $b\not\in A_i'$ for all $i\in N_{A'}$. Consistency and our assumption that $f(A_i')=\mathcal{B}\setminus A_i'$ show that $b\in f(A')$, $a\not\in f(A')$. Hence, another application of consistency shows that $b\in f(A+A')$, $a\not\in f(A+A')$. However, this contradicts Step 1 since $s(A+A', a)=s(A+A',b)$. This shows that the initial assumption must be wrong and $f$ indeed can only choose candidates with minimal approval scores. In other words, this means that $f(A)=\overline{\mathit{AV}}(A)$ for all profiles $A\in \mathcal{B}^*_k$.	
	\end{proof}
	
	As next step, we will start to analyze $g(\cdot, W)$ for a fixed committee $W$. In particular, we will show that every proper generator function that is complete and consistent induces an SCF for the candidates $\mathcal{C}\setminus W$ that satisfies all requirements of \Cref{lem:AV} when we assume that all voters $i\in N_A$ approve the same subset of $W$ and the same number of candidates. Based on this insight, we will infer the behavior of $g(A,W)$ for sufficiently symmetric profiles. For formalizing the latter, let $n(c,A,W,k,\ell)=|\{i\in N_A: c\in A_i\land |A_i\cap W|=k \land |A_i|=\ell\}|$ denote the number of voters who approve $c\in\mathcal{C}\setminus W$, $k$ candidates of $W$, and $\ell$ candidates in total. Note that by definition $n(c,A,W,k,\ell)=0$ if $k\geq \ell$ or $\ell>m-(|W|-k)$ since there is no ballot such in which a voter approves $c$ and the right number of candidates of $W$ and $\mathcal{C}$.
	
	\begin{lemma}\label{lem:0element}
		Let $g$ denote a proper, complete, and consistent generator function. It holds that  $g(A,W)=\mathcal{C}\setminus W$ for every profile $A$ and committee $W$ with $|W|\leq m-2$ for which there are constants $c_{k,\ell}$ such that $n(x,A,W,k,\ell)=c_{k,\ell}$ for all $x\in \mathcal{C}\setminus W$, $k\in \{0,\dots, |W|\}$, and $\ell\in \{k+1, \cdots, m-1-|W|+k\}$. 
	\end{lemma}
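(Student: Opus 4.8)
The plan is to treat $g(\cdot,W)$, for the fixed committee $W$ with $\mathcal{B}:=\mathcal{C}\setminus W$ (where $|\mathcal{B}|\geq 2$ since $|W|\leq m-2$), as a family of social choice functions on $\mathcal{B}$ and to reduce the balanced profile $A$ to pieces on which \Cref{lem:AV} applies. First I would decompose $A$ by voter type: write $A=\sum_{k,\ell}A^{k,\ell}$, where $A^{k,\ell}$ collects the voters $i$ with $|A_i\cap W|=k$ and $|A_i|=\ell$. Since $A^{k,\ell}$ contains exactly the type-$(k,\ell)$ voters of $A$, the hypothesis $n(x,A,W,k,\ell)=c_{k,\ell}$ means that inside $A^{k,\ell}$ every candidate $x\in\mathcal{B}$ is approved by exactly $c_{k,\ell}$ voters; that is, each piece is \emph{$\mathcal{B}$-balanced}. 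If I can show $g(A^{k,\ell},W)=\mathcal{B}$ for every non-empty piece, then, since $g$ is complete (so $g(A,W)\neq\emptyset$) and consistent, an iterated application of consistency over the pieces yields $g(A,W)=\bigcap_{k,\ell}g(A^{k,\ell},W)=\mathcal{B}$, which is the claim. So the whole task reduces to the single-type balanced case.

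For a piece in which, in addition, every voter has the \emph{same} intersection $S=A_i\cap W$ (with $|S|=k$) and the same size $\ell$, I would identify $g(\cdot,W)$ with an SCF. Writing each ballot as $S\cup T$ with $T\subseteq\mathcal{B}$ and $|T|=k':=\ell-k$, the map $f_S(P):=g(A,W)$, where $A$ is built from the $\mathcal{B}$-profile $P$ by attaching $S$ to every ballot, is a well-defined SCF on $\mathcal{B}_{k'}^*$: it is anonymous and consistent because $g$ is anonymous, consistent, and complete, and it is neutral because every permutation of $\mathcal{B}$ extends to a permutation of $\mathcal{C}$ fixing $W$ pointwise, so neutrality of $g$ transfers. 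By \Cref{lem:AV}, $f_S\in\{\mathit{AV},\overline{\mathit{AV}},\mathit{TRIV}\}$. The point I would exploit is that all three of these rules return the \emph{entire} set $\mathcal{B}$ on a balanced profile (approval winners, anti-approval winners, and the trivial choice all coincide with $\mathcal{B}$ when every candidate has the same score). Hence $g(A,W)=\mathcal{B}$ for every balanced single-$S$ profile, and also for the degenerate pieces with $k'=0$ (voters approving only candidates of $W$): there the profile is literally invariant under every permutation of $\mathcal{B}$, so neutrality and completeness force $g(A,W)=\mathcal{B}$ directly.

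The main obstacle is the remaining case of a $\mathcal{B}$-balanced piece $A^{k,\ell}$ with $k\geq 1$ and $k'\geq 1$ in which the intersections $A_i\cap W$ all have the common size $k$ but are \emph{not} all equal; here one cannot simply regroup by $S$ and invoke consistency, because the choices for the sub-profiles grouped by $A_i\cap W=S'$ may be pairwise disjoint. My plan is to symmetrize over $W$. Let $\sigma$ range over all permutations of $\mathcal{C}$ that fix every candidate in $\mathcal{B}$ (equivalently, permute $W$), and set $B:=\sum_{\sigma}\sigma(A^{k,\ell})$. Each such $\sigma$ satisfies $\sigma(W)=W$ and fixes $\mathcal{B}$ pointwise, so neutrality gives $g(\sigma(A^{k,\ell}),W)=\sigma(g(A^{k,\ell},W))=g(A^{k,\ell},W)$; since these sets all agree and are non-empty, consistency yields $g(B,W)=g(A^{k,\ell},W)$. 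Now regroup $B$ by $W$-intersection, $B=\sum_{S'\colon|S'|=k}B^{S'}$. By construction, for each $\mathcal{B}$-part $T$ occurring in $A^{k,\ell}$ every $k$-subset $S'$ of $W$ is attached to $T$ equally often, so each $B^{S'}$ is a single-$S'$ profile whose underlying $\mathcal{B}$-profile $P$ is \emph{the same} multiset for all $S'$, and $P$ inherits balance from $A^{k,\ell}$. By the previous paragraph, $g(B^{S'},W)=f_{S'}(P)=\mathcal{B}$ for every $S'$; their intersection is $\mathcal{B}\neq\emptyset$, so consistency gives $g(B,W)=\mathcal{B}$, whence $g(A^{k,\ell},W)=\mathcal{B}$.

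Putting the three steps together closes the argument: every non-empty type-piece $A^{k,\ell}$ satisfies $g(A^{k,\ell},W)=\mathcal{B}$ (single-$S$, degenerate $k'=0$, or varying-$S$ via $W$-symmetrization), and the consistency-plus-completeness combination of the first paragraph lifts this to $g(A,W)=\mathcal{C}\setminus W$. I expect the delicate points to be purely bookkeeping: verifying that $f_S$ is genuinely an anonymous, neutral, and consistent SCF of the form required by \Cref{lem:AV}, and checking that the $W$-symmetrization makes the $\mathcal{B}$-profile $P$ literally independent of $S'$ while preserving balance.
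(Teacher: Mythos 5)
Your proposal is correct and follows essentially the same route as the paper's own proof: you decompose the profile by voter type $(k,\ell)$, handle the common-intersection pieces by viewing $g(\cdot,W)$ as an SCF and invoking \Cref{lem:AV} (noting that $\mathit{AV}$, $\overline{\mathit{AV}}$, and $\mathit{TRIV}$ all return the whole set $\mathcal{C}\setminus W$ on balanced profiles), reduce the varying-intersection pieces to that case by symmetrizing over permutations that fix $\mathcal{C}\setminus W$ pointwise, and glue everything together via consistency plus completeness. This mirrors the paper's Steps 1--3 almost verbatim, including the counting argument that the symmetrized profile decomposes into identical single-intersection subprofiles.
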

	\begin{proof}
		Consider a proper generator function $g$ that is complete and consistent and fix an arbitrary committee $W$ with $|W|\leq m-2$. We will prove the lemma in multiple steps by first focusing on more restricted profiles.\medskip
		
		\textbf{Step 1:} As first step, we will show that $g(A,W)=\mathcal{C}\setminus W$ for all profiles $A$ for which there is a subset $X$ of $W$ and constants $\hat c\in \mathbb{N}$, $\ell\in \{|X|+1,\dots, m\}$ such that $|A_i|=\ell$ and $W\cap A_i=X$ for all voters $i\in N_A$, and $n(x,A,W,|X|, \ell)=\hat c$ for all candidates $x\in \mathcal{C}\setminus W$. To this end, let $\mathcal{B}=\mathcal{C}\setminus W$ denote the set of still available candidates and let $\mathcal{B}_{\ell-|X|}$ denote the set of size $\ell-|X|$ subsets of $\mathcal{C}'$. Moreover, we define $\mathcal{B}_{\ell-|X|}^*$ as the set of profiles in which each voter submits a ballot in $\mathcal{B}_{\ell-|X|}$. Finally, consider the following SCF $h$ on the domain $\mathcal{B}_{\ell-|X|}^*$: given a profile $A\in\mathcal{B}_{\ell-|X|}^*$, we construct a new profile $A'$ on $\mathcal{C}$ by setting $A_i'=A_i\cup X$ for all voters $i\in N_A$. Then, $h(A)\coloneqq g(A',W)$.  
		
		It is not difficult to see that $h$ inherits anonymity, neutrality, and consistency from $g(\cdot,W)$. In particular, note for neutrality that this axiom requires for $h$ only that we can permute the candidates in $\mathcal{B}$ and such permutation do not affect the set $W$. Thus, we can use \Cref{lem:AV} to derive that $h$ is either $\mathit{AV}$, $\overline{\mathit{AV}}$, or $\mathit{TRIV}$. Hence, we infer that if all candidates $x\in \mathcal{C}\setminus W$ are approved by the same number of voters in $A'$, then $g(A',W)=h(A)=\mathcal{C}\setminus W$. In particular, if $n(x,A,W,|X|,\ell)=\hat c$ for all candidates $x\in \mathcal{C}\setminus W$, $|A_i|=\ell$, and $A_i\cap W=X$ for all $i\in N$, then $g(A,W)=\mathcal{C}\setminus W$.\medskip
		
		\textbf{Step 2:} For the second step, we fix two integers $j_1,j_2\in\mathbb{N}$ with $j_1\leq |W|$ and $j_1< j_2\leq m-1-|W|+j_1$ and consider profiles $A$ such that $|A_i|=j_2$ and $|A_i\cap W|=j_1$ for all voters $i\in N_A$. Or, in other words, all voters still approve the same number of candidates with respect to both $\mathcal{C}$ and $W$, but they are no longer required to report the exact same subset of $W$. We assume once again that there is a constant $\hat c$ such that $n(x,A,W,j_1, j_2)=\hat c$ for all $x\in \mathcal{C}\setminus W$ and will show that $g(A,W)=\mathcal{C}\setminus W$. Note for this case that neutrality requires that $g(\tau(A), \tau(W))=\tau(g(A, W))=g(A,W)$ for every permutation $\tau:\mathcal{C}\rightarrow\mathcal{C}$ with $\tau(x)=x$ for $x\in\mathcal{C}\setminus W$ since $g(A,W)\subseteq \mathcal{C}\setminus W$. Hence, consider the profile $A^*$ which consists of a copy of $A$ permuted by every such permutation $\tau$. By consistency and the previous argument, it follows that $g(A^*,W)=g(A,W)$. 
		
		For proving this step, we consider a different decomposition of $A^*$. In more detail, we can also decompose $A^*$ into profiles subprofiles $A^X$ which precisely contain the voters $i$ with $A_i^*\cap W=X$. Now, observe that for any two sets $X_1, X_2\subseteq W$ with $|X_1|=|X_2|$, there is the same number of permutations $\tau$ with $\tau(x)=x$ for all $x\in\mathcal{C}\setminus W$ that maps $X_1$ to $X_2$. Hence, each profile $A^X$ consists of multiple copies of a profile $\bar A^X$, which is derived from the original profile $A$ by setting $\bar A_i^X=(A_i\setminus W)\cup X$ for all $i\in N_A$. In particular, this shows that there is an integer $j$ such that $n(x, A^X, W, j_1, j_2)=j\hat c$ for all candidates $x\in\mathcal{C}\setminus W$ and $X\subseteq W$ with $|X|=j_1$. Since $A_i^X\cap W=X$ and $|A_i^X|=j_2$ for all $i\in N_{A^X}$, we can thus use the insights of Step 1 to derive that $g(A^X, W)=\mathcal{C}\setminus W$. This holds for every $X\subseteq W$ with $|X|=j_1$ and consistency thus implies that $g(A^* W)=\mathcal{C}\setminus W$ because $A^*$ is the collection of the profiles $A^X$ for all $X\subseteq W$ with $|X|=j_1$. This proves that $g(A,W)=\mathcal{C}\setminus W$ because $g(A,W)=g(A^*,W)$.\medskip
		
		\textbf{Step 3:} Finally, consider an arbitrary profile $A$ for which there are constants $\hat c_{k,\ell}\in\mathbb{N}$ such that $n(x,A,W,k,\ell)=\hat c_{k,\ell}$ for all candidates $x\in \mathcal{C}\setminus W$, $k\in\{0,\dots, |W|\}$, and $\ell\in \{k+1,\dots, m-1-|W|+k\}$. We can partition $A$ into profiles $A^{k,\ell}$ in which voters approve exactly $k\in \{0,\dots, |W|\}$ candidates of $W$ and $l\in \{k, \dots, m-|W|+k\}$ candidates in total. By the definition of $A$, it holds for all profiles $A^{k,\ell}$ with $k<\ell\leq m-1-|W|+k$ that $n(x,A^{k,\ell},W,k,\ell)=\hat c_{k,\ell}$ for all $x\in \mathcal{C}\setminus W$. Thus, Step 2 shows that $g(A^{k,\ell},W)=\mathcal{C}\setminus W$ for all $k,\ell$ with $k<\ell\leq m-1-|W|+k$. 
		
		On the other hand, if $k=\ell$, the voters in $A^{k,l}$ do not approve any candidate in $\mathcal{C}\setminus W$. Thus, neutrality immediately requires that $g(A^{k,\ell},W)=\mathcal{C}\setminus W$ for these profiles. Similarly, if $\ell=m-(|W|-k)$, then all voters in $A^{k,\ell}$ approve all candidates in $\mathcal{C}\setminus W$ and thus neutrality again implies that $g(A^{k,\ell},W)=\mathcal{C}\setminus W$. Hence, it holds for all subprofiles of $A$ that $\mathcal{C}\setminus W$ is chosen and consistency thus shows that $g(A,W)=\mathcal{C}\setminus W$, too.
	\end{proof}
	
	As next step, we will show that we can compute $g(A,W)$ only based on the values $n(x,A,W,k,\ell)$ for all $x\in\mathcal{C}\setminus W$, $k\in \{0,\dots, |W|\}$, and $\ell\in \{k+1, \dots, m-1-|W|+k\}$. 
	For making this more formal, let $Z_W= \{(k,\ell) \colon 0\leq k \leq |W|, k<\ell\leq m-1-|W|+k\}$ denote the set of pairs such that $g$ can rely on $n(x,A,W,k,\ell)$ for computing the outcome. Then, we define $n(x,A,W)=(n(x,A,W,k,\ell))_{(k,\ell)\in Z}$ as the vector that contains all entries $n(x,A,W,k,\ell)$ for $(k,\ell)\in Z$. Moreover, let $N(A,W)$ denote the matrix, which contains the vectors $n(x,A,W)$ as rows for all $x\in\mathcal{C}\setminus W$. 
	
	\begin{lemma}\label{lem:informationbasis}
		Let $g$ denote a proper, consistent, and complete generator function. It holds that $g(A,W)=g(A',W)$ for all profiles $A$, $A'$ and committees $W$ such that $N(A,W)=N(A',W)$ and $|W|\leq m-2$.
	\end{lemma}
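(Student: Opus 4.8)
The plan is to reduce the whole statement to a single \emph{absorption} principle: combining any profile with a profile that has constant rows in its matrix leaves the choice of $g(\cdot,W)$ untouched. Call a profile $P$ \emph{balanced} (for $W$) if $n(x,P,W,k,\ell)$ is independent of $x\in\mathcal{C}\setminus W$ for every relevant pair $(k,\ell)$; by \Cref{lem:0element} any such $P$ satisfies $g(P,W)=\mathcal{C}\setminus W$. I would first record that for an \emph{arbitrary} profile $Q$ we then have $g(Q+P,W)=g(Q,W)$: completeness gives $g(Q,W)\neq\emptyset$ and $g(Q+P,W)\neq\emptyset$, the inclusion $g(Q,W)\subseteq\mathcal{C}\setminus W=g(P,W)$ forces $g(Q,W)\cap g(P,W)=g(Q,W)\neq\emptyset$, and consistency concludes $g(Q+P,W)=g(Q,W)\cap g(P,W)=g(Q,W)$.

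Given this tool, the lemma follows in one shot. Since $g$ is anonymous I may identify a profile with its multiset of ballots and regard it as an integer vector indexed by the non-empty ballot types. Let $u$ be the profile containing every ballot type exactly once; as $u$ is symmetric over $\mathcal{C}\setminus W$ it is balanced. For a sufficiently large integer $\lambda$ I set $P=\lambda u$ and $P'=\lambda u+(A-A')$, the difference taken coordinatewise on ballot types and placed on fresh voters. By construction the multisets agree, $A'+P'=A'+\lambda u+(A-A')=A+\lambda u=A+P$, so once $P$ and $P'$ are seen to be genuine balanced profiles the absorption principle gives
\[
g(A,W)=g(A+P,W)=g(A'+P',W)=g(A',W),
\]
which is exactly the claim.

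The only things left to check are routine. Nonnegativity of every entry of $P'$ holds as soon as $\lambda$ exceeds all multiplicities occurring in $A'$, so $P'$ is an honest profile. Balancedness in the sense demanded by \Cref{lem:0element} only requires constant rows for the informative pairs $(k,\ell)\in Z_W$: there the contribution of $\lambda u$ is constant in $x$ by symmetry, and the contribution of $A-A'$ is precisely $n(x,A,W,k,\ell)-n(x,A',W,k,\ell)=0$ by the hypothesis $N(A,W)=N(A',W)$; the boundary pairs outside $Z_W$ (ballots approving none, or all, of $\mathcal{C}\setminus W$) need not be examined, since \Cref{lem:0element} already disposes of them internally via neutrality. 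Thus $P$ and $P'$ are balanced and the displayed chain closes the proof. The one genuinely non-obvious point—and the place I would expect a first attempt to stall—is the construction of $P'$: it looks impossible to absorb the discrepancy $A-A'$ if one insists on a \emph{fully symmetric} correction, because the zero-marginal vector $A-A'$ is generally not a difference of symmetric profiles. The key realization is that \Cref{lem:0element} asks only for constant \emph{marginals}, a far weaker condition, and a zero-marginal perturbation of the symmetric profile $\lambda u$ remains within that class; with this observation neither \Cref{lem:AV} nor any passage to rational profiles is required for this lemma.
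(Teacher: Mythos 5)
Your proof is correct and is essentially the paper's own argument: the paper sets $A''=|N_{A'}|\cdot u$ (your $P$, with $\lambda=|N_{A'}|$ serving as the ``sufficiently large'' bound on multiplicities in $A'$) and $A^*=A''+A-A'$ (your $P'$), applies \Cref{lem:0element} to both, and concludes via the same consistency absorption step on the identity $A+A''=A^*+A'$. Nothing differs beyond notation.
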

	\begin{proof}
		Let $g$ denote a proper generator function satisfying consistency and completeness. Moreover, consider two arbitrary profiles $A$, $A'$ and a committee $W$ such that $N(A, W)=N(A', W)$ and $|W|\leq m-2$. As next step, let $A''$ denote a profile containing $|N_{A'}|$ copies of every ballot $A_i$ and let $A^*=A''+A-A'$ denote the profile derived from $A''$ by first adding all ballots in $A$ and then removing the ballots of $A'$. Since $N(A, W)=N(A', W)$, it follows immediately that there are constants $\hat c_{k,\ell}$ such that $n(x,A'',W,k,\ell)=n(x,A^*,W,k,\ell)=\hat c_{k,l}$ for all $x\in \mathcal{C}\setminus W$, $k\in \{0,\dots, |W|\}$ and $\ell\in \{k+1,\cdots, m-1-|W|+k\}$. \Cref{lem:0element} therefore shows that $g(A'', W)=g(A^*,W)=\mathcal{C}\setminus W$. Using consistency, we can thus derive that $g(A,W)=g(A+A'',W)=g(A^*+A',W)=g(A',W)$, which proves the lemma. 
	\end{proof}
	
	As a consequence of \Cref{lem:informationbasis}, we can view every proper, consistent, and complete generator function $g$ as a mapping that computes the winning candidates only based on $N(A,W)$ instead of the profile $A$ itself. Or, in other words, for every committee $W$ there is a function $g_W(Q)$ that maps each element $Q\in D^\mathbb{N}_W=\{N(A,W)\colon A\in\mathcal{A}^*\}$ to $\mathcal{C}\setminus W$ such that $g(A, W)=g_W(N(A,W))$. 
	
	Before deriving the next lemma, we point out a number of important observations on $D^\mathbb{N}_W$ and $g_W$. In particular, note that $D^\mathbb{N}_W$ is closed under addition since $N(A+A',W)=N(A,W)+N(A',W)$ and multiplication with integers $k\in \mathbb{N}$ because $N(kA,W)=k N(A,W)$.  Moreover, $g_W$ inherits a number of important properties from $g$:
	\begin{itemize}
		\item $g_W$ is consistent: for all $Q, Q'\in D^{\mathbb{N}}_W$ with $g_W(Q)\cap g_W(Q')\neq \emptyset$, it holds that $g_W(Q+Q')=g_W(Q)\cap g_W(Q')$. The reason for this is that there is are profiles $A,A'\in\mathcal{A}^*$ with $Q=N(A,W)$ and $Q'=N(A',W)$, which implies that $g(A,W)=g_W(Q)$ and $g(A',W)=g_W(Q')$. Hence, consistency of $g$ requires that $g(A+A',W)=g(A,W)\cap g(A',W)$ and, because $N(A+A',W)=N(A,W)+N(A',W)=Q+Q'$, we therefore infer that $g_W(Q+Q')=g_W(Q)\cap g_W(Q')$.
		\item $g_W$ is neutral: for a permutation $\tau:\mathcal{C}\setminus W\rightarrow\mathcal{C}\setminus W$, let $\tau(Q)$ denote the matrix derived by reordering the rows of $Q$ according to $\tau$. Then, $g_W(\tau(Q))=\tau(g_W(Q))$ for all permutations $\tau:\mathcal{C}\setminus W\rightarrow\mathcal{C}\setminus W$ and matrices $Q\in D^\mathbb{N}_W$. This follows since there is a profile $A$ with $Q=N(A,W)$ and $\tau(Q)=N(\tau(A),W)$, which entails that $g_W(\tau(Q))=g(\tau(A), W)=\tau(g(A,W))=\tau(g_W(Q))$. 
		\item $g_W$ is continuous: for all $Q, Q'\in D^{\mathbb{N}}_W$ with $|g_W(Q)|=1$, there is an integer $k$ such that $g_W(kQ+Q')=g_W(Q)$. This follows again by going back to profiles $A,A'$ with $Q=N(A,W)$ and $Q'=N(A',W)$: $|g_W(Q)|=1$ implies that $|g(A,W)|=1$ and thus, the continuity of $g$ entails that there is $k\in \mathbb{N}$ such that $g(kA+A',W)=g(A,W)$. Since $N(kA+A',W)=kQ+Q'$, our claim follows.
		\item $g_W$ is non-imposing: for all candidates $c\in\mathcal{C}\setminus W$, there is a $Q\in D^{\mathbb{N}}_W$ such that $g_W(Q)=\{c\}$. This follows as we can find a profile $A$ such that $g(A, W)=\{c\}$.
	\end{itemize}
	
	Next, we show that we can extend $g_W$ to a function on $\mathbb{Q}^{|\mathcal{C}\setminus W|\times |Z_W|}$ while maintaining the properties above and that $g_W(Q)=g(A,W)$ for all profiles $A\in\mathcal{A}^*$ such that $Q=N(A,W)$.

	\begin{lemma}\label{lem:extension}
		Let $g$ denote a proper generator function satisfying consistency and completeness and fix a committee $W$ with $|W|\leq m-2$. There is a function $g_W$ from $\mathbb{Q}^{|\mathcal{C}\setminus W|\times |Z_W|}$ to $2^{\mathcal{C}\setminus W}\setminus \{\emptyset\}$ such that
		\begin{enumerate}
			\item for all $Q\in D^{\mathbb{N}}_W$ and $A\in \mathcal{A}^*$ with $Q=N(A,W)$, it holds that $g_W(Q)=g(A,W)$. 
			\item $h$ is consistent, neutral, continuous, and non-imposing. 
		\end{enumerate}
	\end{lemma}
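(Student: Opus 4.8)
The plan is to extend $g_W$ in two stages following the construction of \citet{Youn75a}, using consistency as the engine for well-definedness at every step. Write $n = |\mathcal{C}\setminus W|$ and recall that each voter contributes to exactly one column of $N(A,W)$ (the one indexed by her type $(|A_i\cap W|, |A_i|)$), so that $D^{\mathbb{N}}_W$ decomposes as a direct sum of per-column semigroups, each generated by the indicator vectors of the $(\ell-k)$-subsets of $\mathcal{C}\setminus W$; since $|W|\le m-2$ forces $1\le \ell-k\le n-1$ for every $(k,\ell)\in Z_W$, the rational span of $D^{\mathbb{N}}_W$ is all of $\mathbb{Q}^{n\times|Z_W|}$. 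The central tool is the matrix $Q_2 = N(A_2,W)$ for the profile $A_2$ in which every ballot is reported exactly once: its columns are constant across candidates, so \Cref{lem:0element} gives $g_W(jQ_2)=\mathcal{C}\setminus W$ for every $j\ge 1$, and consistency then yields the absorbing identity $g_W(Q+jQ_2)=g_W(Q)$ for every $Q\in D^{\mathbb{N}}_W$, since $g_W(Q)$ is a nonempty subset of $\mathcal{C}\setminus W=g_W(jQ_2)$.

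First I would extend $g_W$ from $D^{\mathbb{N}}_W$ to the full integer group $G$ it generates. Since $Q_2$ has strictly positive entries, for $Q\in G$ all sufficiently large $j$ make $Q+jQ_2$ a non-negative matrix; I set $g_W(Q):=g_W(Q+jQ_2)$ for any $j$ for which $Q+jQ_2\in D^{\mathbb{N}}_W$. The absorbing identity makes this independent of $j$ (two admissible choices $j<j'$ differ by the addition of $(j'-j)Q_2$) and shows the definition agrees with the original on $D^{\mathbb{N}}_W$. Second, I would pass to rational matrices: for $Q\in\mathbb{Q}^{n\times|Z_W|}$ pick a positive integer $j$ with $jQ\in G$ and set $g_W(Q):=g_W(jQ)$. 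Here consistency gives the scaling invariance $g_W(kP)=g_W(P)$ for all $P\in G$ and $k\ge 1$ (from $g_W(P)\cap g_W(P)\neq\emptyset$ and induction), from which well-definedness across different choices of $j$ follows by comparing through the common multiple $jj'$.

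Finally, each of the four properties transfers to the rational extension by clearing denominators to reduce to an integer statement and then invoking scaling invariance; for instance, given $Q,Q'$ with $g_W(Q)\cap g_W(Q')\neq\emptyset$ I would choose a common $j$ with $jQ,jQ'\in G$, apply consistency on $G$ to $jQ$ and $jQ'$, and rewrite using $g_W(jX)=g_W(X)$. The genuinely delicate point---and the step I expect to be the main obstacle---is the realizability claim in the first stage: $D^{\mathbb{N}}_W$ is a \emph{proper} subset of the non-negative integer matrices in $G$, because decomposing a column $v$ into $(\ell-k)$-subsets requires the largest entry of $v$ not to exceed the number $\sum_i v_i/(\ell-k)$ of voters of that type. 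I would resolve this by noting that adding $jQ_2$ shifts every column by the constant vector $jc_{k,\ell}\mathbf{1}$, so for large $j$ each column becomes balanced enough that the Gale--Ryser condition for decomposition into $(\ell-k)$-subsets is met; this guarantees $Q+jQ_2$ is genuinely realizable and not merely non-negative, which is exactly what the extension requires.
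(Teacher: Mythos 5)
Your proposal is correct, and its skeleton is the same as the paper's: a two-stage extension (first to integer matrices, then to rationals) anchored on the matrix of the fully symmetric profile (your $Q_2$, the paper's $E$), with \Cref{lem:0element} giving $g_W(jQ_2)=\mathcal{C}\setminus W$ and consistency supplying well-definedness at every step; indeed the paper's definition $\hat g_W(Q-kE):=g_W(Q)$ is literally your ``add enough copies of $Q_2$ to land back in $D^{\mathbb{N}}_W$, then evaluate''. Where you genuinely diverge is in the two coverage arguments. For realizability, the paper never needs a combinatorial decomposition theorem: it stays at the level of profiles and shows $\ell''E-N(A',W)\in D^{\mathbb{N}}_W$ by cloning the symmetric profile $A^*$ $\ell''$ times and deleting the ballots of $A'$ one by one, so the matrix is realizable by construction. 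You instead characterize $D^{\mathbb{N}}_W$ column-wise and invoke the Gale--Ryser criterion (each column indexed by $(k,\ell)$ must have sum divisible by $r=\ell-k$ and maximum entry at most $1/r$ times the sum), then check that adding $jQ_2$ eventually satisfies it because the column sum of $Q_2$ grows faster than its entries by the factor $n/r>1$, where $n=|\mathcal{C}\setminus W|$ (here $r\le n-1$ by the definition of $Z_W$). This is correct---divisibility is automatic on the group $G$ generated by $D^{\mathbb{N}}_W$, and the max-entry condition does become slack for large $j$---but it costs you a combinatorial lemma that the paper's ballot-deletion trick avoids. For fullness of the rational domain, the paper explicitly builds the standard basis of $\mathbb{Q}^{|\mathcal{C}\setminus W|\times|Z_W|}$ out of concrete profiles (its Step 5), whereas you observe that indicator vectors of $r$-subsets with $1\le r\le n-1$ already span $\mathbb{Q}^n$, so the rational span of $D^{\mathbb{N}}_W$ is the whole space; your argument is shorter and more conceptual, the paper's is constructive and self-contained. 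One ordering caveat: you invoke scaling invariance (hence consistency) of the extension on $G$ in order to define the rational stage, but defer all property-transfer verifications to the final paragraph; consistency on $G$ must be established first---it follows routinely from your absorbing identity, e.g.\ $g_W\bigl((Q+Q')+2jQ_2\bigr)=g_W(Q+jQ_2)\cap g_W(Q'+jQ_2)$ for $Q+jQ_2,\,Q'+jQ_2\in D^{\mathbb{N}}_W$---and only then used for scaling invariance and the second stage.
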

	\begin{proof}
		Let $g$ denote a proper, consistent, and complete generator function and fix a committee $W$ with $|W|\leq m-2$. Because of \Cref{lem:informationbasis}, there is a function $g_W(Q)$ from $D^\mathbb{N}_W$ to $\mathcal{C}\setminus W$ such that $g(A, W)=g_W(N(A,W))$ for all profiles $A\in\mathcal{A}^*$. Moreover, by the observation prior to this lemma, $g_W$ is continuous, neutral, consistent, and non-imposing. Also, recall that $D^\mathbb{N}_W$ is closed under addition of its elements and under multiplication with scalars in $\mathbb{N}$.
		
		For proving the lemma, we will heavily rely on the profile $A^*$ in which each ballot $A_i\in\mathcal{A}$ is submitted exactly once and its corresponding matrix $E=N(A^*,W)$ because the the symmetry of $A^*$ ensures that $g_W(E)=\mathcal{C}\setminus W$. Based on $E$, we will first show that the space $D^{\mathbb{Z}}_W=\{Q-kE\colon k\in \mathbb{N}_0, Q\in D^\mathbb{N}_W\}$ is closed under addition of its elements and under multiplication with scalars in $\mathbb{Z}$. Based on this insight, we will extend $g_W$ from $D^{\mathbb{N}}_W$ and show that our extension satisfies the required properties. In Step 3 and Step 4, we proceed analogously for $D^{\mathbb{Q}}_W=\{Q/k\colon k\in \mathbb{N}, Q\in D^\mathbb{Z}_W\}$. Finally, we will show that $D^{\mathbb{Q}}_W=\mathbb{Q}^{|\mathcal{C}\setminus W|\times |Z_W|}$, which then proves the lemma.\medskip
		
		\textbf{Step 1: $D^{\mathbb{Z}}_W$ is closed under addition and multiplication with scalars in $\mathbb{Z}$.}
	
		Our first goal is to show that $D^{\mathbb{Z}}_W=\{Q-kE\colon: k\in \mathbb{N}_0, Q\in D^\mathbb{N}_W\}$ is closed under addition and multiplication with scalars in $\mathbb{Z}$. We start by discussing the claim on addition. For this, observe that if $Q, Q'\in D^{\mathbb{Z}}_W$, then there are $P, P'\in D^\mathbb{N}_W$ and integers $k,k'\in\mathbb{N}_0$ such that $Q=P-kE$ and $Q'=P'-k'E$. Since $D^\mathbb{N}_W$ is closed under addition, we derive that $P+P'\in D^\mathbb{N}_W$. This implies that $Q+Q'=P+P'-(k+k')E\in D^\mathbb{Z}_W$. 
		
		Next, we show that $D^\mathbb{Z}_W$ is closed under multiplication with scalars $k\in \mathbb{Z}$. For this, let $Q$ denote an arbitrary element of $D^{\mathbb{Z}}_W$ and note that, by definition, there are $P\in D^{\mathbb{N}}_W$ and $\ell\in \mathbb{N}_0$ such that $Q=P-\ell E$. We need to show that $kQ\in D^\mathbb{Z}_W$ for all $k\in \mathbb{Z}$ and proceed for this with a case distinction with respect to $k$. First, observe that $0Q$ is the matrix containing only $0$'s. Since $E\in D^\mathbb{N}_W$ and $0=E-E$, this $0$-matrix is in $D^\mathbb{Z}_W$. Next, if $k \in\mathbb{N}$, $kQ\in D^\mathbb{Z}_W$ since $D^\mathbb{N}_W$ is closed under multiplication with scalars in $\mathbb{N}$. In particular, this means that $k P\in D^\mathbb{N}_W$ and thus, $k Q=k P - k\ell E\in D^{\mathbb{Z}}_W$. 
		
		As last case, suppose that $k$ is negative. By the last case, we already know that $-k Q\in D^{\mathbb{Z}}_W$ since $-k\in\mathbb{N}$. This means that there are $P'\in D^{\mathbb{N}}_W$ and $\ell'\in \mathbb{N}_0$ such that $-kQ=P'-\ell' E$. Next, let $\ell''\geq \ell'$ denote an integer such that all entries in $P'-\ell''E$ are negative; such an $\ell''$ exists since all profiles are finite. By definition of $D^{\mathbb{N}}_W$, there is a profile $A'$ such that $P'=N(A',W)$ and recall that $E=N(A^*,W)$, where $A^*$ contains every ballot exactly once. Now, construct the profile $A''$ as follows: first, we clone the profile $A^*$ $\ell''$ times and then we remove for each voter in $i\in N_{A'}$ a voter with the corresponding ballot in $A'_i$ from $A''$. It is not difficult to see that $P''=N(A'',W)=\ell''N(A^*,W)- N(A',W)$ and of course, $P''\in D^\mathbb{N}_W$. Finally, we derive from this observation that $kQ=\ell'E - P'=P''-(\ell''-\ell')E$, which proves that $kQ\in D^\mathbb{Z}_W$. Hence, $D^\mathbb{Z}_W$ is indeed closed under multiplication with scalars $k\in\mathbb{Z}$.\medskip
		
		\textbf{Step 2: Extending $g_W$ to $D^{\mathbb{Z}}_W$.}
		
		 As second step, we extend $g_W$ to a function $\hat g_W$ on $D^{\mathbb{Z}}_W$. In particular, we define $\hat g_W(Q-kE)=g_W(Q)$ for every $k\in\mathbb{N}_0$ and $Q\in{D}^{\mathbb{N}}_W$. This is well-defined because of consistency: if there are two different matrices $Q,Q'\in D^\mathbb{N}_W$ and $k,\ell\in\mathbb{N}_0$ such that $Q-kE=Q'-\ell E$, then $Q'=Q+(k-\ell)E$. Assuming that $k>\ell$, we can derive $Q'$ from $Q$ by adding $k-\ell$ copies of $E$. Thus, consistency implies that $g_W(Q')=g_W(Q)\cap g((k-\ell)E)=g_W(Q)$. If $k<\ell$, we can use an analogous argument by exchanging the roles of $Q$ and $Q'$.
		 
		 Next, we show that $\hat g_W$ satisfies all required properties. First, note that for $Q\in D^\mathbb{N}_W$, we have that $\hat g_W(Q)=g_W(Q)=g(A,W)$ for all profiles $A\in\mathcal{A}^*$ with $Q=N(A,W)$. This immediately implies also that $\hat g_W$ is non-imposing as $g_W$ satisfies this property. 
		 
		 For proving that $\hat g_W$ is neutral, consistent, and continuous, slightly more involved arguments are required. For presenting them, let $Q,Q'\in D^\mathbb{Z}_W$ and note that, by definition of $D^\mathbb{Z}_W$, there is $P,P'\in D^\mathbb{N}_W$ and $k,k'\in \mathbb{N}_0$ such that $Q=P-kE$ and $Q'=P'-k'E$. Using the definition of $\hat g_W$, it thus follows that $\hat g_W(Q)=g_W(P)$ and $\hat g_W(Q')=g_W(P')$. 
		 
		 We are now ready to show that $\hat g_W$ is neutral. For doing so, let $\tau$ denote a permutation on $\mathcal{C}\setminus W$ and $\tau(Q)$ the matrix derived from $Q$ by permuting its rows according to $\tau$. It is not difficult to see that $\tau(Q)=\tau(P)-k\tau(E)$. Moreover, $E$ is completely symmetric and thus $\tau(E)=E$. Hence, we infer that $\hat g_W(\tau(Q))=\hat g_W(\tau(P)-kE)=g_W(\tau(P))=\tau(g_W(P))=\tau(\hat g_W(P-kE))=\tau(\hat g_W(Q))$, which shows that $\hat g_W$ is neutral. 
		 
		 As next claim, we prove that $\hat g_W$ is consistent. Thus, assume that $\hat g_W(Q)\cap \hat g_W(Q')\neq \emptyset$. The consistency of $g_W$ implies that $g_W(P+P')=g_W(P)\cap g_W(P')$ and thus, $\hat g_W(Q+Q')=\hat g_W(P+P'-(k+k')E)=g_W(P+P')=g_W(P)\cap g_W(P')=\hat g_W(Q)\cap \hat g_W(Q')$. 
		 
		 A similar argument shows that $\hat g_W$ is continuous. For this, suppose that $|\hat g_W(Q)|=1$. Since $|g_W(P)|=|\hat g_W(Q)|=1$, the continuity of $g_W$ implies that there is an integer $\ell\in\mathbb{N}$ such that $g_W(\ell P+P')=g_W(P)$. Finally, using the definition of $\hat g_W$ again, we derive that $\hat g_W(\ell Q+Q')=\hat g_W(\ell P+P'-(\ell k+k')E)=g_W(\ell P + P')=g_W(P)=\hat g_W(Q)$. This shows that $\hat g_W$ is continuous.\medskip
		 
		 \textbf{Step 3: $D^{\mathbb{Q}}_W$ is closed under addition and multiplication with scalars in $\mathbb{Q}$.}
		 
 		Next, we show again that $D^\mathbb{Q}_W$ is closed under addition and multiplication with scalars in $\mathbb{Q}$. For addition, consider two elements $Q, Q'\in D^\mathbb{Q}_W$. By definition there are $P, P'\in D^\mathbb{Z}_W$ and $k, k'\in \mathbb{N}$ such that $Q=P/k$ and $Q'=P'/k'$. Clearly, $k' P, k P'\in D^\mathbb{Z}_W$ since $D^\mathbb{Z}_W$ is closed under multiplication with integers. Hence, $k'P+kP'\in D^\mathbb{Z}_W$ due to is closure under addition. Finally, this means that $Q+Q'=P/k+P'/k'=(k'P+kP')/(k\cdot k')\in D^{\mathbb{Q}}_W$, which shows that $D^\mathbb{Q}_W$ is closed under addition. 
		
 		As last point, we show that $D^{\mathbb{Q}}_W$ is closed under multiplying with scalars $k\in \mathbb{Q}$. Since $k\in \mathbb{Q}$, there are $\ell_1\in\mathbb{Z}$ and $\ell_2\in\mathbb{N}$ such that $k=\frac{\ell_1}{\ell_2}$. Now, consider an arbitrary $Q\in D^\mathbb{Q}_W$ and recall that by definition, there is $P\in D^\mathbb{Z}_W$ and $\ell_3\in\mathbb{N}$ such that $Q=P/\ell_3$. Since $D^{\mathbb{Z}}_W$ is closed under multiplication with scalars in $\mathbb{Z}$, we have that $\ell_1 P \in D^{\mathbb{Z}}_W$. Because $\ell_2\cdot \ell_3\in\mathbb{N}$, we thus have that $Q=\ell_1 P/(\ell_2\cdot \ell_3)\in D^\mathbb{Q}_W$ by definition.\medskip
		
		\textbf{Step 4: Extending $\hat g_W$ to $D^{\mathbb{Q}}_W$.}
		
		As fourth step, we extend $\hat g_W$ to $D^\mathbb{Q}_W$ by defining $\bar g_W(Q/k)=\hat g_W(Q)$ for every $Q\in D^\mathbb{Z}_W$, $k\in\mathbb{N}$. Once again, consistency ensures that this is well-defined: if there are $Q,Q'\in D^\mathbb{Z}_W$ and $k,\ell\in \mathbb{N}$ such that $Q/k=Q'/\ell$, then the consistency of $\hat g_W$ ensures that $\bar g_W(Q/k)=\hat g_W(Q)=\hat g_W(\ell Q)=\hat g_W(k Q')=\hat g_W(Q')=\bar g_W(Q'/\ell)$. 
		
		Moreover, note that $\bar g_W(Q)=g_W(Q)=g(W,A)$ for all $Q\in D^\mathbb{N}_W$ and profiles $A$ with $Q=N(A,W)$ by the definitions of $\bar g_W$, $\hat g_W$, and $g_W$. Hence, $\bar g_W$ indeed satisfies the first condition of this lemma. Also, this shows that $\bar g_W$ is non-imposing as even $g_W$, which is defined on $D^\mathbb{N}_W\subseteq D^{\mathbb{Q}}_W$, satisfies this axiom.
		
		Analogous to Step 2, proving the neutrality, consistency, and continuity of $\bar g_W$ takes more effort and we consider therefore $Q, Q'\in D^{\mathbb{Q}}_W$. By the definition of $D^{\mathbb{Q}}_W$, there are $P, P'\in D^{\mathbb{Z}}_W$ and $k,k'\in\mathbb{N}$ such that $Q=P/k$ and $Q'=P'/k'$. The definition of $\bar g_W$ then shows that $\bar g_W(Q)=\hat g_W(P)$ and $\bar g_W(Q')=\hat g_W(P')$.
		
		Now, we prove that $\bar g_W$ inherits neutrality from $\hat g_W$. For showing this, let $\tau$ denote a permutation on $\mathcal{C}\setminus W$. It is apparent that $\tau(Q)=\tau(P)/k$ and thus, $\bar g_W(\tau(Q))=\bar g_W(\tau(P)/k)=\hat g_W(\tau(P))=\tau(\hat g_W(P))=\tau(\bar g_W(P/k))=\tau(\bar g_W(Q))$.
		
		Next, we will show that $\bar g_W$ is consistent. For this, assume that $\bar g_W(Q)\cap \bar g_W(Q')\neq \emptyset$, which implies that $\hat g_W(P)\cap \hat g_W(P')\neq \emptyset$. We infer from the consistency of $\hat g_W$ that $\hat g_W(k'P+kP')=\hat g_W(k'P)\cap \hat g_W(kP')=\hat g_W(P)\cap \hat g_W(P')$. This implies that $\bar g_W(Q+Q')=\bar g_W(P/k+P'/k')=\bar g_W((k'P+kP')/(k\cdot k'))=\hat g_W(k'P+kP')=\hat g_W(P)\cap \hat g_W(P')=\bar g_W(Q)\cap \bar g_W(Q')$. Hence, $\bar g_W$ is consistent. 
		
		As last point, we prove that $\bar g_W$ is continuous and we thus suppose that $|\bar g_W(Q)|=1$. For proving this claim, note that consistency implies that $\hat g_W(P)=\hat g_W(k'P)$ and $\hat g_W(P')=\hat g_W(kP')$. Since $\bar g_W(Q)=\hat g_W(P)$, there is an integer $\ell\in\mathbb{N}$ such that $\hat g_W(\ell k' P+kP')=\hat g_W(k'P)$. This means for $\bar g_W$ that $\bar g_W(\ell Q+Q')=\bar g_W((\ell k' P + k P')/(k\cdot k'))=\hat g_W(\ell k' P + k P')=\hat g_W(k'P)=\hat g_W(P)=\bar g_W(Q)$, i.e., $\bar g_W$ is continuous.\medskip
		
		\textbf{Step 5: $D^{\mathbb{Q}}_W=\mathbb{Q}^{|\mathcal{C}\setminus W|\times |Z_W|}$.}
		
		Finally, we will show that $D^\mathbb{Q}_W$ is equal to the full space $\mathbb{Q}^{|\mathcal{C}\setminus W|\times |Z_W|}$. For proving this, we will show that the standard basis of $\mathbb{Q}^{|\mathcal{C}\setminus W|\times |Z_W|}$ is part of $D^\mathbb{Q}_W$. Hence, consider a fixed candidate $c\in\mathcal{C}\setminus W$ and a tuple $(k,\ell)\in Z_W$. First, let $A^1$ denote the profile in which each ballot $A_i\in\mathcal{A}$ except those with $|A_i|=\ell$ and $|A_i\cap W|=k$ appears once. It is not difficult to see that $Q^1=N(A^1,W)$ differs from $E$ only in the column corresponding to $(k,\ell)$ since all these entries are $0$ for $Q^1$. Next, let $Q^2=Q^1-E$ and note that $Q^2\in D^\mathbb{Z}_W$. This matrix has non-zero entries only in the column $(k,\ell)$, and all entries in this column are equal and negative, i.e., there is $x_1$ such that $Q^2_{d,k,\ell}=-x_1$ for all $d\in\mathcal{C}\setminus W$. 
		
		Furthermore, let $A^3$ denote the profile which contains each ballot $A_i\in\mathcal{A}$ with $|A_i|=\ell$, $|A_i\cap W|=k$, and $c\in A_i$ once. Note that there is at least one such ballot because $k\leq |W|$ and $k<\ell$. Moreover, no such ballot contains all candidates $x\in\mathcal{C}\setminus W$ because $\ell<m-|W|+k$. Hence, it follows that $c$ is approved by strictly more voters than any other candidate $d\in\mathcal \mathcal{C}\setminus (W\cup \{c\})$. On the other hand, due to the symmetry of $A^3$, all these candidates $d$ are approved by the same number of voters. Finally, note that $Q^3=N(A^3,W)$ has only non-zero entries in the column corresponding to $(k,\ell)$. Hence, there are two positive constants $x_2$, $x_3$ such that $x_2>x_3$, $Q^3_{c,k,\ell}=x_2$, and $Q^3_{d,k,\ell}=x_3$ for all $d\in\mathcal{C}\setminus (W\cup \{c\})$.
		
		As last step, let $Q^4=x_1 Q^3 + x_3 Q^2$. Since $D^\mathbb{Z}_W$ is closed under multiplication with scalars in $\mathbb{Z}$ and addition of its elements, it follows that $Q^4\in D^\mathbb{Z}_W$. Moreover, we infer from our previous observations that $Q^4_{c,k,\ell}=x_1\cdot (x_2-x_3)>0$, whereas all other entries are $0$. Hence, the matrix $Q^5=\frac{Q^4}{x_1\cdot (x_2-x_3)}$ contains $1$ at $Q^5{c,k,\ell}$ and $0$ for all other entries. Moreover, this matrix is in $D^\mathbb{Q}_W$ by the definition of this set. Since $c\in\mathcal{C}\setminus W$ and $(k,\ell)\in Z_W$ are arbitrarily chosen, it follows that the standard basis is part of $D^\mathbb{Q}_W$. Finally, this shows that $D^{\mathbb{Q}}_W=\mathbb{Q}^{|\mathcal{C}\setminus W|\times |Z_W|}$ since $D^\mathbb{Q}_W$ is closed under addition of its elements and multiplication with scalars in $\mathbb{Q}$.
	\end{proof}
	
	Finally, we are able to prove \Cref{prop:hyperplane}. For showing this statement, we will use a separation theorem for convex sets and thus, we will use standard terminology from convex optimization (e.g., polyhedron, subspace, dimension, facets) in the subsequent proof. We refer to \citet{McLe18a} for the definitions of these terms.
	
	\hyperplane*
	\begin{proof}
		Let $g$ be defined as in the lemma and first note that the case that $|W|=m-1$ is trivial as there is only a single remaining candidate. Hence, $g(A,W)=\mathit{AV}_v^W(A)$ for every weight function $v$ as both are by definition always non-empty. Thus, consider a committee $W$ of size $|W|\leq m-2$. Moreover, we define $d_1=|\mathcal{C}\setminus W|$ and $d_2=|Z_W|$. Finally, in this proof we will denote the candidates in $\mathcal{C}\setminus W$ merely by numbers from $1$ to $d_1$.
		
		By \Cref{lem:extension}, there is a function $g_W$ from $\mathbb{Q}^{d_1\times d_2}$ to $2^{\mathcal{C}\setminus W}\setminus \{\emptyset\}$ that is consistent, neutral, non-imposing, and continuous, and that satisfies that $g_W(N(A,W))=g(A,W)$ for all $A\in\mathcal{A}^*$. Next, define $R_i=\{Q\in\mathbb{Q}^{d_1\times d_2}\colon c_i\in g_W(Q)\}$ for every $i\in \{1,\dots, d_1\}$. Moreover, let $\bar R_i$ denote the closure of $R_i$ with respect to $\mathbb{R}^{d_1\times d_2}$. 
		
		We will prove the proposition in multiple steps by analyzing the sets $\bar R_i$. In more detail, we show in Step 1 that these sets are full-dimensional and convex cones. This implies that they have a non-empty interior. As next step, we prove that the interiors of these sets are disjoint. We can therefore use the separation theorem for convex sets to find a separating hyperplane between every pair $\bar R_i$, $\bar R_j$. Even more, we show in the third step that these hyperplanes are unique up to multiplication with a positive scalar. As fourth step, we extract a scoring vector from these separating hyperplanes and show thereafter that $g_W(Q)$ can be represented based on this score vector. Finally, we derive from this insight that $g(A,W)$ can be represented by $\mathit{AV}_v^W(A)$ for some weight vector $v$.\medskip
		
		\textbf{Step 1:} Our first goal is to show that $\bar R_i$ is a fully dimensional and convex cone for every $i\in \{1,\dots, d_1\}$. For this, note that the consistency of $g_W$ implies that $R_i$ is a $\mathbb{Q}$-cone. 
		(A set is called $\mathbb{Q}$-convex if it is closed with respect to convex combinations using \emph{rational} scalars $0\leq q\leq 1$ instead of real ones. Moreover, a $\mathbb{Q}$-cone is a $\mathbb{Q}$-convex set that is closed with respect to multiplication of any non-negative, rational scalar.) 
		It is not difficult to see that $\bar R_i$, i.e., the closure of $R_i$ with respect to $\mathbb{R}^{d_1\times d_2}$, is convex for every $i\in\{1,\dots, d_1\}$. This is also formally proven by \citet{Youn75a}.
		Moreover, $\bigcup_{i\in\{1,\dots, d_1\}} \bar R_i = \mathbb{R}^{d_1\times d_2}$ and neutrality entails that $\bar R_{\tau(i)} = \tau(\bar R_i)$ for all permutations $\tau:\{1,\dots, d_1\} \rightarrow \{1,\dots, d_1\}$ and $i\in\{1,\dots, d_1\}$. 
		In particular, the latter fact means that all $\bar R_i$ have the same dimension and they must thus have the same dimension as $\mathbb{R}^{d_1\times d_2}$. 
		Hence, the interior of $\bar R_i$ (with respect to $\mathbb{R}^{d_1\times d_2}$), denoted by $\text{int } \bar R_i$, is non-empty.\medskip
		
		\textbf{Step 2:} Next, we will show that that the interiors of $\bar R_i$, $\bar R_j$ do not intersect. Hence, assume for contradiction that $\text{int } \bar R_i\cap \text{int } \bar R_j\neq \emptyset$ for some $i,j \in\{1,\dots, d_1\}$ with $i\neq j$. Then, there is $Q\in D^\mathbb{Q}\cap \text{int } \bar R_i\cap \text{int } \bar R_j$, which means that $\{i, j\}\subseteq g_W(Q)$. On the other hand, there is $Q'$ such that $g_W(Q')=\{i\}$ because $g_W$ is non-imposing. 
		
		Now, since $Q$ is in the interior of both $\bar R_i$ and $\bar R_j$, we can find a $\lambda\in \mathbb{Q}$ such that $0<\lambda<1$ and $(1-\lambda) Q + \lambda Q'\in D^\mathbb{Q} \cap \text{int } \bar R_i\cap \text{int } \bar R_j$. This means that $(1-\lambda) Q + \lambda Q'\in R_i\cap R_j$ and thus $\{i, j\}\subseteq g_W((1-\lambda) Q + \lambda Q')$. However, consistency requires that $g_W((1-\lambda) Q + \lambda Q')=\{i\}$, contradicting the previous claim. Thus, $\text{int } \bar R_i\cap \text{int } \bar R_j= \emptyset$ for all distinct candidates $i, j\in \{1,\dots,d_1\}$.\medskip
		
		\textbf{Step 3:} As third step, we show for all $i,j\in\{1,\dots, d_1\}$ that there is a unique hyperplane (up to multiplication with positive scalars) that separates $\bar R_i$ and $\bar R_j$. Note for this that, because these sets are convex and their interiors do not intersect, the separating hyperplane theorem \citep[e.g.,][]{McLe18a} shows that there is a non-zero vector $u^{ij}\in\mathbb{R}^{d_1\times d_2}$ such that $u^{ij}Q\geq 0$ if $Q\in \bar R_i$ and $u^{ij}Q\leq 0$ if $Q\in \bar R_j$. 
		We define here the matrix multiplication $u^{ij}Q$ as the standard scalar product $\sum_{k\in \{1,\dots, d_1\}, \ell\in \{1,\dots, d_2\}} u^{ij}_{k,\ell} Q_{k,\ell}$.
		 Furthermore, we suppose that $u^{ji}=-u^{ij}$ for all $i,j \in\{1,\dots, d_1\}$ with $i\neq j$. 
		Clearly, if $u^{ij}$ satisfies that $u^{ij}Q\geq 0$ if $Q\in \bar R_i$ and $u^{ij}Q\leq 0$ if $Q\in \bar R_j$, then $u^{ji}Q\geq 0$ if $Q\in \bar R_j$ and $u^{ji}Q\leq 0$ if $Q\in \bar R_i$. Hence, it suffices to derive only one of these two vectors from a hyperplane argument. 
		
		Next, we show that the $u^{ij}$ are unique up to multiplication with a positive scalar. For this, let $S_i=\{Q\in \mathbb{R}^{d_1\times d_2}\colon \forall j\in \{1,\dots, d_1\}, j\neq i\colon u^{ij}Q\geq 0\}$ for all candidates $i\in\{1,\dots, d_1\}$. By definition, we have that $\bar R_i\subseteq S_i$. 
		Hence, $\emptyset\subsetneq \text{int } \bar R_i\subseteq \text{int } S_i=\{Q\in \mathbb R^{d_1\times d_2}\colon: \forall j\in \{1, \dots, d_1\}, j\neq i: u^{ij}Q> 0\}$. Furthermore, if $u^{ij}Q>0$ for some $Q$ and $i,j\in \{1,\dots,d_i\}$, then $u^{ji} Q=-u^{ij} Q<0$, which implies that $Q\not\in\bar R_j$. 
		Therefore, $\text{int } S_i\subseteq R^{d_1\times d_2}\setminus \bigcup_{j\in \{1,\dots, d_j\}\setminus \{i\}} \bar R_j$. Finally, since $\bigcup_{j \in \{1,\dots, d_1\}} \bar R_j = \mathbb{R}^{d_1\times d_2}$, it holds that $R^{d_1\times d_2}\setminus \bigcup_{j\in \{1,\dots, d_j\}\setminus \{i\}} \bar R_j\subseteq \bar R_i$. By combining these insights, we derive that $\text{int } \bar R_i \subseteq \text{int } S_i\subseteq \bar R_i$. By taking the closure, it thus follows that $\bar R_i=S_i$. 
		
		In particular, this means that $\bar R_i\neq \mathbb{R}^{d_1\times d_2}$ is a full-dimensional polyhedron and thus, it has a facet $F$ of dimension $d_1\cdot d_2 -1$. Since all $\bar R_j$ are closed and $\bigcup_{j\in \{1,\dots, d_1\}} \bar R_j =\mathbb{R}^{d_1\times d_2}$, it follows that $\bigcup_{j\in \{1,\dots, d_1\}\setminus \{i\}} \bar R_j\cap F=F$. Finally, since $d_1$ is finite, this means that there is $j\neq i$ such that $F\cap \bar R_j$ has dimension $d_1\cdot d_2-1$. This also shows that the intersection of $\bar R_i$ and $\bar R_j$ has dimension of $d_1\cdot d_2 -1$. Now, by symmetry this must hold for all $i,j\in \{1,\dots, d_1\}$, $i\neq j$ and the dimensionality of $\bar R_i\cap \bar R_j$ implies then that $u^{ij}$ is unique up to multiplication with positive scalars.\medskip

		\textbf{Step 4:}
		Our next goal is to represent $g_W$ by a weight vector. For this purpose, we show first that $u^{\tau(i)\tau(j)}=\tau(u^{ij})$ for all permutations $\tau:\{1,\dots, d_1\}\rightarrow\{1,\dots, d_1\}$ and distinct $i,j\in\{1,\dots, d_1\}$. For proving this claim, fix arbitrary $i,j$ and $\tau$. Moreover, let $\tau^{-1}$ denote the inverse permutation of $\tau$, i.e., $\tau^{-1}(\tau(x))=x$ for all $x\in\{1,\dots, d_1\}$. By the neutrality of $g_W$, we derive that $Q\in \bar R_i$ if and only if $\tau(Q)\in \bar R_{\tau(i)}$ and $Q\in \bar R_j$ if and only if $\tau(Q)\in \bar R_{\tau(j)}$. Furthermore, by the definition of our matrix multiplication, it holds that $\tau (Q) u^{ij}=Q\tau (u^{ij})$ for all matrices $Q\in\mathbb{R}^{d_1\times d_2}$. 
		
		Now, let $Q\in \bar R_{\tau(i)}$. It follows by our previous observation that $\tau^{-1}(Q)\in \bar R_i$ and thus $\tau^{-1}(Q)u^{ij}\geq 0$. Hence, we have that $Q\tau(u^{ij})\geq 0$. An analogous argument holds for $\bar R_j$, and thus, we have that $Q\tau(u^{ij})\geq 0$ if $Q\in \bar R_{\tau(i)}$ and $Q\tau(u^{ij})\leq 0$ if $Q\in \bar R_{\tau(j)}$. By the uniqueness of the separating hyperplane, we thus infer that $u^{\tau(i)\tau(j)}=\tau(u^{ij})$.
		
		In particular, observe that this claim also holds for the permutation $\tau^{ij}$ which only swaps $i$ and $j$. Hence, we have that $\tau^{ij}(u^{ij})=u^{ji}=-u^{ij}$. Since $\tau^{ij}$ only swaps the $i$-th and $j$-th row of $u^{ij}$, we infer that $u^{ij}_{\ell,k}=-u^{ij}_{\ell,k}$ for $\ell\in \{i,j\}$, $k\in \{1,\dots, d_2\}$ and $u^{ij}_{\ell,k}=u^{ji}_{\ell,k}=0$ for $\ell\in \{1,\dots, d_1\}\setminus \{i,j\}$ and $k\in \{1,\dots, d_2\}$. 
		
		Now, let $s=u^{ij}_i$, i.e., $s$ is the $i$-th row of $u^{ij}$. The argument in the last paragraph shows that $Q u^{ij}\geq 0$ if $Q_i s\geq Q_j s$ (here $Q_i s=\sum_{k\in \{1,\dots, d_2\}} Q_{i,k}s_k$). Now, by the symmetry of the $u^{ij}$, it follows that $\bar R_i=\{Q\in \mathbb{R}^{d_1\times d_2}\colon \forall j\in \{1,\dots, d_1\}\setminus \{i\}\colon u^{ij}Q\geq 0\}=\{Q\in \mathbb{R}^{d_1\times d_2}\colon \forall j\in \{1,\dots, d_1\}\colon Q_i s\geq Q_j s\}$.\medskip
		
		\textbf{Step 5}: 
		Let $h_W(Q)=\{i\in \{1,\dots,d_1\}\colon \forall j\in \{1,\dots, d_1\}\colon Q_i s\geq Q_j s\}$, where $s$ is the vector derived in the last step.
		In this step, we show that $g_W(Q)=h_W(Q)$ for all $Q\in\mathbb{Q}^{d_1\times d_2}$. For this, note that the definition that $h_W$ shows that it is neutral and consistent. Moreover, it is non-imposing as $s$ is a non-zero vector. This follows as the underlying $u^{ij}$ are also non-zero vectors by the separating hyperplane theorem. Hence, let $s_k$ denote a non-zero entry in $s$. If $s_k>0$, it follows that $h_W(Q)=\{i\}$ for the matrix $Q$ in which there is a one in $Q_{i,k}$ and $0$ everywhere else, and if $s_k<0$, the same holds for the matrix $Q$ with $Q_{j,k}=1$ for all $j\in \{1,\dots, d_1\}$ with $j\neq i$ and $0$ otherwise. 
		
		Next, observe that, by the reasoning in Step 4, we have that $i\in h_W(Q)$ if and only if $Q\in \bar R_i$. Since $R_i\subseteq \bar R_i$ and $Q\in R_i$ only if $i\in g_W(Q)$, it follows immediately that $g_W(Q)\subseteq h_W(Q)$ for all $Q\in \mathbb{Q}^{d_1\times d_2}$.
		
		Finally, suppose there is $Q\in \mathbb{Q}^{d_1\times d_2}$ such that $g_W(Q)\subsetneq h_W(Q)$ and let $i\in g_W(Q)$. Now, let $\tau$ denote an arbitrary permutation such that $\tau(i)=i$ and $\tau(j)=j$ for all $j\in \{1,\dots, d_1\}\setminus h_W(Q)$. The neutrality of $g_W$ implies that $i\in g_W(\tau(Q))$ and, since $h_W$ is by definition neutral, it follows that $h_W(\tau(Q))=\tau(h_W(Q))=h_W(Q)$. Now, let $Q^*$ denote the matrix derived by summing up all $\tau(Q)$ for permutations $\tau$ with $\tau(i)=i$ and $\tau(j)=j$ for $j\in \{1,\dots, d_1\}\setminus h_W(Q)$. Consistency for $g_W$ implies that $g_W(Q^*)=\{i\}$ and for $h_W$ that $h_W(Q^*)=h_W(Q)$. 
		
		Finally, let $i'\in h_W(Q)\setminus g_W(Q)$ and let $W'$ denote a profile such that $h_W(Q')=\{i'\}$. Such a profile exists since $h_W$ is non-imposing. Now, by consistency, it holds for every integer $\ell\in\mathbb{N}$ that $h_W(\ell Q^*+Q')=\{i'\}$ and therefore also $g_W(\ell Q^*+Q')=\{i'\}$ because $g_W(\ell Q^*+Q')\subseteq h_W(\ell Q+Q')$. However, this conflicts with continuity, which states that there must be an integer $\ell'$ such that $g_W(\ell' Q^*+Q',)=\{i\}$. Hence, our initial assumption is wrong and $g_W(Q)=h_W(Q)$ for all $Q\in\mathbb{Q}^{d_1\times d_2}$.\medskip
		
		\textbf{Step 6:}
		Finally, we show that $g(A,W)$ can be represented approval voting $\mathit{AV}^W_v(A)$. For doing so, consider a profile $A\in\mathcal{A^*}$ and let $Q=N(A,W)$. Now, recall that every entry in $Q$ corresponds to $n(c,A,W,k,\ell)=|\{i\in N_A\colon c\in A_i\land |A_i\cap W|=k \land |A_i|=\ell \}|$ for some candidate $c\in\mathcal{C}\setminus W$ and $(k,\ell)\in Z_W$. Since our vector $s$ contains also an entry for $(k,\ell)\in Z_W$, there is a very natural weight vector $v^W(x,y)$: we set $v^W(x,y)=s_{x,y}$ for all $(x,y)\in Z_W$ and $0$ otherwise. Hence, we need to show that $g(A,W)$ contains precisely the candidates $c\in\mathcal{C}\setminus W$ that maximize the score $s_{v^W}(A,c)=\sum_{i\in N_A\colon c\in A_i} v^W(|A_i\cap W|, |A_i|)$.
		
		For proving this, observe that $s_{v^W}(A,c)$ is equivalent to $\sum_{(k,\ell)\in Z_W} n(c,A,W,k,\ell) v(k,\ell)=s \cdot N(A,W)_c$.
		Hence, it we derive that $g(A,W)=g_W(N(A,W))=\{c\in \mathcal{C}\setminus W\colon \forall x\in\mathcal{C}\setminus W\colon s\cdot N(A,W)_c\geq s\cdot N(A,W)_x\}=\{c\in \mathcal{C}\setminus W\colon \forall x\in\mathcal{C}\setminus W\colon s_{v^W}(A,c)\geq s_{v^W}(A,x)\}=\mathit{AV}_{v^W}$. This proves this proposition.
	\end{proof}
	
	\subsection{Proof of \Cref{prop:relation}}
	
	As last proposition, we show \Cref{prop:relation}. For this, we first investigate the basic properties of the considered rules and show that all sequential valuation rules are consistently committee monotone, and that all step-dependent sequential scoring rules are proper. Since step-dependent sequential scoring rules are valuation rules, this also proves that these rules are also consistently committee monotone. Analogous reasoning also entails that sequential Thiele rules and step-dependent sequential Thiele rules are proper rules. 
		
	\begin{lemma}\label{lem:basicAxioms}
		Every step-dependent sequential scoring rule is a proper ABC voting rule. Every sequential valuation rule is consistently committee monotone. 
	\end{lemma}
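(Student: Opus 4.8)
The plan is to treat the two assertions separately, disposing of the generator-based second statement quickly and concentrating almost all of the work on non-imposition in the first. Throughout, write $s_v(A,W)=\sum_{i\in N_A}v(A_i,W)$ for the additive score of a step-dependent sequential scoring rule $f$ with valuation $v(A_i,W)=h(|A_i\cap W|,|W|,|A_i|)$. Anonymity is immediate, since $s_v(A,W)$ depends only on the multiset of ballots and is therefore invariant under relabelling voters; an induction on $k$ then gives $f(A,k)=f(\pi(A),k)$. For neutrality I would first note that $v$ is neutral: for any bijection $\tau:\mathcal{C}\rightarrow\mathcal{C}$ we have $|\tau(A_i)\cap\tau(W)|=|A_i\cap W|$, $|\tau(W)|=|W|$ and $|\tau(A_i)|=|A_i|$, whence $v(\tau(A_i),\tau(W))=v(A_i,W)$ and $s_v(\tau(A),\tau(W))=s_v(A,W)$. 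Since $f$ extends its current committees by the score-maximal candidates, an induction on $k$ propagates this to $f(\tau(A),k)=\{\tau(W):W\in f(A,k)\}$.

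Continuity is where additivity first does real work. Given disjoint $A,A'$ and $k$ with $f(A,k)=\{W^*\}$, I would use $s_v(jA+A',W)=j\,s_v(A,W)+s_v(A',W)$. For a fixed committee $W$ with $|W|\le k-1$, any candidate $x$ that does not maximise $c\mapsto s_v(A,W\cup\{c\})$ trails a maximiser by at least a positive constant $\delta$ (the scores take only finitely many values), so for $j>C/\delta$, where $C$ bounds the differences of $s_v(A',W\cup\{\cdot\})$, every generator choice for $jA+A'$ at $W$ lies inside the generator choice for $A$ at $W$. As there are only finitely many such $W$, one large $j$ works for all of them at once, and then an induction on $i\le k$ shows $f(jA+A',i)\subseteq f(A,i)$; non-emptiness forces $f(jA+A',k)=f(A,k)$.

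The main obstacle is non-imposition, where the technical growth condition on $h$ is essential. Fixing a target $W^*=\{c_1,\dots,c_\ell\}$ with $1\le\ell\le m-1$ (the cases $\ell\in\{0,m\}$ being trivial) and writing $W_y=\{c_1,\dots,c_y\}$, it suffices to construct a profile $A$ with $g(A,W_{y-1})=\{c_y\}$ for all $y\le\ell$, since $f(A,y)=\{W_y\}$ then follows by induction. For each level $y\le\ell\le m-1$ the condition supplies $x_y\in\{1,\dots,y\}$ and $z_y\in\{x_y,\dots,m-1-(y-x_y)\}$ with $\mu_y:=h(x_y,y,z_y)-h(x_y-1,y,z_y)\neq 0$. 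I would build a gadget $B^y$, invariant under all permutations fixing $W_y$ pointwise, from ballots meeting $W_{y-1}$ in a fixed set of size $x_y-1$ and of total size $z_y$: if $\mu_y>0$ the ballots contain $c_y$ together with symmetrically chosen fillers from $\mathcal{C}\setminus W_y$, so $c_y$ accrues strictly more marginal weight than any competitor at $W_{y-1}$; if $\mu_y<0$ the ballots instead omit $c_y$ and range symmetrically over $\mathcal{C}\setminus W_y$, so $c_y$ receives weight $0$ while each competitor receives a strictly negative weight. The defining range of $z_y$ guarantees the required ballots exist and that the margin is strict. The key structural point is that $W_y$-invariance makes $B^y$ contribute equally to all candidates of $\mathcal{C}\setminus W_{y'-1}$ whenever $W_y\subseteq W_{y'-1}$, i.e. at every committee $W_{y'-1}$ with $y'>y$ (swap two remaining candidates by a permutation fixing $W_{y'-1}$); hence $B^{y'}$ affects the generator on $W_{y-1}$ only when $y'\ge y$, and among these only $B^y$ favours $c_y$. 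Taking $A=\sum_{y=1}^{\ell}n_y B^y$ and choosing $n_\ell,n_{\ell-1},\dots,n_1$ in this order, each so large that $n_y$ times the per-copy margin of $B^y$ exceeds the bounded total interference from the later gadgets $\sum_{y'>y}n_{y'}B^{y'}$, yields $g(A,W_{y-1})=\{c_y\}$ for every $y$, so $f(A,\ell)=\{W^*\}$.

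For the second statement, any sequential valuation rule $f$ is by definition generated by $g(A,W)=\{x\in\mathcal{C}\setminus W:\forall y\in\mathcal{C}\setminus W,\ s_v(A,W\cup\{x\})\ge s_v(A,W\cup\{y\})\}$, which is non-empty as a maximum over the finite set $\mathcal{C}\setminus W$ and therefore generates $f$. Consistency follows from additivity alone: if $c\in g(A,W)\cap g(A',W)$, then for every competitor the two defining inequalities add, giving $g(A,W)\cap g(A',W)\subseteq g(A+A',W)$; conversely, any $x\in g(A+A',W)$ ties the maximal combined score $s_v(A,W\cup\{c\})+s_v(A',W\cup\{c\})$, which is only possible if $x$ maximises each summand, yielding the reverse inclusion. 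Hence $g(A+A',W)=g(A,W)\cap g(A',W)$ whenever this intersection is non-empty, so $g$ is consistent and $f$ is consistently committee monotone. I expect the gadget construction and the bookkeeping of gadget interference across levels to be the only genuinely delicate part of the argument.
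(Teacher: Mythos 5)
Your proof is correct, and for anonymity, neutrality, continuity, and the second statement (the natural generator $g$ is complete and consistent by additivity of scores) it follows essentially the same route as the paper. Where you genuinely diverge is non-imposition, the hardest claim. The paper proves it by induction on the committee size: having established consistent committee monotonicity (your ``second statement'') first, it reuses the machinery of its Lemmas on sequential non-imposition and sequential continuity (\Cref{lem:seqNI,lem:seqCon}) to produce, for any sequence $W_1,\dots,W_k$, a profile that uniquely selects each $W_\ell$ and ties \emph{all} remaining candidates at step $k+1$; it then adds a single symmetrized tie-breaking gadget (built from a ballot witnessing $h(x,k+1,z)\neq h(x-1,k+1,z)$, symmetrized over permutations fixing $W_k\cup\{c\}$) and invokes sequential continuity to keep the earlier steps intact. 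Your gadget $B^y$ is the same object, but your assembly is different and more self-contained: instead of an induction mediated by consistency and continuity lemmas, you superpose all levels' gadgets at once, observe that $W_y$-invariance makes $B^{y'}$ exactly neutral at every committee $W_{z}\supseteq W_{y'}$ (so earlier gadgets never perturb later decisions), and then choose multiplicities $n_\ell,\dots,n_1$ backwards so that each level's strict margin dominates the bounded interference of the strictly later gadgets. What the paper's route buys is reuse: \Cref{lem:seqNI,lem:seqCon} are needed anyway for \Cref{prop:niceGenerator} and elsewhere, so the non-imposition proof comes almost for free on top of them. What your route buys is independence from that machinery -- the argument uses only additivity of scores and explicit symmetrization, and in particular does not need consistency of the generator to be established before non-imposition. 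The only points you leave implicit (strictness of the per-copy margin, which needs the filler sets to miss at least one candidate of $\mathcal{C}\setminus W_y$ in the $\mu_y>0$ case, guaranteed by $z_y\leq m-1-(y-x_y)$; and the finite bound on cross-level interference) are routine and verifiable.
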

	\begin{proof}
		The lemma consists of five independent claims: every sequential valuation rule is consistently committee monotone and every step-dependent sequential scoring rule is anonymous, neutral, continuous, and non-imposing. We will prove each of these claims separately.\medskip
		
		\textbf{Claim 1: All sequential valuation rules are consistently committee monotone.}
		
		Let $f$ denote a sequential valuation rule and $v$ its corresponding valuation function. We will show by induction on the committee size $k\in \{0,\dots,m\}$ that $g(A,W)=\{x\in\mathcal{C}\setminus W\colon \forall y\in \mathcal{C}\setminus W: s_v(A,W^x)\geq s_v(A,W^y)\}$ consistently generates $f$. For this, let $f_g(A,k)=\{W\cup\{x\}\in\mathcal{W}_k\colon W\in f_g(A,k-1), x\in g(A,W)\}$ denote the function generated by $g$ and note that $f_g$ is indeed an ABC voting rule since $g$ is complete. For the induction basis, we observe that $f(A,0)=f_g(A,0)=\{\emptyset\}$ for all profiles $A$. Next, assume that $f(A,k)=f_g(A,k)$ for some profile $A$ and a committee size $k\in \{0,\dots, m-1\}$. Moreover, let $W\in f(A,k)$. By the definition of $f$, $W\cup \{x\}\in f(A,k+1)$ if and only if $s_v(A,W^x\geq s_v(A,W^y)$ for all $y\in \mathcal{C}\setminus W$. This means by definition that $x\in g(A,W)$ and $W\cup \{x\}\in f_g(A,k+1)$. Since this equivalence is true for all committees $W$ and candidates $x$, $f(A,k+1)=f_g(A,k+1)$ and $g$ thus generates $f$. 
		
		Next, we show that $g$ is consistent. Thus, consider two profiles $A$ and $A'$ and a committee $W$ such that $g(A,W)\cap g(A',W)\neq \emptyset$. By definition, $x\in g(A,W)\cap g(A',W)$ implies that $s_v(A,W^x)\geq s_v(A,W^y)$ and $s_v(A',W^x)\geq s_v(A',W^y)$ for all $y\in\mathcal{C}\setminus W$. Thus, we infer that $s_v(A+A',W^x)=s_v(A,W^x)+s_v(A',W\cup \{x\})\geq s_v(A,W^x)+s_v(A',W^x)=s_v(A+A',W^y)$ for all $y\in\mathcal{C}\setminus W$. Hence, $g(A,W)\cap g(A',W)\subseteq g(A+A',W)$. Conversely, if $y\not\in g(A,W)\cap g(A',W)$, we have $y\not\in g(A,W)$ or $y\not\in g(A',W)$. Without loss of generality, suppose that $y\not \in g(A,W)$ and let $x\in g(A,W)\cap g(A',W)$. Our assumptions entail that $s_v(A,W^x)> s_v(A,W^y)$ and $s_v(A',W^x)\geq s_v(A',W^y)$. Thus, $s_v(A+A',W^x)>s_v(A+A', W^y)$ which proves that $y\not\in g(A+A',W)$. We infer from this that $g(A+A',W)=g(A,W)\cap g(A',W)$, which proves that $g$ is consistent.\medskip
		
		\textbf{Claim 2: Every step-dependent sequential scoring rule is anonymous.}
		
		For this claim and all subsequent ones, let $f$ denote a step-dependent sequential scoring rule and let $h(x,y,z)$ denote its step-dependent counting function. We first show by an induction on the committee size $k\in \{0,\dots, m\}$ that $f$ is anonymous. Thus, consider a profile $A$ and a permutation $\pi:\mathbb{N}\rightarrow\mathbb{N}$, and note that $f(A,0)=f(\pi(A),0)=\{\emptyset\}$ by definition. Hence, assume that $f(A,k)=f(\pi(A),k)$ for some fixed $k\in \{0,\dots, m-1\}$ and consider $W\in f(A,k)$. By definition, $W\cup \{x\}\in f(A,k+1)$ for all candidates $x\in\mathcal{C}\setminus W$ that maximize $s_h(A,W^x)=\sum_{i\in N_A} h(|A_i\cap W^x|, |W^x|, |A_i|)$. Since $A'=\pi(A)$ is derived from $A$ only by permuting the voters, it follows immediately that $s_h(A',W^x)=s_h(A,W^x)$ for all $x\in\mathcal{C}\setminus W$ and thus $W\cup \{x\}\in f(A,k+1)$ if and only if $W\cup \{x\}\in f(A,k+1)$ for all committees $W\in f(A,k)$ and candidates $x\in \mathcal{C}\setminus W$. This shows that $f(A,k+1)=f(\pi(A),k+1)$ and we thus infer inductively that $f$ is anonymous.\medskip
		
		\textbf{Claim 3: Every step-dependent sequential scoring rule is neutral.}
		
		Let $f$ and $h$ be defined as in Claim 2. First, note that $|A|=|\tau(A)|$, $|W|=|\tau(W)|$, and $|W\cap A|=|\tau(W\cap A)|$ for every profile $A$, committee $W$, and permutation $\tau:\mathcal{C}\rightarrow\mathcal{C}$. Based on this fact, we now show inductively that $f$ is neutral. Consider for this an arbitrary profile $A$ and a permutation $\tau:\mathcal{C}\rightarrow\mathcal{C}$. Once again, the induction basis $k=0$ is trivial since $f(\tau(A),0)=\tau(f(A,0))=\{\emptyset\}$. Hence, assume that $f(\tau(A), k)=\tau(f(A,k))$ for some fixed $k\in \{0,\dots, m-1\}$ and let $W\in f(A,k)$. It follows from our initial observation that $s_h(A,W^x)=s_h(\tau(A),\tau(W^x))$ for every $x\in\mathcal{C}\setminus W$. Hence, if $x$ maximizes $s_h(A,W^x)$, then $\tau(x)$ maximizes the score $s_h(\tau(A), \tau(W)\cup \tau(x))$, so it follows that $W\cup \{x\}\in f(A,k+1)$ if and only if $\tau(W\cup \{x\})\in f(\tau(A), k+1)$. Equivalently, $f(\tau(A),k+1)=\tau(f(A,k+1))$, which proves the induction step.\medskip
		
		\textbf{Claim 4: Every step-dependent sequential scoring rule is continuous.}
		
		Let $f$ and $h$ be defined as in Claim 2. Moreover, consider two profiles $A,A'\in\mathcal{A}^*$, a committee size $k$, and a committee $W\in\mathcal{W}_k$ such that $f(A,k)=\{W\}$. For proving this claim, we define $F=\bigcup_{\ell=0}^k f(A,\ell)$ and $\Delta(A,X,x,y)=s_h(A, X^x)-s_h(A,X^y)$. In particular, observe that $\Delta(A,X,x,y)=-\Delta(A,X,y,x)$ and $\Delta(A+A',X,x,y)=\Delta(A,X,x,y)+\Delta(A',X,x,y)$ for all profiles $A$, $A'$, committees $X$, and candidates $x,y\in\mathcal{C}\setminus X$. Moreover, define 
		\begin{align*}
		    &\Delta^1=\min_{X\in F, c,d\in\mathcal{C}\setminus X\colon X^c\in F, X^d\not\in F} \Delta(A,X,c,d), \text{ and}\\
		    &\Delta^2=\max_{X\in F, c,d\in\mathcal{C}\setminus X\colon X^d\in F, X^c\not\in F} \Delta(A',X,c,d).
		\end{align*} 
		Finally, let $j\in\mathbb{N}$ denote the smallest integer such that $j\Delta^1>\Delta^2$. 
		
	We will show by induction on $\ell$ that $f(jA+A',\ell)\subseteq f(A,\ell)$ for all $\ell\in \{0,\dots, k\}$. This implies that $f(jA+A',k)=f(A,k)$ since $|f(A,k)|=1$ and $f(jA+A',k)\neq\emptyset$. The induction basis $\ell=0$ follows immediately from the definition of ABC voting rules. Hence, suppose that $f(jA+A',\ell)\subseteq f(jA+A',\ell)$ for some fixed $\ell\in\{0,\dots, k-1\}$ and consider a committee $W\in f(jA+A',\ell)$. For every $c,d\in\mathcal{C}\setminus W$ such that $W\cup \{c\}\in f(A,\ell+1)$, $W\cup\{d\}\not\in f(A, \ell+1)$, it follows that $\Delta(jA+A',W,c,d)=j\Delta(A,W,c,d)+\Delta(A',W,c,d)\geq j\Delta^1-\Delta^2>0$. By the definition of $\Delta$ this implies that $W\cup \{d\}\not\in f(jA+A',\ell+1)$ since $W\cup\{c\}$ has a higher score. Now, since $d$ is an arbitrary candidate such that $W\cup \{d\}\not \in f(A,\ell+1)$, it follows that $W\cup \{c\}\in f(jA+A',\ell+1)$ can only be true if $W\cup \{c\}\in f(A,\ell+1)$. This implies that $f(jA+A',\ell+1)\subseteq f(A,\ell+1)$ and thus proves the induction step.\medskip
		
		\textbf{Claim 5: Every step-dependent sequential scoring rule is non-imposing.}
		
		Let $f$ and $h$ be defined as in Claim 2. For this step, it is crucial that step-dependent counting functions $h(x,y,z)$ satisfy that for every $y\in \{1,\dots, m-1\}$, there is $x\leq y$ and $z\in \{x,\dots, m-1-y+x\}$ such that $h(x,y,z)\neq h(x-1,y,z)$. Furthermore, because of neutrality, it suffices to show that for every committee size $k$, there is a profile $A$ and a committee $W$ such that $f(A,k)=\{W\}$; every other committee of size $k$ can then be obtained by permuting $A$. As in all previous cases, we use an induction on $k$ and the induction basis $k=0$ follows by the definition of ABC voting rules. 
		
		Thus, assume that there is a $k\in \{1,\dots,m-1\}$ such that every committee $W$ with $|W|\leq k$ is uniquely chosen for some profile $A$, i.e., $f(A,|W|)=\{W\}$. Note that if $k=m-1$, we are already done as $f(A,m)=\{\mathcal{C}\}$ since $\mathcal{C}$ is the only committee of size $m$. We thus focus on the case that $k\leq m-2$. Now, using the induction hypothesis and the construction in the proof of \Cref{lem:seqNI}, we can construct for every sequence of committees $W_1,\dots, W_k$ a profile $A$ such that $f(A,\ell)=\{W_\ell\}$ for all $\ell\in \{1,\dots,k\}$ and $f(A,k+1)=\{W_k\cup \{x\}\colon x\in \mathcal{C}\setminus W_k\}$. 
		
		Our next goal is to construct a profile $A'$ for which there is a candidate $c\in\mathcal{C}\setminus W_k$ such that $s_h(A', W_k^c)>s_h(A', W_k^d)$ for all $d\in \mathcal{C}\setminus W_k^c$. For constructing this profile, let $x\leq k+1$ and $z\in \{x,\dots, m-1-y+x\}$ denote the integers such that $h(x,k+1,z)\neq h(x-1,k+1,z)$. We subsequently suppose that $h(x,k+1,z)> h(x-1,k+1,z)$ and explain at the end of this paragraph how to adapt our construction to the case that $h(x,k+1,z)< h(x-1,k+1,z)$. Now, let $A_i$ denote a ballot such that $|A_i\cap W_k|=x-1$ and $|A_i|=z$ and let $c,d\in\mathcal{C}\setminus W_k$ denote candidates such that $c\in A_i$, $d\not\in A_i$. Note that such a ballot exists due to our conditions on $x$ and $z$: $x\leq k+1$ ensures that $|A_i\cap W_k|=x-1$ is possible, $z\geq x$ ensures that we can approve at least $x$ candidates, and $z\leq m-1 -(k+1-x)$ ensures that we can disapprove at least $k+2-x$ candidates (namely $d$ and the remaining $k-(x-1)$ candidates in $W_{k}\setminus A_i$). Next, let $\tau$ denote an arbitrary permutation on $\mathcal{C}$ with $\tau(e)=e$ for all candidates $e\in W_k\cup \{c\}$. In particular, this means that $c\in \tau(A_i)$ and $\tau(A_i)\cap W_k=A_i\cap W_k$. Finally, we define the profile $A'$ by adding a voter with the ballot $\tau(A_i)$ for every such permutation. Since $c\in\tau(A_i)$ for all $\tau:\mathcal{C}\rightarrow\mathcal{C}$ with $\tau(e)=e$ for $e\in W_k^c$, it follows that $s_h(A',W_k^c)=(m-|W_k|-1)! \cdot h(x, k+1, z)$. On the other hand, every other candidate $e\in \mathcal{C}\setminus W_k^c$ is not approved in the ballots $\tau(A_i)$ in which $\tau$ maps $e$ to $d$. Hence, these voters approve $x-1$ candidates of $W_k^e$ and since $h(x, k+1, z)>h(x-1, k+1,z)$, it thus follows that $s_h(A', W_k^e)< (m-|W_k|-1)! h(x, k+1, z)=s_h(A', W_k^c)$ for all $e\in\mathcal{C}\setminus W_k^c$. For the case that $h(x, k+1, z)<h(x-1, k+1,z)$, we can apply the same construction with the role of $c$ and $d$ swapped in $A_i$. 
		
		Finally, we will construct a profile $A^*$ in which $f(A^*, k+1)=\{W_k\cup \{c\}\}$. By \Cref{lem:seqCon}, there is an integer $j$ such that $f(jA+A',\ell)=\{W_\ell\}$ for all $\ell\leq k$. In particular, this means that $f(jA+A', k)=\{W_k\}$. On the other hand, $f(A,k+1)=\{W_k\cup \{x\}\colon x\in \mathcal{C}\setminus W_k\}$, which implies that $s_h(A, W_k^d)=s_h(A, W_k^e)$ for all candidates $d,e\in \mathcal{C}\setminus W_k$. Clearly, the same also holds for $jA$ since we derive this profile by only copying $A$ multiple times. Finally, we have by construction that $s_h(A', W_k^c)> s_h(A', W_k^d)$ for all $d\in\mathcal{C}\setminus W_k^c$. Because $s_h(jA+A', W_k^e)=s_h(jA, W_k^e)+s_h(A', W_k^e)$ for all $e\in\mathcal{C}\setminus W_k$, we infer that $s_h(jA+A', W_k^c)>s_h(jA+A', W_k^d)$ for all $d\in\mathcal{C}\setminus W_k^c$. Thus, $f(jA+A', k+1)=\{W_k\cup \{c\}\}$. This proves that $f$ is non-imposing for committees of size $k+1$ as neutrality allows us to construct every outcome now. Hence, we inductively infer that $f$ is non-imposing for every committee size.
	\end{proof}	
	
	Next, we show the first claim of \Cref{prop:relation}: a sequential valuation rule is a step-dependent sequential scoring rule if and only if it is proper. 
	
	\begin{lemma}\label{lem:SeqValRules}
		A sequential valuation rule is a step-dependent sequential scoring rule if and only if it is proper.
	\end{lemma}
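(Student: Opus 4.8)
The statement has two directions, and the forward implication (every step-dependent sequential scoring rule is proper) is already established in \Cref{lem:basicAxioms}, so the plan is to focus entirely on the converse. Let $f$ be a proper sequential valuation rule with valuation function $v$. By \Cref{lem:basicAxioms}, $f$ is consistently committee monotone, and its generator $g(A,W)=\{c\in\mathcal{C}\setminus W\colon \forall d\in\mathcal{C}\setminus W\colon s_v(A,W^c)\geq s_v(A,W^d)\}$ is complete (each $g(A,W)$ is the non-empty argmax over the finite set $\mathcal{C}\setminus W$) and consistent. Since $f$ is proper and consistently committee monotone and $g$ is complete and consistent, \Cref{lem:A+N} shows that $g$ is in fact proper. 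Hence $g$ is a proper, consistent, and complete generator of $f$, and \Cref{prop:hyperplane} applies: for every committee $W\neq\mathcal{C}$ there is a weight function $v^W$ with $g(A,W)=\mathit{AV}_{v^W}(A,W)$ for all profiles $A$.

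The next step is to remove the dependence of these weight functions on the identity of $W$. Fixing representatives $W_0,\dots,W_{m-1}$ with $|W_y|=y$, I would use the neutrality of $g$ together with the relations $|A_i|=|\tau(A_i)|$ and $|W\cap A_i|=|\tau(W)\cap\tau(A_i)|$ to show that $\mathit{AV}_{v^W}$ and $\mathit{AV}_{v^{W'}}$ coincide whenever $|W|=|W'|$, exactly by the computation carried out in the first part of the proof of \Cref{thm:characterization}; this lets me define a single function $v(x,y,z)=v^{W_y}(x,z)$ so that $g(A,W)$ selects the candidates maximizing $\sum_{i\in N_A\colon c\in A_i} v(|A_i\cap W|,|W|,|A_i|)$. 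From $v$ I would build the step-dependent counting function $h$ by the telescoping recursion $h(0,y,z)=0$ and $h(x,y,z)=h(x-1,y,z)+v(x-1,y-1,z)$, and then verify, via the same manipulation of the score difference $s_w(A,W^c)-s_w(A,W^d)$ used in the proof of \Cref{thm:characterization}, that the sequential valuation rule induced by $w(A_i,W)=h(|A_i\cap W|,|W|,|A_i|)$ has generator $g_w=g$ and therefore equals $f$. I stress that these algebraic steps do not themselves invoke \Cref{prop:relation}, so reusing them here is not circular.

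The part that genuinely cannot be imported from \Cref{thm:characterization} (whose proof defers exactly this point to the present lemma) is checking that $h$ is a bona fide step-dependent counting function, i.e.\ that for every $y\in\{1,\dots,m-1\}$ there are $x\in\{1,\dots,y\}$ and $z\in\{x,\dots,m-1-(y-x)\}$ with $h(x,y,z)\neq h(x-1,y,z)$. By the recursion this is equivalent to $v(x-1,y-1,z)\neq 0$ for some such $x,z$, and after reindexing with committee size $y'=y-1$ and $k=x-1$ the admissible pairs $(k,z)$ range precisely over the index set $Z_{W_{y'}}$ of \Cref{prop:hyperplane}, with $v(k,y',z)=v^{W_{y'}}(k,z)$. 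I would close the argument by recalling that the weight vector produced in \Cref{prop:hyperplane} is non-zero on $Z_{W_{y'}}$ (it is read off from a non-trivial separating hyperplane) for every $y'\le m-2$, which yields exactly the required non-degeneracy. The main obstacle is therefore this final verification, together with the bookkeeping needed to keep the whole argument independent of \Cref{thm:characterization}; the genuinely hard analytic content, namely the hyperplane representation of $g(\cdot,W)$, is already available from \Cref{prop:hyperplane}.
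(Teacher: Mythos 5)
Your proposal is correct, but it takes a genuinely different route from the paper's proof. The setup coincides: both establish the forward direction via \Cref{lem:basicAxioms}, and both obtain that the canonical generator $g$ of $f$ is consistent and complete (Claim~1 of \Cref{lem:basicAxioms}) and hence proper (\Cref{lem:A+N}). From there, however, the paper avoids \Cref{prop:hyperplane} entirely and uses an elementary averaging trick: defining $v_\tau(A_i,W)=v(\tau(A_i),\tau(W))$, the neutrality of $g$ yields $g_\tau=g$ for every permutation $\tau$, so the symmetrized valuation function $v^*=\sum_{\tau} v_\tau$ still induces $f$; since $v^*$ is neutral, it can be written as $h(|A_i\cap W|,|W|,|A_i|)$, and the non-degeneracy of $h$ follows by contradiction from non-imposition (a flat $h(\cdot,y,\cdot)$ would make every ballot score all size-$y$ committees equally, so $|f(A,y)|=1$ could never hold). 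You instead replay the machinery of \Cref{thm:characterization}: the hyperplane representation of each $g(\cdot,W)$, the identification of weight functions across equal-size committees, the telescoping construction of $h$, and non-degeneracy from the non-vanishing of the separating-hyperplane normal; your reindexing matching the admissible pairs with $Z_{W_{y'}}$ is exactly right, and you are also right that no circularity arises, since those algebraic steps of \Cref{thm:characterization} never invoke \Cref{prop:relation}. What your route buys is a cleaner dependency picture, showing that \Cref{thm:characterization} could be closed off without its final appeal to \Cref{prop:relation}; what it costs is importing the heavy hyperplane apparatus for a statement the paper settles in a few lines, plus one informality: the non-vanishing of $v^{W}$ on $Z_W$ is a property of the construction inside the proof of \Cref{prop:hyperplane}, not of its statement. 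If you prefer not to rely on the internals of that proof, note instead that if $v^{W}$ vanished on all of $Z_W$, then $\mathit{AV}_{v^W}(A,W)=\mathcal{C}\setminus W$ for every profile (voters approving all of $\mathcal{C}\setminus W$ contribute equally to every candidate's score, and all other relevant voters realize pairs in $Z_W$), contradicting the non-imposition of $g$ whenever $|W|\leq m-2$ --- which is, in essence, the paper's own non-imposition argument resurfacing.
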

	\begin{proof}
		We have shown in \Cref{lem:basicAxioms} that every step-dependent sequential scoring rule is proper. Moreover, their definition immediately shows that every step-dependent sequential scoring rule is a sequential valuation rule. Hence, only the converse remains to be proven. For this, let $f$ denote a sequential valuation rule that is proper, and let $v$ denote its corresponding valuation function. We have shown in Claim 1 of \Cref{lem:basicAxioms} that the generator function $g(A,W)=\{x\in\mathcal{C}\setminus W\colon \forall y\in\mathcal{C}\setminus W: s_v(A,W^x)\geq s_v(A,W^y)\}$ is consistent, complete, and generates $f$. Hence, it follows from \Cref{lem:A+N} that $g$ is a proper generator function.
		
		We prove this that $f$ is a step-dependent sequential scoring rule in two steps. First, we show that $f$ is induced by a neutral valuation function $v^*$ (i.e., $v^*(A_i,W)=v^*(\tau(A_i), \tau(W))$ for all permutations $\tau$, ballots $A_i$, and committees $W$). Based on $v^*$, we will build as second step a step-dependent counting function $h$ that induces $f$.\medskip
		
		\textbf{Step 1: There is a valuation function $v^*$ that is neutral and induces $f$}
		
		For proving this claim, we define the valuation function $v_\tau(A_i,W)=v(\tau(A_i), \tau(W))$ and the generator function $g_\tau(A,W)=\{x\in\mathcal{C}\setminus W\colon \forall y\in \mathcal{C}\setminus W: s_{v_\tau}(A, W^x)\geq s_{v_\tau}(A, W^y\}$ for every permutation $\tau:\mathcal{C}\rightarrow\mathcal{C}$. In particular, observe that $g_\tau(A,W)=g(A,W)$ for all profiles $A$ and committees $W$ since 
		\begin{align*}
			g_{\tau}(A,W)&=\{x\in\mathcal{C}\setminus W\colon \forall y\in \mathcal{C}\setminus W:\\
			&\qquad\sum_{i\in N_A} v_\tau(A_i, W^x)\geq \sum_{i\in N_A} v_\tau(A_i, W^y)\}\\
			&=\{x\in\mathcal{C}\setminus W\colon \forall y\in \mathcal{C}\setminus W: \\
			&\qquad \sum_{i\in N_A} v(\tau(A_i), \tau(W^x))\geq \sum_{i\in N_A} v(\tau(A_i), \tau(W^y))\}\\
			&=\{\tau^{-1}(x')\colon x'\in\mathcal{C}\setminus \tau(W)\colon \forall y'\in\mathcal{C}\setminus \tau(W): \\
			&\qquad \sum_{i\in N_A} v(\tau(A_i), \tau(W)^{x'})\geq \sum_{i\in N_A} v(\tau(A_i), \tau(W)^{y'}))\}\\
			&=\{\tau^{-1}(x')\colon x'\in g(\tau(A), \tau(W))\}\\
			&=\tau^{-1}(g(\tau(A), \tau(W))\\
			&=g(A,W).
		\end{align*}
		
		Finally, consider the valuation function $v^*(A_i,W)=\sum_{\tau\in\Tau} v_\tau(A_i,W)$, where $\Tau$ is the set of all permutations on $\mathcal{C}$. Clearly, $v^*$ is neutral, i.e., $v^*(A_i,W)=v^*(\tau(A_i),\tau(W))$ for all permutations $\tau$, ballots $A_i$, and committees $W$. We prove next that $v^*$ also induces our sequential valuation function $f$. For doing so, consider an arbitrary profile $A$ and let $f^*(A,k)=\{W\cup\{c\}\colon W\in f^*(A,k-1), c\in\mathcal{C}\setminus W\colon \forall d\in\mathcal{C}\setminus W\colon s_{v^*}(A, W^c)\geq s_{v^*}(A, W^d)\}$ for all committee sizes $k$. We will show by induction on $k$ that $f(A,k)=f^*(A,k)$ for $k\in \{0,\dots, m\}$. The induction basis $k=0$ follows trivially since $f(A,0)=\{\emptyset\}=f^*(A,0)$ by definition. 
		
		Now, assume that $f(A,k)=f^*(A,k)$ for some fixed $k\in \{0,\dots, m-1\}$ and let $W\in f(A,k)$. For each $c\in\mathcal{C}\setminus W$, we have that $W\cup \{c\}\in f(A,k+1)$ if and only if $c\in g(A,W)$. This is, in turn, equivalent to $c\in g_{\tau}(A,W)$ for every permutation $\tau:\mathcal{C}\rightarrow\mathcal{C}$. Hence, we have that $W\cup \{c\}\in f(A,k+1)$ if and only if $s_{v_\tau}(A,W^c)\geq s_{v_\tau}(A,W^d)$ for all permutations $\tau\in\Tau$ and $d\in\mathcal{C}\setminus W_k$. Since $s_{v^*}(A, W^c)=\sum_{\tau\in\Tau} s_{v_\tau}(A, W^c)$ for all $c\in\mathcal{C}\setminus W_k$, we infer that $W\cup \{c\}\in f(A,k+1)$ if and only if $W\cup \{c\}\in f^*(A,k+1)$. Finally, since $W\in f(A,k)$ is chosen arbitrarily, this shows that $f(A,k+1)=f^*(A,k+1)$ and thus proves that $v^*$ induces $f$.\medskip
		
		\textbf{Step 2: There is a step-dependent counting function $h(x,y,z)$ that induces $f$.}
		
		As second step, we show that $f$ is a step-dependent sequential scoring rule by proving that it is induced by a step-dependent counting function. For this, note first that there is a function $h(x,y,z)$ such that $v^*(A_i, W)=h(|A_i\cap W|, |W|, |A_i|)$ for all ballots $A_i$ and committees $W$. For proving this claim, consider two arbitrary ballots $A_i$ and $A_i'$ and committees $W$ and $W'$ such that $|A_i\cap W|=|A_i'\cap W'|$, $|A_i|=|A_i'|$, and $|W|=|W'|$. Clearly, there is a permutation $\tau$ such that $\tau(A_i)=A_i'$, $\tau(A_i\cap W)=A_i'\cap W'$, and $\tau(W)=W'$. Hence, the neutrality of $v^*$ shows that $v^*(A_i, W)=v^*(A_i', W')$. Consequently, $v^*$ only depends on $|A_i\cap W|$, $|W|$, and $|A_i|$ to compute the score, i.e., there is a function $h(x,y,z)$ such that $v^*(A_i, W)=h(|A_i\cap W|, |W|, |A_i|)$. 
		
		Finally, we show that $h(x,y,z)$ must be a step-dependent counting function, which requires that there is for every $y\in \{1,\dots, m-1\}$, an $x\leq y$ and $z\in \{x,\dots, m-1-y+x\}$ such that $h(x,y,z)\neq h(x-1,y,z)$. Assume for contradiction that $h$ fails this condition, i.e., there is $y\in \{1,\dots, m-1\}$ such that for each $x\in \{1,\dots,y\}$ and $z\in \{x,\dots, m-1-y+x\}$, $h(x,y,z)=h(x-1,y,z)$. Our goal is to show that $f$ has to fail non-imposition, which contradicts that it is a proper ABC voting rule. For this, consider an arbitrary ballot $A_i$, a committee $W\in\mathcal{W}_{y-1}$, and two candidates $c,d\in\mathcal{C}\setminus W$. Our goal is to show that $h(|A_i\cap W^c|,|W^c|,|A_i|)=h(|A_i\cap W^d|, |W^d|, |A_i|)$. If $|A_i\cap W^c|=|A_i\cap W^d|$, this is clear since all arguments of the left and right side of $h$ are identical. Hence, we assume with out loss of generality that $|A_i\cap W^c|>|A_i\cap W^d|$. In particular, this means that $c\in A_i$, $d\not\in A_i$ and $|A_i\cap W^c|=|A_i\cap W^d|+1$. Now, note that $x=|A_i\cap W^c|\leq |W^c|=y$ and $x=|A_i\cap W^c|\leq |A_i|=z$. Moreover, we know that $d\not\in A_i$ and $y-x$ candidates of $W^c$ are not in $A_i$. Hence, $z=|A_i|\leq m-1-(y-x)$. Therefore, our contradiction assumption indeed shows that $(|A_i\cap W^c|,|W^c|,|A_i|)=h(|A_i\cap W^d|, |W^d|, |A_i|)$. 
		
		Since $W$, $c$, $d$, and $A_i$ are chosen arbitrarily, it follows that each committee of size $y\leq m-1$ gets the same amount of points from every ballot. Hence, $|f(A,y)|\neq 1$ for all profiles $A$ since $f(A,y)=\{W\cup \{y\}\colon W\in f(A,y-1), x\in \mathcal{C}\setminus W\}$ and $|\mathcal{C}\setminus W|\geq 2$. This contradicts the non-imposition of $f$. Our assumption that $h$ is not a step-dependent counting function must therefore be wrong, and $f$ is thus a step-dependent sequential scoring rule.
	\end{proof}	
	
	Next, we show that independence of losers characterizes step-dependent sequential Thiele rules within the class of step-dependent sequential scoring rules.
		
		\begin{lemma}\label{lem:StepSeqScoreRule1}
			A step-dependent sequential scoring rule is a step-dependent sequential Thiele rule if and only if it is independent of losers. 
		\end{lemma}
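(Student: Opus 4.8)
The plan is to prove both directions, with the forward (``only if'') direction being routine and the converse carrying all the weight. Throughout I work through the generator and compare candidates by their \emph{marginal contributions}: when extending a committee $W$ of size $y-1$, the relevant score of $c\in\mathcal C\setminus W$ is $s_h(A,W^{c})$, which differs from the marginal $M(A,W,c)=\sum_{i\in N_A\colon c\in A_i}\big(h(|A_i\cap W|+1,\dots)-h(|A_i\cap W|,\dots)\big)$ only by an additive constant independent of $c$; hence $g(A,W)$ is exactly the set of candidates maximizing $M(A,W,\cdot)$.

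For the ``only if'' direction, let $f$ be a step-dependent sequential Thiele rule with counting function $h(x,y)$, so $M(A,W,c)=\sum_{i\colon c\in A_i}\big(h(|A_i\cap W|+1,y)-h(|A_i\cap W|,y)\big)$ when building size $y$. Given $W\in f(A,|W|)$, I fix a winning chain $\emptyset=W_0\subseteq\cdots\subseteq W_{|W|}=W$ and a profile $A'$ with $N_{A'}=N_A$, $A_i'\subseteq A_i$, and $A_i'\cap W=A_i\cap W$, and show by induction on $j$ that the same chain is winning for $A'$. The key point is that $W_{j-1}\subseteq W$, so $A_i'\cap W_{j-1}=A_i\cap W_{j-1}$ for every $i$; consequently the added candidate $c_j=W_j\setminus W_{j-1}\in W$ has exactly the same marginal in $A'$ as in $A$ (its approver set and all relevant intersection sizes are unchanged), whereas every competitor $d$ can only lose marginal, since its approver set shrinks and each summand $h(|A_i\cap W_{j-1}|+1,j)-h(|A_i\cap W_{j-1}|,j)$ is nonnegative because $h$ is non-decreasing in its first argument. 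Thus $M(A',W_{j-1},c_j)=M(A,W_{j-1},c_j)\ge M(A,W_{j-1},d)\ge M(A',W_{j-1},d)$, so $c_j$ stays score-maximal and $W\in f(A',|W|)$.

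For the converse, let $f$ be a step-dependent sequential scoring rule that is independent of losers. By \Cref{lem:basicAxioms} it is proper and consistently committee monotone, and its generator has, for every committee $W$ of size $y-1$, the form $g(A,W)=\argmax_{c\in\mathcal C\setminus W}\sum_{i\colon c\in A_i}v(|A_i\cap W|,y-1,|A_i|)$ with weights $v(x,y-1,z)$ given by the marginal steps of $h$ (this is the representation used in the proof of \Cref{thm:characterization}). The goal is to show that independence of losers forces $v(x,y-1,z)$ to be \emph{nonnegative} and \emph{independent of the ballot size $z$}; a nonnegative, $z$-independent weight then yields, by partial summation and a shift, a step-dependent Thiele counting function $\tilde h(x,y)$ that induces the same generator, with non-degeneracy inherited from the non-imposition of $f$. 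The decisive preliminary step is an \emph{isolation} device: using \Cref{lem:seqNI} I obtain, for any sequence ending in $W$, a profile $A^{W}$ whose first $y-1$ steps are uniquely pinned and with $g(A^{W},W)=\mathcal C\setminus W$, i.e.\ all candidates tie at the step extending $W$. Adding many copies of $A^{W}$ (justified by \Cref{lem:seqCon}) adds the same score to every candidate, so for $A=jA^{W}+A^{\mathrm{test}}$ with $j$ large the first $y-1$ steps are fixed to $W$ and $g(A,W)=\argmax_c\sum_{i\in N_{A^{\mathrm{test}}}\colon c\in A_i}v(|A_i\cap W|,y-1,|A_i|)$. Applying independence of losers to the committee $W^{c^{*}}$ while shrinking only the test ballots reduces the entire axiom to a single clean statement about one generator step: if $c^{*}$ maximizes the test-score, it still does after candidates outside $W^{c^{*}}$ are removed from the test ballots.

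Armed with this, I would run two families of small gadgets. For nonnegativity, if some $v(x_0,y-1,z_0)<0$, I build a test profile in which \emph{every} contested candidate is approved only by voters of intersection $x_0$ and size $z_0$, so all scores are negative and the least-approved candidate $c^{*}$ wins; deleting the loser $d$ from enough of $d$'s approvers raises $d$'s (still negative) score above $c^{*}$'s, so $c^{*}$ is dropped, contradicting the isolated axiom. For $z$-independence, with a fixed positive reference weight setting a threshold I compare a group whose ballot size falls from $z$ to $z-1$ when one loser is deleted: one gadget yields $v(x,y-1,z-1)\ge v(x,y-1,z)$, and a symmetric one (shrinking the competitor instead) yields the reverse inequality, the two being sharpened into the exact equality $v(x,y-1,z-1)=v(x,y-1,z)$ by a density/limit argument over the integer multiplicities of the two groups. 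I expect the main obstacle to be precisely this sign-and-tie bookkeeping: independence of losers only supplies a one-directional implication, so extracting equalities demands carefully engineered near-ties, control over the padding candidates, and the limiting step, while a few boundary cases (most notably $x=0$, $z=1$, where a loser cannot be deleted without emptying a ballot) must be treated separately, e.g.\ via continuity together with the values already fixed at larger sizes. Once $v$ is known to be nonnegative and $z$-independent, constructing $\tilde h$ and checking that it induces $f$ is routine.
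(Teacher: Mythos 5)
Your proposal is correct and follows the same skeleton as the paper's proof: the forward direction via the marginal-contribution argument (identical in substance to the paper's Claim~1), and the converse by reducing everything to two facts about the marginal weights $v(x,y-1,z)=h(x+1,y,z)-h(x,y,z)$ — that they are independent of the ballot size $z$ and nonnegative — using \Cref{lem:seqNI} and \Cref{lem:seqCon} to isolate a single generator step before applying independence of losers, and then rebuilding a step-dependent Thiele counting function by partial summation (the paper's $\bar h(x,y)=\bar h(x-1,y)+h(x,y,x)-h(x-1,y,x)$ is exactly your $z$-pinned partial sum). The one genuine difference is how the equalities are extracted from an axiom that only yields one-directional implications. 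The paper first constructs, by symmetrization, profiles with an \emph{exact} two-way tie $f(A,|W|+1)=\{W^c,W^d\}$ (Step~1 of its Claim~2); since the deleted candidates lie outside both $W^c$ and $W^d$, independence of losers applies to both tied committees at once, forcing both to remain chosen and hence their scores to remain equal — an exact equation in a single application, with no positivity assumption needed (so the paper can do $z$-independence first and monotonicity of $\bar h$ second). Your route instead uses one-sided near-ties sharpened by a density/limit argument over integer multiplicities, which works but obliges you to prove nonnegativity first (to have a positive reference weight for tuning the near-ties) and to handle zero-weight and boundary cases separately — cases where, as you would discover, you end up falling back on exact ties anyway. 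So the tie-based device buys a noticeably shorter and more uniform argument, while your version is a sound but more laborious execution of the same idea.
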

		\begin{proof}
			We proof both directions of the lemma separately.\medskip
			
			\textbf{Claim 1: A step-dependent sequential Thiele rule is a step-dependent sequential scoring rule that satisfies independence of losers.}
			
			First, note that every step-dependent sequential Thiele rule $f$ is clearly a step-dependent sequential scoring rule since $f$ is induced by a step-dependent Thiele counting function $h(|A_i\cap W|,|W|)$. Clearly, we can extend $h$ to a step-dependent counting function $h'$ by $h'(x,y,z)=h(x,y)$ for all $x,y$. Moreover, since for every $y\in \{1,\dots, m-1\}$, there is $x\leq y$ such that $h(x,y)> h(x-1,y)$, it follows that $h'(x,y,z)\neq h(x-1,y,z)$ for this $x$ and all $z\in \{1,\dots, m\}$. Hence, $f$ is a step-dependent sequential scoring rule. 
			
			Moreover, it is easy to show that $f$ is independent of losers. 
			For doing so, consider two profiles $A$ and $A'$ on the electorate $N_A$, a committee $W\in f(A,|W|)$, a voter $i\in N_A$, and a candidate $c\in A_i\setminus W$ such that $A'$ $A_i'=A_i\setminus \{c\}$ and $A_j=A_j'$ for all $j\in N\setminus \{i\}$. The last assumption entails that $h(|A_j\cap W'|, |W'|)=h(|A_j'\cap W'|, |W'|)$ for all voters $j\in N_A\setminus \{i\}$ and committees $W'\in\mathcal{W}$. 
			On the other hand, we have for every committee $W'$ that $h(|A_i\cap W'|, |W'|)=h(|A_i'\cap W'|, |W'|)$ if $c\not\in W$ and $h(|A_i\cap W'|, |W'|)\geq h(|A_i'\cap W'|, |W'|)$ if $c\in W$ since $h$ is non-decreasing in the first argument. 
			Finally, since $W\in f(A,|W|)$, there is a sequence of candidates $\{x_1, dots, x_k\}$ such that for all $\ell\in\{1,\dots,k\}$, it holds that $W_\ell=\{x_1,\dots, x_\ell\}\in f(A,\ell)$ and $s_h(A,W_{\ell-1}^{x_{\ell}})\geq s_h(A,W_{\ell-1}^y)$ for all $y\in \mathcal{C}\setminus W_{\ell-1}$. 
			Because $c\not\in W$, our previous insights show that the scores of the committees $W_\ell$ are always maximal, and thus $W\in f(A',k)$. Hence $f$ satisfies independence of losers.\medskip
			
			\textbf{Claim 2: Every step-dependent sequential scoring rule that satisfies independence of losers is a step-dependent sequential Thiele rule.}
			
			Next, we show the converse and consider thus a step-dependent sequential scoring rule $f$ that is independent of losers. Moreover, let $h(x,y,z)$ denote its step-dependent sequential scoring function. We will show that the function $\bar h$ defined by $\bar h(0,y)=0$, $\bar h(x,y)=\bar h(x-1, y) + h(x,y,x)-h(x-1,y,x)$ for $x\in \{1,\dots,y\}$, and $\bar h(x,y)=\bar h(y,y)$ for $x>y$ is a step-dependent Thiele counting function that also induces $f$. For this, we proceed in multiple steps. First, we prove an auxiliary claim stating that we can build profiles with specific outcomes, which will be used for the next steps. Then, we show that for all profiles $A$, committees $W$, and candidates $c,d\in\mathcal{C}\setminus W$, it holds that $s_h(A, W^c)\geq s_{h}(A, W^d)$ if and only if $s_{\bar h}(A, W^c)\geq s_{\bar h}(A, W^d)$. This clearly implies that $\bar h$ also induces $f$. As last step, we show that $\bar h$ is non-decreasing in $x$ (and thus also non-negative since $\bar h(0,y)=0$ for all $y$) and satisfies that there is $x\leq y$ such that $\bar h(x,y)>\bar h(x-1,y)$.\medskip
						
			\emph{Step 1: For all committees $W$ and distinct $c,d\in\mathcal{C}\setminus W$, there is a profile $A$ such that $f(A, |W|)=\{W\}$ and $f(A,|W|+1)=\{W^c, W^d\}$.}
			
			For proving this claim, let $W_1, \dots, W_k$ denote a sequence of committees such that $W_k=W^c$ and $W_{k-1}=W$. By \Cref{lem:seqNI}, there is a profile $\bar A$ such that $f(\bar A, \ell)=\{W_\ell\}$ for all $\ell\leq k$. Next, consider a permutation $\tau:\mathcal{C}\rightarrow\mathcal{C}$ such that $\tau(x)=x$ for all $x\in W^c$. By neutrality, we have that $f(\tau(\bar A), \ell)=\{W_\ell\}$ for all $\ell\in \{1,\dots,k\}$ and \Cref{lem:merge} implies thus that $f(\bar A+\tau(\bar A), \ell)=\{W_\ell\}$, too. 
			Now, let $\hat A^c$ denote the profile consisting of $\tau(\bar A)$ for every permutation $\tau:\mathcal{C}\rightarrow\mathcal{C}$ with $\tau(x)=x$ for all $x\in W^c$. By the same argument as before, $f(\hat A^c, \ell)=\{W_\ell\}$ for all $\ell\in \{1,\dots, k\}$. In particular, this means that $s_h(\hat A^c, W^c)>s_h(\hat A^c, W^x)$ for all $x\in\mathcal{C}\setminus W^c$. Moreover, we claim that $s_h(\hat A^c, W^x)=s_h(\hat A^c, W^y)$ for all $x,y\in\mathcal{C}\setminus W^c$. For this, note that $s_h(\bar A, W^x)=s_h(\tau(\bar A), W^{\tau(x)}\}$ for all permutations $\tau$ that only reorder the candidates in $\mathcal{C}\setminus W^c$. Since $\hat A^c$ consists of these profiles for all permutations, all of the candidates except $c$ must have the same score. 
			
			Next, consider the profile $\hat A^d$ derived from $\hat A^c$ by exchanging $c$ and $d$; formally $\hat A^d=\tau_{cd}(\hat A^c$) where $\tau_{cd}$ is the permutation that only swaps $c$ and $d$. By neutrality, we have that $f(\hat A^d, \ell)=\{W_\ell\}$ for $\ell\in \{1,\dots, k-1\}$ and $f(\hat A^d, k)=\{W^d\}$. Moreover, we have that $s_h(\hat A^d, W^d)=s_h(\hat A^c, W^c)$ and $s_h(\hat A^d, W^x)=s_h(\hat A^c, W^y)$ for all $x\in\mathcal{C}\setminus W^d$, $y\in\mathcal{C}\setminus W^d$. Finally, we define $\hat A=\hat A^d+\hat A^c$. By \Cref{lem:merge}, we immediately get that $f(\hat A, \ell)=\{W_\ell\}$ for all $\ell\in \{1,\dots, k-1\}$. Furthermore, it holds that $s_h(\hat A,W^x)=s_h(\hat A^c,W^x)+s_h(\hat A^d,W^x)$ for all $x\in\mathcal{C}\setminus W$. This implies that $s_h(\hat A, W^c)=s_h(\hat A, W^d)>s_h(\hat A, W^x)$ for all $x\in\mathcal{C}\setminus (W\cup \{c,d\})$. Hence, $\hat A$ indeed satisfies our requirements.\medskip
			
			\emph{Step 2: It holds for all profiles $A$, committees $W$, and candidates $c,d\in \mathcal{C}\setminus W$ that $s_{h}(A, W^c)\geq s_{h}(A, W^d)$ if and only if $s_{\bar h}(A, W^c)\geq s_{\bar h}(A, W^d)$.}
			
			For proving this step, we consider a committee $W$ and two candidates $c,d\in\mathcal{C}\setminus W$. Since $s_{h}$ and $s_{\bar h}$ only sum up the scores of the individual ballots, it suffices to focus on the ballots $A_i$. Our goal is thus to show that $h(|A_i\cap W^c|, |W^c|, |A_i|)-h(|A_i\cap W^d|, |W^d|, |A_i|)=\bar h (|A_i\cap W^c|, |W^c|)-\bar h(|A_i\cap W^d|, |W^d|)$. First, if $c,d\in A_i$ or $c,d\not\in A_i$, this follows immediately since $|A_i\cap W^c|=|A_i\cap W^d|$ and thus, 
			\begin{align*}
				&h(|A_i\cap W^c|, |W^c|, |A_i|)-h(|A_i\cap W^d|, |W^d|, |A_i|)\\
				&=\bar h (|A_i\cap W^c|, |W^c|)-\bar h(|A_i\cap W^d|, |W^d|)=0.
			\end{align*}
			
			Hence, suppose that $c\in A_i$, $d\not\in A_i$; the case that $d\in A_i$, $c\not\in A_i$ is symmetric. This implies that $|A_i\cap W^d|=|A_i\cap W^c|-1$ and we hence have to show that 
			\thickmuskip=0.5\thickmuskip
			\medmuskip=0.5\medmuskip
			\begin{align*}
				&h(|A_i\cap W^c|, |W^c|, |A_i|)-h(|A_i\cap W^c|-1, |W^c|, |A_i|)\\
				&=\bar h(|A_i\cap W^c|, |W^c|)-\bar h(|A_i\cap W^c|-1, |W^c|)\\
				&=h(|A_i\cap W^c|, |W^c|, |A_i\cap W^c|)-h(|A_i\cap W^c|-1, |W^c|, |A_i\cap W^c|).
			\end{align*}
			\thickmuskip=2\thickmuskip
            \medmuskip=2\medmuskip
			
			If $|A_i|=|A_i\cap W^c|$, this claim holds trivially. Since $|A_i|\geq |A_i\cap W^c|$, we thus suppose that $|A_i|>|A^i\cap W^c|$, which implies that $X=A_i\setminus W^c\neq \emptyset$. Also, recall that $d\not\in A_i$ and thus $d\not\in X$. Next, let $A$ denote a profile such that $f(A, |W|)=\{W\}$ and $f(A, |W|+1)=\{W^c, W^d\}$; such a profile exists due to Step 1. Moreover, let $A'$ be a profile consisting of two voters, one of which reports $A_i$ and the other one reports $(A_i\setminus \{c\})\cup \{d\}$. Finally, let $A''$ denote the profile derived from $A'$ by assigning the ballot $A_i\setminus X=A_i\cap W^c$ to the voter who originally submits $A_i$. For these profiles, it holds that
			\thickmuskip=0.5\thickmuskip
			\medmuskip=0.5\medmuskip
			\allowdisplaybreaks
			\begin{align*}
			    s_h(A, W^c)&=s_h(A,W^d)>s_h(A,W^x) \text{ for all } x\in\mathcal{C}\setminus (W\cup \{c,d\}),\\
			    s_h(A',W^c)&=h(|A_i\cap W^c|, |W^c|, |A_i|)+h(|A_i\cap W^c|-1, |W^c|, |A_i|)\\
			    &=s_h(A', W^d), \\
			    s_h(A'',W^c)&=h(|A_i\cap W^c|, |W^c|, |A_i\cap W^c|)\\
			    &\quad+h(|A_i\cap W^c|-1, |W^c|, |A_i|),\qquad\text{ and}\\
			    s_h(A'', W^d)&=h(|A_i\cap W^c|-1, |W^c|, |A_i\cap W^c|)\\
			    &\quad+h(|A_i\cap W^c|, |W^c|, |A_i|).
			\end{align*}
			\thickmuskip=2\thickmuskip
            \medmuskip=2\medmuskip
			
			Finally, it is not difficult to see that there is an integer $j$ such that $f(jA+A', |W|)=(jA+A'', |W|)=\{W\}$, $f(jA+A', |W|+1)\subseteq \{W^c, W^d\}$, and $(jA+A'', |W|+1)\subseteq \{W^c, W^d\}$. From the above equations, it now follows that $f(jA+A', |W|+1)=\{W^c, W^d\}$. On the other hand, since $A''$ is derived from $A'$ by only disapproving candidates $x\in\mathcal{C}\setminus (W\cup \{c,d\})$, a repeated application of independence of losers shows that $f(jA+A'', |W|+1)=\{W^c, W^d\}$, too. This implies that $s_h(A'',W^c)=s_h(A'', W^d)$, which is equivalent to
			\thickmuskip=0.5\thickmuskip
			\medmuskip=0.5\medmuskip
			\begin{align*}
			& = h(|A_i\cap W^c|, |W^c|, |A_i|)-h(|A_i\cap W^c|-1, |W^c|, |A_i|)\\
			&=h(|A_i\cap W^c|, |W^c|, |A_i\cap W^c|)-h(|A_i\cap W^c|-1, |W^c|, |A_i\cap W^c|).
			\end{align*}
			\thickmuskip=2\thickmuskip
            \medmuskip=2\medmuskip
			
			This shows that $s_h(A, W^c)\geq s_h(A, W^d)$ if and only if $s_{\bar h}(A, W^c)\geq s_{\bar h}(A, W^d)$ for all profiles $A$, committees $W$, and candidates $c,d\in\mathcal{C}\setminus W$. Hence, $\bar h$ indeed induces $f$.\medskip
			
			\emph{Step 3: $\bar h$ is a step-dependent Thiele counting rule.}
			
			For proving this claim, we need to show that $\bar h(x,y)$ is non-negative, non-decreasing in $x$, and satisfies that for every $y\in \{1,\dots,m-1\}$ there is $x\in \{1,\dots, y\}$ such that $\bar h(x,y)>\bar h(x-1,y)$. We start by showing that $\bar h$ is non-decreasing in $x$. Note that this immediately implies that it is non-negative since we defined that $\bar h(0,y)=0$ for all $y\in \{1,\dots, m\}$. Hence, assume for contradiction that there is $x\in \{0,\dots, m-1\}$ and $y\in \{1,\dots, m\}$ such that $\bar h(x,y)>\bar h(x+1,y)$. Since $\bar h(x',y')=\bar h(y', y')$ for all $x', y'$ with $x\geq y'$, our assumption requires that $x<y$. Moreover, we suppose that $y<m$ because we can redefine $\bar h(x,m)=0$ for all $x$. The reason for this is that $\bar h(x,m)$ is only queried when $f$ needs to decide on a winning committee of size $m$, but there is only one such committee and thus the values of $\bar h$ do not matter. 
			
			Now, let $W_1, \dots, W_{y-1}$ denote an arbitrary sequence of committees with length $y-1$. By \Cref{lem:seqNI}, there is a profile $A$ such that $f(A, \ell)=W_{\ell}$ for all $\ell\in \{1,\dots,y-1\}$ and $f(A, y)=\{W_{y-1}\cup \{x\}\colon x\in\mathcal{C}\setminus W_{y-1}\}$. Furthermore, let $A'$ denote a profile in which a single voter approves $x$ candidates of $W_{y-1}$ and all candidates in $\mathcal{C}\setminus W_{y-1}$. Finally, define $A''$ as the profile derived from $A'$ by letting the single voter disapprove one candidate $c\in\mathcal{C}\setminus W_{y-1}$. 
			
			It is again not difficult to see that there is an integer $j$ such that $f(jA+A', \ell)=f(jA+A'', \ell)=\{W_\ell\}$ for all $\ell\in \{1,\dots, y-1\}$. On the other hand, we have that $s_{\bar h}(A, W_{y-1}^c)=s_{\bar h}(A, W_{y-1}^d)$ for all $c,d\in\mathcal{C}\setminus W_{y-1}$ since $f(A, y)=\{W_{y-1}\cup \{x\}\colon x\in\mathcal{C}\setminus W_{y-1}\}$. Also, it holds that $s_{\bar h}(A',W_{y-1}^c)=\bar h(x+1, y)=s_{\bar h}(A', W_{y-1}^d)$ for all $c,d\in\mathcal{C}\setminus W_{y-1}$. Finally, we have that $s_h(A'', W_{y-1}^c)=\bar h(x,y)$ and $s_h(A'', W_{y-1}^d)=\bar h(x+1,y)$ and thus, $s_h(A'', W_{y-1}^c)>s_h(A'', W_{y-1}^d)$ for all $d\in\mathcal{C}\setminus W_{y-1}^c$. Thus, we infer that $f(jA+A', y)=\{W_{y-1}\cup \{x\}\colon x\in\mathcal{C}\setminus W_{y-1}\}$ and $f(jA+A'',y)=\{W_{y-1}\cup \{c\}\}$. However, since $y<m$ and thus $|W_{y-1}|\leq m-2$, there is $d\in\mathcal{C}\setminus W_{y-1}^c$ such that $W_{y-1}\cup \{d\}\in f(jA+A', y)$. Because $c\not\in W_{y-1}\cup \{d\}$, independence of losers thus requires that $W_{y-1}\cup \{d\}\in f(jA+A'', y)$. This contradicts our previous insight, and thus, the assumption that $\bar h(x,y)>\bar h(x+1,y)$ must have been wrong. Hence, $\bar h$ is non-decreasing in $x$.
			
			Finally, we show that for every $y\in \{1,\dots, m-1\}$, there is $x\in \{1,\dots, y\}$ such that $h(x,y)>h(x-1,y)$. Assume there is $y$ such that this is not the case, which means that $h(x,y)=h(x',y)$ for all $x,x'\leq y$. Since for $|A_i\cap W|\leq |W|$ for all committees $W$ and ballots $A_i$, this means that all committees of size $y$ always receive the same score and, as there are at least two committees of size $y<m$, $f$ can therefore be not non-imposing. This contradicts our assumptions and we thus conclude that $f$ is a step-dependent sequential Thiele rule because it is induced by a step-dependent Thiele counting function $\bar h$.
		\end{proof}
		
	Finally, we prove that committee separability characterizes the sequential Thiele rules within the class of step-dependent sequential Thiele rules.
	
	\begin{lemma}\label{lem:StepSeqThieleRule}
		A step-dependent sequential Thiele rule is a sequential Thiele rule if and only if it is committee separable. 
	\end{lemma}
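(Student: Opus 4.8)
The plan is to prove both implications by working with the \emph{marginal differences} of the step-dependent Thiele counting function $h$. For a committee $W$ of size $y-1$ and candidates $c,d\in\mathcal{C}\setminus W$, a short computation gives
\[
s_h(A,W^c)-s_h(A,W^d)=\sum_{i\in N_A\colon c\in A_i}\delta_y(|A_i\cap W|)-\sum_{i\in N_A\colon d\in A_i}\delta_y(|A_i\cap W|),
\]
where $\delta_y(x)=h(x+1,y)-h(x,y)$ for $x\in\{0,\dots,y-1\}$. Hence the behaviour of the rule at the step producing committees of size $y$ is governed entirely by the vector $\delta_y=(\delta_y(0),\dots,\delta_y(y-1))$, and two such vectors induce the same step exactly when they agree up to a positive scalar (adding a constant is not allowed, since candidates can have different approval counts). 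A sequential Thiele rule is precisely one whose step is governed by the truncations of a single size-independent vector $\bar\delta$, so the whole statement reduces to showing that committee separability forces, for every $y$, the restriction of $\delta_{y+1}$ to $\{0,\dots,y-1\}$ to be a positive multiple of $\delta_y$.

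For the easy direction (sequential Thiele $\Rightarrow$ committee separable) I would take disjoint $A,B$ with $C_B=\mathcal{C}\setminus C_A$, a winner $W\in f(A+B,|W|)$, and a sequential run $\emptyset=W_0\subsetneq\dots\subsetneq W_{|W|}=W$ realising it. Because a Thiele counting function ignores the committee size and every $A$-voter approves only candidates in $C_A$, the marginal gain of a candidate $c\in C_A$ relative to $W_t$ in $A+B$ equals its marginal gain relative to $W_t\cap C_A$ in $A$ (the $B$-voters contribute nothing to $c$, and the intersection sizes are unchanged). Restricting the run to its $C_A$-steps therefore yields a valid sequential run for $A$ ending in $W\cap C_A$, so $W\cap C_A\in f(A,|W\cap C_A|)$, and symmetrically for $C_B$. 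The point to stress is that this argument collapses precisely when $h$ depends on $|W|$, which is what makes the converse non-trivial.

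For the hard direction I would prove the reduced claim by induction on $y$, simultaneously carrying along that $\delta_j(0)>0$ for all $j\le y$ (the base case $\delta_1(0)>0$ is the non-imposition condition on $h$ at $y=1$, and the inductive step gives $\delta_j(0)=\lambda_{j-1}\delta_{j-1}(0)>0$). The key device is to set $C_A=\{a\}$ with $A$ consisting of many voters approving only $a$, so that $a$ is forced into every winning committee; a winner of total size $y+1$ in $A+B$ then has the form $\{a\}\cup Y$ with $Y\subseteq C_B$ of size $y$, and committee separability yields $Y\in f(B,y)$. Comparing the two processes, the final $C_B$-step producing $Y$ uses $\delta_{y+1}$ inside $A+B$ (total size $y+1$) but $\delta_y$ inside $B$ (total size $y$), while the relevant intersection levels with the size-$(y-1)$ prefix range over exactly $\{0,\dots,y-1\}$. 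Assuming for contradiction that $\delta_{y+1}|_{\{0,\dots,y-1\}}$ is not a positive multiple of $\delta_y$, a linear-algebra argument (both vectors are non-negative, so the only excluded alignment is a strictly positive multiple) produces an integer count-difference vector $q$ with $\delta_y\cdot q>0$ and $\delta_{y+1}\cdot q\le 0$, realisable as the difference of two blocks of ballots of the form $\{c\}\cup S$ and $\{d\}\cup S$ with $S\subseteq Y_{y-1}$. I would then assemble one profile $B$ from: a staircase of huge single-candidate blocks forcing the prefix $Y_{y-1}$ (robust because it only needs $\delta_\cdot(0)>0$ and hence works under the size shift in both $B$ and $A+B$); an equalising block of voters approving $\{c,d\}$ together with $x^\ast$ members of $Y_{y-1}$ for a level with $\delta_{y+1}(x^\ast)>0$, which boosts $c$ and $d$ equally and lifts them above all remaining candidates in $A+B$; the discrepancy block realising $q$; and negligible filler ensuring $C_B=\mathcal{C}\setminus\{a\}$. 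Using \Cref{lem:seqCon} to scale, both prefixes are forced; at the last step $c$ strictly beats $d$ in $B$ (so $Y_{y-1}\cup\{d\}\notin f(B,y)$) while $d$ is a maximiser in $A+B$ (so $\{a\}\cup Y_{y-1}\cup\{d\}\in f(A+B,y+1)$), and committee separability forces $Y_{y-1}\cup\{d\}\in f(B,y)$, a contradiction. The degenerate case $\delta_{y+1}|_{\{0,\dots,y-1\}}\equiv 0\neq\delta_y$ falls out of the same construction, since then every $C_B$-extension ties in $A+B$.

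Finally, from the relations $\delta_{y+1}(x)=\lambda_y\delta_y(x)$ I would set $\kappa_1=1$, $\kappa_{y+1}=\kappa_y/\lambda_y$ and $\bar\delta(x)=\kappa_{x+1}\delta_{x+1}(x)$, which is well-defined because $\kappa_y\delta_y(x)$ is independent of $y$ on the overlap; then $\bar h(0)=0$ and $\bar h(x)=\sum_{j<x}\bar\delta(j)$ is a Thiele counting function (non-negative and non-decreasing, as each $\delta_y\ge 0$ and $\kappa_y>0$, with $\bar h(1)>\bar h(0)$ from $\bar\delta(0)=\delta_1(0)>0$), and since each $\delta_y$ is a positive multiple of $\bar\delta|_{\{0,\dots,y-1\}}$ every step of $f$ coincides with the corresponding step of the sequential Thiele rule induced by $\bar h$. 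I expect the main obstacle to be the induction step: engineering a single profile in which the size-$(y-1)$ prefix is simultaneously forced at two different committee sizes—despite the weights shifting from $\delta_y$ to $\delta_{y+1}$—while a controlled tie-breaking discrepancy survives at the final step, and verifying that the equalising and filler blocks never perturb the $c$-versus-$d$ comparison.
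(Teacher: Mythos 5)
Your proposal is correct, and its engine is the same one that drives the paper's proof: pad the election with a disjoint profile so that the same step of the rule is evaluated at two different committee sizes, force a common prefix using \Cref{lem:seqNI} and \Cref{lem:seqCon}, and let committee separability transfer the winner of the padded election back to the unpadded one, thereby constraining the marginals $h(x+1,y)-h(x,y)$ across different $y$. The organization, however, is genuinely different. The paper first normalizes: it proves $h(1,y)>h(0,y)$ for every $y$ (its Step~1, a committee-separability construction that closely mirrors your base case and degenerate case), rescales so that $h'(0,y)=0$ and $h'(1,y)=1$, and then establishes the entry-wise identity $h'(x,y)-h'(x-1,y)=h'(x,x)-h'(x-1,x)$ by padding with the multi-candidate block $\hat W=W\setminus A_i$, which jumps directly from size $x$ to size $y$; thanks to the normalization it only ever compares two scalars $\Delta_1,\Delta_2$ and can sandwich them by hand with integers $j_1\Delta_1>j_2>j_1\Delta_2$. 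You instead keep the per-step scalar freedom, pad with a single candidate $a$, and induct on consecutive sizes, showing $\delta_{y+1}|_{\{0,\dots,y-1\}}$ is a positive multiple of $\delta_y$; the price is that you must (i) separate whole vectors rather than scalars (your Farkas-type vector $q$, realized as ballot-count differences), (ii) carry $\delta_y(0)>0$ through the induction, and (iii) do the $\kappa_y$ rescaling bookkeeping at the end, while the gain is that the padding profile is as simple as possible and the separate normalization step dissolves into the induction. Two details you should make explicit when writing this up: the region $\{q\colon \delta_y\cdot q>0\geq \delta_{y+1}\cdot q\}$ has nonempty interior whenever the two non-negative vectors are not positively proportional (perturb a boundary solution by $-\epsilon\,\delta_{y+1}$ before passing to rational and then integer $q$), and the construction needs $y\leq m-2$ so that two candidates $c,d\in C_B\setminus Y_{y-1}$ exist---which is exactly the range of sizes for which the proportionality claim is needed, since the step to size $m$ is vacuous.
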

	\begin{proof}
		For proving this lemma, we need to show two claims: every sequential Thiele rule is a step-dependent sequential Thiele rule that satisfies committee separability, and every step-dependent sequential Thiele rule that satisfies committee separability is a sequential Thiele rule.\medskip
		
		\textbf{Claim 1: Every sequential Thiele rule is a step-dependent sequential Thiele rule that satisfies committee separability.}
		
		Let $f$ denote a sequential Thiele rule and let $h(x)$ denote its Thiele counting rule.
		Clearly, $f$ is a step-dependent sequential Thiele rule as we can define the step-dependent Thiele counting function $h'(x,y)=h(x)$ for all $y\in \{1,\dots, m\}$. 
		
		Hence, it remains to show that $f$ is committee separable. For doing so, recall that $C_A=\bigcup_{i\in N_A} A_i$ and $C_B=\bigcup_{i\in N_B} B_i$ and let $A$ and $B$ denote two disjoint profiles with $C_A=\mathcal{C}\setminus C_B$. We will show by an induction on the committee size $k\in \{0,\dots, m\}$ that for every $W\in f(A+B,k)$, it holds that $W\cap C_A\in f(A, |W\cap C_A|)$ and $W\cap C_B\in f(B, |W\cap C_B|)$. The induction basis $k=0$ is trivial since $f(A+B,0)=f(A,0)=f(B,0)=\{\emptyset\}$. 
		
		Hence, suppose that our claim is true for a fixed $k\in \{0,\dots, m-1\}$ and let $W\in f(A+B,k+1)$. By the definition of sequential Thiele rules, there is a committee $W'\in f(A+B,k)$ and a candidate $c\in \mathcal{C}\setminus W'$ such that $W=W'\cup \{c\}$ and $s_h(A+B, W'\cup \{c\})\geq s_h(A+B, W'\cup\{x\})$ for every $x\in \mathcal{C}\setminus W'$. Subsequently, we suppose that $c\in C_A$; the case that $c\in C_B$ is symmetric. Now, our induction hypothesis shows that $W'\cap C_A\in f(A, |W'\cap C_A|)$ and $W'\cap C_B\in f(B, |W'\cap C_B|)$. Our goal is to prove that $W\cap C_A=(W'\cup \{c\})\cap C_A\in f(A, |W\cap C_A|)$. Since $W\cap C_B=W'\cap C_B\in f(B, |W\cap C_B|)$, this proves our claim. 
		Hence, assume for contradiction that $W\cap C_A\not\in f(A, |W\cap C_A|)$, which means that there is $d\in \mathcal{C}\setminus (W\cap C_A)$ such that $s_h(A, (W'\cap C_A)\cup \{d\})>s_h(A, W\cap C_A)$. In particular, this means that $d\in C_A$ because $h$ is non-decreasing and thus $s_h(A, W\cap C_A)\geq s_h(A, W'\cap C_A)=s_h(A,(W'\cap C_A)\cup \{x\})$ if $x\not\in C_A$. In turn, this implies that $s_h(B, W'\cup \{c\})=s_h(B, W'\cap C_B)=s_h(B, W'\cup \{d\})$ because no voter in $B$ approves $c$ or $d$. However, we thus infer that $s_h(A+B, W'\cup \{d\})>s_h(A+B, W'\cup \{c\})$ since
		\allowdisplaybreaks
		\begin{align*}
		    s_h(A+B, W'\cup \{d\})&=s_h(A, W'\cup\{d\})+s_h(B, W'\cup \{d\})\\
		    &=s_h(A, W'\cap C_A\cup \{d\})+s_h(B, W'\cap C_B)\\
		    &>s_h(A, W'\cap C_A \{c\})+s_h(B, W'\cap C_B)\\
		    &=s_h(A, W'\cup \{c\})+s_h(B, W'\cup \{c\})\\
		    &=s_h(A+B, W'\cup\{c\}). 
		\end{align*}
		
		This contradicts our assumptions on $W'$ and $c$. Hence, $W_A=W_A'\cup \{c\}\in f(A, |W_A|)$, $W_B=W_B'\in f(B, |W_B|)$, $W=W_A\cup W_B$, and $W_A\cap W_B=\emptyset$, which proves the induction step for this case.
		\medskip
		
		\textbf{Claim 2: Every step-dependent sequential Thiele rule that satisfies committee separability is a sequential Thiele rule.}
		
		Let $f$ denote a step-dependent sequential Thiele rule that satisfies committee separability and let $h(x,y)$ denote its step-dependent Thiele counting function.
		For proving that $f$ is a sequential Thiele rule, we proceed in several steps: firstly, we will derive another step-dependent Thiele counting function $h'$ such that $h'(0,y)=0$ and $h'(1,y)=1$ for all $y\in\{1,\dots,m-1\}$ that also induces $f$. Based on $h'$, we will then show that the function $\bar h$ with $\bar h(0)=0$ and $\bar h(x)=\bar h(x-1)+h'(x,x)-h'(x-1,x)$ for $x>0$ also induces $f$. Finally, we prove that $\bar h$ is a Thiele counting function, which proves that $f$ is a sequential Thiele rule.\medskip
		
		\emph{Step 1:} First, we derive a step-dependent Thiele counting function $h'$ such that $h'(0,y)=0$ and $h'(1,y)=1$ for all $y$ that induces $f$. For this, we first prove that $h(1,y)> h(0,y)$ for all $y\in \{1,\dots, m-1\}$. Thus, note that $h(1,1)>h(0,1)$ due to the assumption that for every $y\in \{1,\dots, m-1\}$, there is $x\in \{1,\dots. y\}$ such that $h(x,y)>h(x-1,y)$. Next, consider an arbitrary $k\in \{1,\dots, m-2\}$, let $W_1,\dots, W_k$ denote a sequence of committees and let $A$ denote a profile such that $f(A,\ell)=\{W_\ell\}$ for all $\ell\in \{1,\dots, k\}$; such a profile exists due to \Cref{lem:seqNI}. Now, consider the profile $A'$ derived from $A$ by assigning all voters the ballot $A_i\cap W_k$. It still holds that $f(A',\ell)=\{W_{\ell}\}$ for all $\ell\in \{1,\dots,k\}$ because $h(x,y)$ is non-decreasing in $x$ and independent of the size of the ballot. In more detail, it holds for every ballot $A_i$ and committees $W,W'$ with $|W|=|W'|$ and $W\subseteq W_k$ that $h(|A_i\cap W_k\cap W|, |W|)=h(|A_i\cap W|, |W|)$ and $h(|A_i\cap W_k\cap W'|, |W'|)\leq h(|A_i\cap W'|, |W'|)$. Hence, if a subset of $W_k$ has maximal score in $A$, it also has maximal score in $A'$. Moreover, note that all candidates in $\mathcal{C}\setminus W_k$ are disapproved by all voters in $A'$ and thus, neutrality requires that $f(A',k+1)=\{W_k^x\colon x\in\mathcal{C}\setminus W_k\}$.
		 
		Next, let $B$ denote the profile in which a candidate $c\in \mathcal{C}\setminus W_k$ is uniquely approved by two voters and every other candidate $d\in\mathcal{C}\setminus W_k$, $d\neq c$, is approved by by a single voter. It is straightforward that $f(B, 1)=\{\{c\}\}$. Finally, \Cref{lem:seqCon} shows that there is an integer $j$ such that $f(jA'+B, k)=\{W_k\}$. Thus, committee monotonicity requires that $f(jA'+B, k+1)\subseteq \{W_k\cup \{x\}\colon x\in\mathcal{C}\setminus W_k\}$. In turn, committee separability requires for every $W\in f(jA'+B,k+1)$ that $W\setminus W_k\in f(B, |W\setminus W_k|)=f(B,1)$. Since $f(B,1)=\{\{c\}\}$, this means that $f(jA'+B,k+1)=\{W_k\cup \{c\}\}$. We infer from this now that $h(1,k+1)>h(0,k+1)$ as otherwise $h(1,k+1)=h(0,k+1)$ and $f(jA'+B, k+1)=\{W_k\cup \{x\}\colon x\in\mathcal{C}\setminus W_k\}$.
		 
		It is now easy to see that the function $h'$ defined by $h'(x,y)=\frac{h(x,y)-h(0,y)}{h(1,y)-h(0,y)}$ is a step-dependent Thiele counting function with $h'(0,y)=0$ and $h'(1,y)=1$. In particular, it immediately follows that $s_h(A, W^c)\geq s_h(A, W^d)$ if and only if $s_{h'}(A,W^c)\geq s_{h'}(A, W^d)$ for all profiles $A$, committees $W$, and candidates $c,d\in\mathcal{C}\setminus W$. Thus, $h'$ also induces $f$. 
		
		\emph{Step 2}: Our next goal is to show that the function $\bar h(x)$ defined by $\bar h(0)=0$ and $\bar h(x)=\bar h(x-1)+h'(x,x)+h'(x-1,x)$ induces $f$. For doing so, we show that $s_{h'}(A, W^c)\geq s_{h'}(A, W^d)$ if and only if $s_{\bar h}(A, W^c)\geq s_{\bar h}(A, W^d)$ for all profiles $A$, a committees $W$ and candidates $c,d\in\mathcal{C}\setminus W$. This is equivalent to proving that $h'(|A_i\cap W^c|, |W^c|)-h'(|A_i\cap W^c|, |W^d|)= \bar h(|A_i\cap W^c|)-\bar h(|A_i\cap W^d|)$ for all committees $W$, ballots $A_i$, and candidates $c,d\in\mathcal{C}\setminus W$. Now, first observe that if $|W^c\cap A_i|=|W^d\cap A_i|$, then 
		\begin{align*}
			&h'(|A_i\cap W^c|, |W^c|)-h'(|A_i\cap W^d|, |W^d|)\\
			&=\bar h(|A_i\cap W^c|)-\bar h(|A_i\cap W^d|)=0.
		\end{align*}
		
		Hence, our equality holds in this case, and we subsequently suppose that $c\in A_i$, $d\not\in A_i$; the case that $d\in A_i$, $c\not\in A_i$ is symmetric. This assumption means that $|W^d\cap A_i|=|W^c\cap A_i|-1$ and we thus need to show that
		\begin{align*}
			&h'(|A_i\cap  W^c|, |W^c|)-h'(|A_i\cap W^c|-1, |W^c|)\\
			&= \bar h(|A_i\cap W^c|)-\bar h(|A_i\cap W^c|-1)\\
			&= h'(|A_i\cap W^c|, |A_i\cap W^c|)- h'(|A_i\cap W^c|-1, |A_i\cap W^c|).
		\end{align*}
		
		Now, assume for contradiction that this equality is not true, i.e., $h'(|A_i\cap W^c|, |W^c|)-h'(|A_i\cap W^c|-1, |W^c|)\neq h'(|A_i'\cap W^c|, |A_i\cap W^c|)-h(|A_i'\cap W^c|-1, |A_i\cap W^c|)$ for some ballot $A_i$, committee $W$ with $|W|\leq m-2$, and $c\in\mathcal{C}\setminus W$. For a simpler notation, we define $\Delta_1=h'(|A_i\cap W^c|, |W^c|)-h'(|A_i\cap W^c|-1, |W^c|)$ and $\Delta_2=h'(|A_i'\cap W^c|, |A_i\cap W^c|)-h(|A_i'\cap W^c|-1, |A_i\cap W^c|)$. Now, since $\Delta_1\neq \Delta_2$, either $\Delta_1>\Delta_2$ or $\Delta_1<\Delta_2$. We first suppose the former, which implies that there are integers $j_1$, $j_2$ such that $j_1\Delta_1>j_2>j_1\Delta_2$ and $j_2>1$. Note here also that $\Delta_1\geq 0$, $\Delta_2\geq 0$ since $h'$ is non-decreasing in the first argument. 
		
        For deriving a contradiction, we construct a profile $A^*$ in multiple steps. First, let $\bar W=A_i\cap W$ and let $\bar W_1, \dots, \bar W_\ell$ denote a sequence of committees such that $\bar W_\ell=\bar W$. By \Cref{lem:seqNI}, there is a profile $A^1$ such that $f(A^1, k)=\{\bar W_k\}$ for all $k\in \{1,\dots, \ell\}$ and $f(A^1, \ell+1)=\{\bar W_{\ell}\cup \{x\}\colon x\in \mathcal{C}\setminus 
        \bar W_\ell\}$. Also, we suppose that no voter in $A^1$ approves a candidate in $\mathcal{C}\setminus \bar W$; we can enforce this as $h$ is non-decreasing in the first argument and independent of the size of the ballots.
        
        Furthermore, let $A^2$ denote the profile such that $j_1$ voters report $\bar W\cup \{c\}$, $j_2$ voters report $\{d\}$ for some candidate $d\in\mathcal{C}\setminus W^c$, and for each candidate $x\in\mathcal{C}\setminus (W\cup \{c,d\})$, there is a single voter who uniquely approves $x$. Now, by continuity, there is an integer $j$ such that $(jA^1+A^2,\ell)=\{\bar W_\ell\}$. On the other hand, $s_{h'}(A^1, \bar W_\ell^x)=s_{h'}(A^1, \bar W_\ell^y)$ for all $x,y\in\mathcal{C}\setminus \bar W_\ell$ since $f(A^1, k+1)=\{\bar W_\ell^x\colon x\in\mathcal{C}\setminus \bar W_\ell\}$. It thus follows that $f(jA^1+A^2, \ell+1)=\{\bar W\cup \{d\}\}$ since $s_{h'}(A^2, \bar W^d)=j_1 h'(|A_i\cap W^c|-1, |A_i\cap W^c|)+j_2>j_1 h'(|A_i\cap W^c|-1, |A_i\cap W^c|)+j_1\Delta_2=s_{h'}(A, \bar W^c)$. 
        
        Next, let $\hat W=W\setminus A_i$ and let $\hat W_1, \dots \hat W_{\ell'}$ denote a sequence of committees such that $\hat W=\hat W_{\ell'}$. By \Cref{lem:seqNI}, we can again construct a profile $A^3$ such that $f(A^3, k)=\{\hat W_k\}$ for all $k\in \{1,\dots, \ell'\}$ and $f(A^3, \ell'+1)=\{\hat W_{\ell'}\cup\{x\}\colon x\in \mathcal{C}\setminus \hat W_{\ell'}\}$. Moreover, we suppose that the voters in $A^3$ only approve candidates in $\hat W$ as we can push down all other candidates without affecting the outcomes. Now, by continuity, we can find an integer $j'$ such that $f(j'A^3+jA^1+A^2,\ell')=\{\hat W\}$. Our goal is to show that $f$ fails committee separability for the profile $A^*=j'A^3+jA^1+A^2$. For this, note that committee monotonicity and committee separability imply that $f(A^*, \ell'+k)=\{\hat W\cup \bar W_k\}$ for all $k\in \{1,\dots, \ell\}$ since $f(jA^1+A^2, k)=\{W_k\}$. 
        These axioms also imply that $f(A^*, |W^c|)=\{\hat W\cup \bar W\cup \{d\}\}=\{W^d\}$. However, it holds that $s_{h'}(A^*, W^c)> s_{h'}(A^*, W^d)$ because $s_{h'}(A^3, W^d)=s_{h'}(A^3, W^c)$, $s_{h'}(A^1, W^d)=s_{h'}(A^1, W^c)$, and $s_{h'}(A^2, W^c)>s_{h'}(A^2, W^d)$. In particular, the last claim holds since $s_{h'}(A^2, W^c)=j_1 h'(|A_i\cap W|, |W^c|)+j_1 \Delta_1>j_1 h'(|A_i\cap W|, |W^c|)+j_2 =s_{h'}(A^2, W^d)$. Hence, $f(A^*, |W^c|)\neq \{W^d\}$, which shows that $f$ fails committee separability.
		
		As last point, note that this construction also works if $\Delta_1<\Delta_2$; then, we choose $j_1$ and $j_2$ such that $j_1\Delta_1<j_2<j_1\Delta_2$. As a consequence, $f(jA^1+A^2, \ell+1)=\{\bar W^c\}$ and $f(A^*, |W^c|)=\{W^d\}$, which also violates committee separability. Finally, this implies that our initial assumption that $h'(|A_i\cap  W^c|, |W^c|)-h'(|A_i\cap W^c|-1, |W^c|)\neq \bar h(|A_i\cap W^c|)-\bar h(|A_i\cap W^c|-1)$ is wrong. We derive from this that $\bar h$ indeed induces $f$.\medskip
		
		\emph{Step 3}: Finally, we need to show that $\bar h$ is a Thiele counting function. For doing so, note first that $\bar h(x)$ is non-decreasing since $h'(x,y)$ is non-decreasing in the first argument. In more detail, we have that $\bar h(x)-\bar h(x-1)=h'(x,x)-h'(x-1,x)\geq 0$ for every $x\in \{1,\dots, m\}$. Since $\bar h(x)=0$ by definition, this also means that $\bar h$ is non-negative. Finally, observe that $\bar h(1)-\bar h(0)=h'(1,1)-h'(0,1)>0$, so $\bar h$ indeed satisfies all conditions. 
	\end{proof}

\subsection{Proofs of \Cref{thm:seqAV,thm:seqPAV}}

Finally, we discuss the proofs of \Cref{thm:seqAV} and \Cref{thm:seqPAV} in more detail.

\seqAV*
	\begin{proof}
		It is not hard to show that \texttt{seqAV} is clone-accepting and distrusting. For doing so, define the approval score $s_{AV}(A,x)=|\{i\in N_A\colon x\in A_i\}|$ of a candidate $x$ in a profile $A$ as the number of voters who approve $x$ in $A$. The key observation for showing that \texttt{seqAV} satisfies the given axioms is that this rule simply chooses the candidates in order of their approval scores. For proving this, let $h(x)=x$ denote the Thiele counting function of \texttt{seqAV} and note that for every profile $A$, committee $W$, and candidate $c\in\mathcal{C}\setminus W$, it holds that $s_h(A, W^c)=\sum_{i\in N_A} h(|A_i\cap W^c|)=\sum_{i\in N_A} |A_i\cap W^c|=s_h(A,W)+s_{AV}(A,c)$. Hence, if $s_{AV}(A,c)\geq s_{AV}(A,d)$, then $s_h(A,W^c)\geq s_h(A,W^d)$. 
	
		Clearly, this insight implies that \texttt{seqAV} is distrusting: if more voters uniquely approve $c$ than there are voters that approve $b$ among other candidates, it hold that $s_{AV}(A,c)>s_{AV}(A,b)$. Hence, the previous argument shows that $s_h(A,W\cup \{c\})>s_h(A,W\cup\{d\})$ and thus $b$ cannot be chosen before $c$. For showing that \texttt{seqAV} satisfies clone-acceptance, consider a profile $A$ with clones $c,d$ anda committee $W\subseteq \mathcal{C}\setminus \{c,d\}$ such that $W^c$ is chosen by \texttt{seqAV} for the committee size $|W^c|$. Since $c\in W^c$ and \texttt{seqAV} chooses the candidates in order of their approval scores, it follows that $s_{AV}(A,c)\geq s_{AV}(A,x)$ for all $x\in\mathcal{C}\setminus W^c$. On the other hand, we have that $s_{AV}(A,c)=s_{AV}(A,d)$ and thus, $d$ has the maximal approval scores among all unchosen candidates. Therefore, it follows from our previous insights that $s_h(A, W\cup \{c,d\})\geq s_h(A, W\cup \{c,x\})$ for every $x\in\mathcal{C}\setminus W^c$. Consequently, $W\cup \{c,d\}$ is chosen for the committee size $|W\cup \{c,d\}|$ and \texttt{seqAV} is clone-accepting.
		
	    For the other direction, let $f$ denote a sequential Thiele rule that is not \texttt{seqAV} and let $h$ denote its Thiele counting function. Since the sequential Thiele function is invariant under scaling and shifting $h$, we suppose that $h(0)=0$ and $h(1)=1$. Because $f$ is not \texttt{seqAV}, there is an integer $k\in \{2,\dots, m-1\}$ such that $h(k)\neq k$ but $h(j)=j$ for all $j< k$. 
	
		First, we suppose that $h(k)>k$ and derive that $f$ is not distrusting. For doing so, let $\Delta=h(k)-k$ and let $\ell$ be an integer such that $\ell\Delta>1$. Now, let $W$ denote a committee of $k-1\leq m-2$ candidates, let $c,d\in\mathcal{C}\setminus W$ denote two candidates, and consider the following profile $A$: $\ell$ voters report $W\cup \{c\}$, $\ell+1$ voters report $\{d\}$, and two voters report $W$. Since $f$ agrees up to committees of size $k-1$ with sequential approval voting, we have that $f(A,k-1)=\{W\}$. Moreover, $W^d$ obtains a score of $(k-1)\cdot (\ell+2)+\ell+1$, whereas $W^c$ obtains a score of $(k-1)\cdot (\ell+2) + (h(k)-(k-1))\ell>(k-1)\cdot (\ell+2)+\ell+1$. Hence, we have that $f(A,k)=\{W^c\}$, which violates distrust since more voters report $\{d\}$ than there are voters who approve $c$. 
	
		As second case, we suppose that $h(k)<k$ and show that $f$ is not clone-accepting. We define $\Delta=k-h(k)$ and choose $\ell\in\mathbb{N}$ again such that $\ell\Delta>1$. Moreover, let $W$ denote a committee of size $k-2\leq m-3$, let $b,c,d\in\mathcal{C}\setminus W$ denote candidates outside of $W$, and consider the following profile $A$: $\ell$ voters report $W\cup \{c,d\}$ and $\ell-1$ voters report $\{b\}$. Since $f$ agrees with \texttt{seqAV} in the first $k-1$ steps, we have that $f(A,k-1)=\{W\cup \{c\}, W\cup \{d\}\}$. Furthermore, $s_h(A,W\cup \{b,c\})=s_h(A,W\cup \{b,d\})=(k-1)\cdot \ell + \ell-1$ and $s_h(A,W\cup \{c,d\})=(k-1)\cdot\ell+ (h(k)-(k-1))\cdot \ell<(k-1)\cdot \ell + \ell-1$. This implies that $W\cup\{c,d\} \notin f(A,k)$, which shows that $f$ is not clone-accepting. Hence, we must have that $h(k)=k$ for all $k\in \{0,\dots,m-1\}$ and $f$ is therefore \texttt{seqAV}.
	\end{proof}

\seqPAV*
	\begin{proof}
		First, it is not difficult to see that \texttt{seqPAV} is clone-proportional. For this, consider a profile $A$ in which $n_1$ voters report a common ballot $A_1=\{c_1,\dots, c_\ell\}$ and $n_2$ voters approve a single candidate $c\not\in A_1$. Finally, let $k\leq \ell$ denote the target committee size and $h$ the Thiele counting function of \texttt{seqPAV}. Now, if $\frac{n_1}{k}>n_2$, then it holds for every $j\in \{1,\dots,k\}$ that 
		\begin{align*}
			&s_{h}(A, \{c_1, \dots, c_j\})-s_{h}(A, \{c_1, \dots, c_{j-1}\})=\frac{n_1}{j}\geq \frac{n_1}{k}\\
			&>n_2=s_{h}(A, \{c_1, \dots, c_{j-1}, c\})-s_{h}(A, \{c_1, \dots, c_{j-1}\}).
		\end{align*}
		
		This equation shows that candidate $c$ is not chosen and thus, clone-proportionality is in this case satisfied. Moreover, the case that $\frac{n_1}{k}<n_2$ follows by an analogous argument as there must be a round $j\leq k$ such that $\frac{n_1}{j}< n_2$. Hence, \texttt{seqPAV} is indeed clone-proportional. 
		
		For the other direction, let $f$ denote a sequential Thiele rule other than \texttt{seqPAV} and let $h$ denote its corresponding Thiele counting function. Again, we assume without loss of generality that $h(0)=0$ and $h(1)=1$. Because $f$ is not \texttt{seqPAV}, there is an integer $k\in\{2,\dots,m-1\}$ such that $h(k)\neq \sum_{i=1}^k \frac{1}{i}$. Moreover, we suppose that $k$ is the minimal such integer, i.e., $h(j)=\sum_{i=1}^j \frac{1}{i}$ for all $j<k$. 
		
		We proceed with a case distinction and first assume that $h(k)> \sum_{i=1}^k \frac{1}{i}$. For this case, let $\Delta=h(k)-\sum_{i=1}^k \frac{1}{i}$ and $\ell\in\mathbb{N}$ such that $\ell k\cdot \Delta>1$ and $\ell >k-1$. Now, consider the profile $A$ in which $k\ell$ voters report $\{c_1,\dots, c_k\}$ and $\ell+1$ voters report $c$. For every $j<k$, it holds that $\frac{k\ell}{j}=\ell+\frac{(k-j)\ell}j>\ell+1=n_2$. Since \texttt{seqPAV} and $f$ agree on the first $k-1$ steps, this means that $f(A,k-1)$ contains all subsets of size $k-1$ of $\{c_1,\dots, c_k\}$. For choosing the $k$-th candidate, note that $s_h(A, \{c_1,\dots, c_k\})=s_h(A, \{c_1,\dots, c_{k-1}\})+\ell k (h(k)-h(k-1))=s_h(A, \{c_1,\dots, c_{k-1}\})+\ell k (\Delta +\frac{1}{k})>s_h(A, \{c_1,\dots, c_{k-1}\})+\ell +1$. On the other hand, $s_h(A, \{c_1,\dots, c_{k-1},c\})=s_h(A, \{c_1,\dots, c_{k-1}\})+\ell +1$. Thus, $f(A,k)=\{\{c_1,\dots, c_k\}\}$. However, this violates clone-proportionality since $\frac{\ell k}{k}<\ell +1$ and thus, $c$ needs to be elected. 
		
		For the second case, assume that $h(k)< \sum_{i=1}^k \frac{1}{i}$ and define $\Delta=\sum_{i=1}^k \frac{1}{i}-h(k)$. Moreover, let $\ell\in\mathbb{N}$ such that $\ell k\cdot \Delta>1$ and $\ell >k-1$. In this case, we consider the profile $A$ in which $k\ell+1$ voters report $\{c_1,\dots, c_k\}$ and $\ell$ voters report $c$. By an analogous argument as in the last case, we infer that $f(A,k-1)$ consists of all subsets of size $k-1$ of $\{c_1, \dots, c_k\}$. By computing the scores of $\{c_1, \dots, c_k\}$ and $\{c_1, \dots, c_{k-1},c\}$, we infer that 
		\begin{align*}
			s_h(A, &\{c_1, \dots, c_k\})\\
			&=s_h(A, \{c_1, \dots, c_{k-1}\})+(\ell k+1)(h(k)-h(k-1))\\
			&=s_h(A, \{c_1, \dots, c_{k-1}\})+(\ell k+1)(\frac{1}{k}-\Delta)\\
			&<s_h(A, \{c_1, \dots, c_{k-1}\})+\ell\\
			&=s_h(A, \{c_1, \dots, c_{k-1},c\}).
		\end{align*}
		
		As desired, this means that every committee in $f(A,k)$ contains $c$, which violates clone-proportionality since $\frac{\ell k+1}{k}>\ell$.
	\end{proof}
\end{document}